\def\munderbar#1{\underline{\sbox\tw@{$#1$}\dp\tw@\z@\box\tw@}}
\newtheorem{theorem}{Theorem}[section]
\newtheorem{definition}[theorem]{Definition}
\newtheorem{lemma}[theorem]{Lemma}
\newtheorem{remark}[theorem]{Remark}
\newtheorem{corollary}[theorem]{Corollary}
\newtheorem{proposition}[theorem]{Proposition}
\newtheorem{assumption}[theorem]{Assumption}
\newcommand{\muxo}{\mu_{0}}
\newcommand{\be}{\begin{equation}}
\newcommand{\ee}{\end{equation}}
\newcommand{\bea}{\begin{equation*}\begin{aligned}}
\newcommand{\eea}{\end{aligned}\end{equation*}}
\newcommand{\ds}{\displaystyle}
\newcommand{\R}{\mathbb{R}}
\newcommand{\Min}{\min\limits_}
\newcommand{\Sup}{\sup\limits_}
\newcommand{\Inf}{\inf\limits_}
\newcommand{\wh}{\widehat}
\newcommand{\mc}{\mathcal}
\newcommand{\mbb}{\mathbb}
\newcommand{\PP}{\mbb P}
\newcommand{\Pnom}{\wh{\mbb P}}
\newcommand{\QQ}{\mbb Q}
\newcommand{\DD}{\mbb D}
\newcommand{\dd}{\mathrm{d}}
\DeclareMathOperator{\st}{s.t.}
\newcommand{\PSD}{\mathbb{S}_{+}} 
\newcommand{\Let}{\triangleq}
\newcommand{\opt}{^\star}
\newcommand{\eps}{\varepsilon}
\newcommand{\BB}{\mbb B}
\newcommand{\Wass}{\mathds{W}}
\newcommand{\EE}{\mathds{E}}
\newcommand{\half}{\frac{1}{2}}
\newcommand{\modified}[1]{{\color{black} #1}}
\title[Robustifying Conditional Portfolio Decisions via Optimal Transport]{Robustifying Conditional Portfolio Decisions \\
via Optimal Transport}
\begin{document}

\doublespacing
\date{\today}
\begin{abstract}
    We propose a data-driven portfolio selection model that integrates side information, conditional estimation, and robustness using the framework of distributionally robust optimization. Conditioning on the observed side information, the portfolio manager solves an allocation problem that minimizes the worst-case conditional risk-return trade-off, subject to all possible perturbations of the covariate-return probability distribution in an optimal transport ambiguity set. Despite the nonlinearity of the objective function in the probability measure, we show that the distributionally robust portfolio allocation with a side information problem can be reformulated as a finite-dimensional optimization problem. If portfolio decisions are made based on either the mean-variance or the mean-Conditional Value-at-Risk criterion, the reformulation can be further simplified to second-order or semi-definite cone programs. Empirical studies in the US equity market demonstrate the advantage of our integrative framework against other benchmarks.
\end{abstract}

\author{Viet Anh Nguyen, Fan Zhang, Shanshan Wang, Jos\'e Blanchet, Erick Delage, Yinyu Ye}
\thanks{The authors are with the Department of Systems Engineering and Engineering Management, the Chinese University of Hong Kong (\texttt{nguyen@se.cuhk.edu.hk}), the Department of Management Science and Engineering, Stanford University, United States (\texttt{fzh, jose.blanchet, yinyu-ye@stanford.edu}), and the GERAD and Department of Decision Sciences, HEC Montr\'{e}al, Canada ({\tt shshwangbit@gmail.com, erick.delage@hec.ca}). Correspondence should be sent to \tt shshwangbit@gmail.com}

\maketitle

\section{Introduction}
 
We study distributionally robust portfolio decision rules, which are informed by conditioning on observed side information.
In the presence of contextual information that might be relevant for predicting future returns,  our objective is to direct the statistical task (i.e., conditional estimation) towards the downstream task of selecting a portfolio that is robust to distributional shifts (including local conditioning). Our ultimate goal is to provide a tractable, non-parametric, data-driven approach that bypasses the need for computing global conditional expectations. At the same time, our framework is flexible to accommodate a wide range of portfolio optimization selection criteria, as well as extend to other diverse applications. In order to explain our methodology and contribution in detail, let us consider a conventional single-period portfolio optimization selection problem of the form 
\[
    \Min{\alpha\in\mc A}~\mc R_{\PP}[Y^{\top}\alpha] - \eta\cdot\EE_{\PP}[Y^{\top}\alpha],
\]
where the real-valued vector $\alpha$ denotes the portfolio allocation choice within a feasible region $\mc A$, and the random vector $Y$ denotes the assets' future return under a probability measure $\PP$. The return of the portfolio is denoted by $Y^\top \alpha$. At the same time, $\mc R_{\PP}[\cdot]$ captures the risk associated with $\alpha$, and $\eta \geq 0$ is a parameter that weights the preference between the portfolio return and the associated risk. 

Different choices of the measure of risk~$\mc R_{\PP}[\cdot]$ lead to different portfolio optimization models, including the mean-variance model~\cite{ref:markowitz1952portfolio}, the mean-standard deviation model~\cite{ref:konno1991mean}, the mean-Value-at-Risk model~\cite{ref:benati2007mixed}, and the mean-Conditional Value-at-Risk (CVaR) model~\cite{ref:rockafellar2000optimization}. The measure of risk~$\mc R_{\PP}[\cdot]$ often is chosen to belong to certain classes to induce desirable properties. For instance, the class of coherent risk measures \cite{ref:artzner1999coherent} or the class of convex risk measures \cite{ref:follmer2002convex} are examples of risk measure classes that are often used in portfolio optimization \cite{ref:luthi2005convex}. In our setting, we consider a family of risk measures that admits a stochastic optimization representation of the form
$\mc R_{\PP}[Y^{\top}\alpha] = \min_{\beta\in\mc B} \EE_{\PP}[r(Y^\top \alpha,\beta)]$ for some auxiliary function $r$ and finite-dimensional statistical variable $\beta$ residing within an appropriate feasible region $\mc B$. Hence, we focus on a generic portfolio allocation that minimizes an expected loss of the form
\begin{equation}
    \label{eq:stochastic-opt}
    \Min{\alpha \in \mc A,~\beta \in \mc B}~\EE_{\PP}[ \ell(Y, \alpha, \beta)],\qquad 
    \ell(Y, \alpha, \beta)\Let 
    r(Y^\top \alpha,\beta) - \eta\cdot Y^\top \alpha.
\end{equation}
In problem~\eqref{eq:stochastic-opt}, optimizing over $\beta$ characterizes the improvement of risk estimation, which is entangled with the procedure of finding the optimal portfolio allocation $\alpha$. For example, consider when $\mc R_{\PP}$ is the variance of the portfolio return, then by setting $\mc B = \R$ and choosing $\ell$ as a quadratic function, we can rewrite the variance in the form $\mc R_{\PP}[Y^\top \alpha] = \min_{\beta \in \R}~\EE_{\PP}[ (Y^\top \alpha - \beta)^2]$. This further implies that the optimal solution of $\beta$ coincides with the expected return $\EE_{\PP}[Y^\top \alpha]$, and one thus can view $\beta$ as an auxiliary statistical variable representing the estimation of the expected portfolio return.

Despite the simplicity and popularity of the portfolio optimization models~\eqref{eq:stochastic-opt}, they are challenging to solve because the distribution $\PP$ of the random stock returns is unknown to the portfolio managers. To overcome this issue, problem~\eqref{eq:stochastic-opt} is usually solved by resorting to a plug-in estimator $\Pnom$ of $\PP$, and this estimator is usually inferred from the available data. Unfortunately, solving~\eqref{eq:stochastic-opt} using an estimated distribution $\Pnom$ may amplify the statistical estimation error. The corresponding portfolio allocation is vulnerable to the estimation errors of the input: \emph{even small changes in the input parameters can result in large changes in the weights of the optimal portfolio} \cite{ref:chopra1993effect,ref:britten1999sampling}. Consequently, in terms of out-of-sample performance, the advantage of deploying the optimal portfolio weights is overwhelmed by the offset of estimation error \cite{ref:demiguel2009optimal}. To mitigate the estimation error, one may resort, at least, to the following two strategies: 
\begin{enumerate}[leftmargin=*]
    \item reduce the estimation bias by incorporating side information into the portfolio allocation problems;
    \item diminish the impact of estimation variance by employing robust portfolio optimization models.
\end{enumerate}
Both the above-mentioned strategies have led to successful stories. Starting from the capital asset-pricing model \cite{ref:sharpe1964capital, ref:black1972capital}, numerous studies exploit side information to explain and/or predict the cross-sectional variation of the stock returns. The predictive side information may include macro-economic factors \cite{ref:flannery2002macroeconomic}, firms' financial statements \cite{ref:fama1992cross}, and historical trading data \cite{ref:grinblatt1995momentum}. Alternatively, robust optimization formulations are applied to the portfolio allocation tasks to alleviate the impact of statistical error due to input uncertainty~\cite{ref:pflug2007ambiguity, ref:delage2010distributionally, ref:zymler2013distributionally}.

In this paper, we endeavor to unify both strategies of leveraging side information and enhancing robustness. In order to incorporate the side information into portfolio allocation, we use a random vector $X$ to denote the side information that is correlated with the stock return $Y$, and consider the following conditional stochastic optimization (CSO) problem:
\begin{equation}
    \label{eq:stochastic-opt-side-info}
    \Min{\alpha \in \mc A,~ \beta \in \mc B}~\EE_{\PP}[ \ell(Y, \alpha, \beta)|X\in \mc N],
\end{equation}
where the set $\mc N$ represents our prior (or most current) knowledge about the covariate $X$.
In problem~\eqref{eq:stochastic-opt-side-info}, the distribution $\PP$ is lifted to a \textit{joint} distribution between the covariate $X$ and the return $Y$. Problem~\eqref{eq:stochastic-opt-side-info} is thus a joint portfolio allocation (over $\alpha$) and statistical estimation (over $\beta$) problem, that minimizes the conditional expected loss of the portfolio, given that the outcome of $X$ belongs to a set $\mc N$. We will apply the distributionally robust framework to promote robust solutions and hedge against the estimation error of $\PP$. Instead of assessing the expected loss with respect to a single distribution, the distributionally robust formulation minimizes the expected conditional loss uniformly over a distributional ambiguity set that represents the portfolio managers' ambiguity regarding the underlying distribution $\PP$. 

We propose a data-driven portfolio optimization framework that \emph{simultaneously exploits side information and robustifies over sampling error} and \emph{model misspecification}. We formulate the distributionally robust portfolio optimization problem with side information as
\begin{equation}
    \label{eq:DR-side-info}
    \Min{\alpha \in \mc A,~\beta \in \mc B}~
    \Sup{\QQ\in\BB, \QQ(X \in \mc N) \ge \eps}~
    \EE_{\QQ}[\ell(Y, \alpha, \beta)|X\in \mc N].
\end{equation}
We can view the min-max problem~\eqref{eq:DR-side-info} as a two-player zero-sum game: the portfolio manager chooses the portfolio allocation~$\alpha$ and the estimation parameters~$\beta$ so as to minimize the conditional expected loss, while the fictitious adversary chooses the joint distribution $\QQ$ of $(X, Y)$ so as to maximize the incurring loss. A typical choice of the ambiguity set $\mbb B$ is a neighborhood around a nominal distribution $\Pnom$. In a data-driven setting, a popular choice of $\Pnom$ is the empirical distribution, defined as a uniform distribution supported on the available data. As such, the data influences the optimization problem~\eqref{eq:DR-side-info} via the channel of the ambiguity set prescription. Given \modified{$\eps > 0$} as an input parameter, the constraint $\QQ(X \in \mc N) \ge \eps$ is a shorthand for $\QQ((X, Y) \in \mc N \times \mc Y) \ge \eps$, and it controls the probability mass requirement on the set $\mc N \times \mc Y$ and avoids conditioning on sets of measure zero. In the limit, we also will study the case $\QQ(X \in \mc N) > 0$ with strict inequality.

Intuitively, one can describe problem~\eqref{eq:DR-side-info} as a zero-sum game between two players. The outer player is the statistical portfolio manager who optimizes over the allocation $\alpha$, and the estimator $\beta$ to minimize the conditional expected loss, and the inner player can be regarded as a fictitious adversary whose goal is to increase the resulting loss by choosing an adversarial distribution.

Our formulation~\eqref{eq:DR-side-info} has several benefits. First, our model is an end-to-end framework that directly generates investment decisions using data. This approach, in the same spirit as~\cite{ref:kozak2020shrinking} and~\cite{ref:demiguel2020transaction}, integrates the sequential pipeline of return prediction and portfolio optimization, which is conventionally employed by portfolio managers. Therefore, our model, which consolidates return prediction and portfolio optimization, directly minimizes the conditional expected loss instead of the prediction error of return. Second, the robustness of the random return is directed by the portfolio choice. Indeed, the adversary who solves the supremum problem in~\eqref{eq:DR-side-info} will aim to maximize the portfolio loss and not the statistical loss. Thirdly, our proposed model does not require imposing any parametric family for prediction, and thus it reduces the risk of model misspecification (see \cite{10.2307/3648200,Fabozzi40} for examples of how this risk is typically mitigated).

\subsection{Main Contributions}
We study a robust portfolio optimization strategy that integrates conditional estimation and decision-making.  This strategy leverages side information to improve the ex-ante return prediction. Additionally, it robustifies the over-estimation error to improve the out-of-sample performance of the portfolio allocation. Our contributions are summarized as follows.
\begin{enumerate}[leftmargin=*]
    \item We present a comprehensive modeling framework for distributionally robust optimization for conditional decision-making, which provides flexibility for selecting different conditional sets and adjusting probability mass requirements. We study specifically the case where the set $\mbb B$ will be prescribed using the notions of optimal transport. We show that the qualitative behavior of the distributionally robust portfolio optimization problem with side information depends on whether the set $\mc N$ is a singleton and whether the probability mass requirement $\eps$ is zero.
    
    \item We derive finite-dimensional reformulations\footnote{To be precise, finite-dimensional optimization problems refer to problems with a finite number of variables and constraints.} of the distributionally robust portfolio optimization problem with side information. Further, we also provide tractable reformulations for the distributionally robust mean-variance problem and mean-CVaR problem as convex conic (second-order cone or semi-definite cone) problems. These 
   tractable reformulations are particularly interesting given that the conditional expectation is a nonlinear function of the probability measure (thus not amenable to traditional reformulations based on semi-infinite linear programming duality arguments) and that minimizing the worst-case conditional expected loss is equivalent to a min-max optimization problem with a fractional objective function. 
    
    \item We conduct extensive numerical experiments in the US equity markets. The results demonstrate the advantage of the distributionally robust portfolio optimization formulation, which has \modified{a lower out-of-sample expected loss} compared to other benchmarks.
\end{enumerate}

The distributionally robust optimization problem with side information presented in~\eqref{eq:DR-side-info} is generic, and our contributions are related to the vast literature in estimation-informed decision-making tasks. Moreover, by an appropriate choice of the loss function $\ell$, the sets $(\mc A, \mc B, \mc N)$ and the ambiguity set $\mbb B$, the formulation~\eqref{eq:DR-side-info} can be applied to many other problems in the field of operations research and management science such as supply chain management, transportation, and energy planning. In this regard, we refer interested readers to Appendix \ref{sec:contextTSLP}, where we extend our mean-CVaR results into \modified{a conservative approximation for} the contextual two-stage stochastic linear programming.

\subsection{Related Literature}

We start this section by discussing existing work that exploits side information to enhance portfolio optimization. For applications of fundamental analysis, Fama and French leveraged market capitalization and price-earning ratio to construct the celebrated factor models
\cite{ref:fama1992cross,ref:fama1996multifactor}, which was later applied to portfolio optimization~\cite{ref:brandt2009parametric}; P\'astor and Stambaugh \cite{ref:pastor2003liquidity} reported the evidence of equity risk premium due to illiquidity. For applications of technical indicators, examples include momentum factor \cite{ref:grinblatt1995momentum, ref:neely2014forecasting}, implied volatility \cite{ref:demiguel2013improving}, short interest \cite{ref:rapach2016short}, and investor sentiment \cite{ref:baker2006investor}. However, compared with our proposed approach, these methods are vulnerable to estimation errors.

There is a rich literature on distributionally robust portfolio optimization, with diverse choices of distributional ambiguity sets or different objective functions. When the objective function involves Value-at-Risk or CVaR, previous studies employed the moment-based ambiguity set \cite{ref:ghaoui2003worst, ref:zhu2009worst, ref:goh2011robust} and Wasserstein ambiguity set \cite{ref:esfahani2018data} to model the worst-case risk. For distributionally robust portfolio optimizations that maximize piecewise affine loss functions, see \cite{ref:natarajan2010tractable} for moment-based ambiguity set, and \cite{ref:chen2020robust} for event-wise ambiguity set. For the distributionally robust mean-variance problem, Lobo and Boyd \cite{ref:lobo2000worst} and Delage and Ye \cite{ref:delage2010distributionally} are among the earliest to consider the worst-case mean-variance model under the moment ambiguity set. Other types of ambiguity sets, such as Wasserstein ambiguity set \cite{ref:pflug2007ambiguity, ref:blanchet2018distributionally}, and optimal transport ambiguity set \cite{ref:blanchet2018optimal}, are also studied afterward. Due to the resemblance between Wasserstein distributionally robustification and norm regularization \cite{ref:blanchet2016robust}, the weight-constrained portfolio optimization problem \cite{ref:demiguel2009generalized} is also closely related to this stream of research aiming to boost the robustness of the optimal portfolio. Nevertheless, none of the aforementioned works considers how to incorporate side information and conditional estimation.

 The conditional stochastic optimization (CSO) problem~\eqref{eq:stochastic-opt-side-info} is also related to decision-making with side information, which has received significant attraction recently. Ban and Rudin~\cite{ref:ban2019big} applied weighted sample average approximation (SAA) with weights learned from kernel regression to learn the optimal decision in inventory management with covariate information. Bertsimas and Kallus \cite{ref:bertsimas2020predictive} applied weighted SAA to solve the general CSO problem, in which the weight is learned from nonparametric estimators such as k-nearest neighbors, kernel regressions, and random forests. Bertsimas and McCord \cite{ref:bertsimas2019predictions} generalized the approach of \cite{ref:bertsimas2020predictive} to multistage decision-making problems. Kannan et al.~\cite{ref:kannan2020data} developed a residual-based SAA method, where average residuals during training of the learning model are added to a point prediction of response, which is used as a response sample within the SAA. An extension to heteroscedastic residuals is provided in~\cite{ref:kannan2021hetero}. Decision tree-based approaches are also developed to solve CSO. 
Athey et al.~\cite{ref:athey2019generalized} proposed the forest-based estimates for local parameter estimation, generalizing the original random forest algorithm with observations of side information; Kallus and Mao~\cite{ref:kallus2020stochastic} generalized the idea to solve stochastic optimization problems with side information, in which they seek to minimize the expected loss instead of the prediction accuracy. Conditional chance-constrained programming was also studied recently~\cite{ref:rahimian2019contextual}.

Recently, distributionally robust optimization (DRO) formulations have been integrated into the CSO problem to improve the out-of-sample performance of the solutions~\cite{ref:bertsimas2019dynamic,ref:bertsimas2021bootstrap, ref:srivastava2021data}. For example, Kannan et al.~\cite{ref:kannan2020residuals} constructed a Wasserstein-based uncertainty set over the empirical distribution of the residuals, robustifying their previous work \cite{ref:kannan2020data}; Esteban-P\'erez and Morales \cite{ref:esteban2020distributionally} leveraged some probability trimming methods and a partial mass transportation problem to model the distributionally robust CSO problem.

Despite some common characteristics in terms of methodology, such as the idea of using the theory of optimal transport to robustify the CSO problem, our paper is an independent contribution in comparison to \cite{ref:kannan2020residuals} and \cite{ref:esteban2020distributionally}.\footnote{This claim is supported by the fact that this paper extends our previous work~\cite{ref:nguyen2020distributionally}, which predates the publication dates of both \cite{ref:kannan2020residuals} and \cite{ref:esteban2020distributionally}.} There are two key differences between our model and that considered in~\cite{ref:kannan2020residuals}. Firstly, we integrate prediction and decision-making into a single optimization problem, while in \cite{ref:kannan2020residuals}, the prediction and the decision-making procedures are separated to simplify the problem. Secondly, the conditional side information is required to be a singleton set $X = x_0$ in \cite{ref:kannan2020residuals}, but our approach allows for more flexible modeling of side information by conditioning on a more general event $X \in \mc N$. Our work is also distinct from Esteban-P\'erez and Morales \cite{ref:esteban2020distributionally}. In contrast to \cite{ref:esteban2020distributionally}, which employs probability trimmings to relax problem~\eqref{eq:DR-side-info}, we seek to provide an exact and tractable reformulation of the problem. Moreover, our approach tackles the optimal transport ambiguity set directly, and the results can be explained by picturing an adversary who intuitively perturbs the sample points. 

Finally, this paper is a complete and comprehensive extension to our previous work \cite{ref:nguyen2020distributionally}, with a twofold improvement. On the one hand, this paper naturally unifies prediction and decision-making into a single DRO problem, while \cite{ref:nguyen2020distributionally} solely considers the conditional estimation problem. On the other hand, this paper applies optimal transport distance to construct a more versatile class of distributional ambiguity set, whereas the ambiguity set in \cite{ref:nguyen2020distributionally} is based on the type-$\infty$ Wasserstein distance. Even though the type-$\infty$ Wasserstein distance is also motivated by the theory of optimal transport, it behaves differently from a type-$p$ ($p<\infty$) Wasserstein distance. The pathway to resolve the distributionally robust optimization problems in this paper hence differs significantly from the techniques employed in~\cite{ref:nguyen2020distributionally}. Finally, this paper also enriches the emerging field of Wasserstein DRO, which has gained momentum recently thanks to its applicability in a wide spectrum of practical problems~\cite{ref:blanchet2016robust, ref:gao2016distributionally, ref:zhao2018data, ref:duchi2018learning, ref:esfahani2018data, ref:kuhn2019wasserstein, ref:dou2019distributionally}.

\textbf{Notations.}  For any integer $M\geq 1$, we denote by $[M]$ the set of integers $\{1, \ldots, M\}$. For any $x\in\R$, we denote by $(x)^{+} = \max\{x,0\}$. We write $\R_{+}\Let [0,\infty)$ and $\R_{++} \Let (0, \infty)$.
Let $\Vert\cdot\Vert_p$ denote the $l_p$-norm for $p\in [1,\infty]$. For any space $\mc S$ equipped with a Borel sigma-algebra $\mc F_{\mc S}$, $\mc M(\mc S)$ is the space of all probability measures defined on $(\mc S,\mc F_{\mc S})$. For any $s\in\mc S$, the Dirac's delta measure corresponding to $s$ is denoted by $\delta_s$, i.e., for all $E\in\mc F_{\mc S}$, $\delta_s(E) = 1$ if $s\in E$ and $\delta_s(E) = 0$ if $s\notin E$. For any subset $E\subseteq \mc S$, let $\partial E$ denote the boundary of $E$, which is the closure of $E$ minus the interior of $E$. The cone of $m\times m$ real positive semi-definite matrices is denoted by $\PSD^m$. For $m\times m$ real symmetric matrices $A$ and $B$, we write $A\succeq B$ if and only if $A-B \in \PSD^m$. For a probability measure $\PP$, let $\EE_{\PP}[\cdot]$ denote the expectation under measure $\PP$. We write $\mathbbm{1}_{\mathcal S}(\cdot)$ as the indicator function of the set $\mc S$, i.e., $\mathbbm{1}_{\mathcal S}(s) = 1$ if $s \in \mc S$ and $0$, otherwise. Throughout, $\mc X$ denotes the covariate space, and $\mc Y$ denotes the asset return space. The available data consist of $N$ samples, and each sample is a pair of covariate-return denoted by $(\wh x_i, \wh y_i) \in \mc X \times \mc Y$.

All proofs are relegated to the appendix.


\section{Problem Setup}
\label{sec:setup}

This section delineates the details of our robustification of the conditional portfolio allocation problem. To this end, we will construct the ambiguity set using the optimal transport cost.
\begin{definition}[Optimal transport cost] \label{def:OT}
Fix any integer $k$ and let $\Xi \subseteq \R^k$.
Let $\DD$ be a nonnegative and continuous function on $\Xi \times \Xi$. The optimal transport cost between two distributions $\QQ_1$ and $\QQ_2$ supported on $\Xi$ is defined as
\[
    \Wass(\QQ_1, \QQ_2) \Let \min \left\{ \EE_\pi [\DD(\xi_1, \xi_2)] :
\pi \in \Pi(\QQ_1, \QQ_2)
\right\},
\]
where $\Pi(\QQ_1, \QQ_2)$ is the set of all probability measures on $\Xi \times \Xi$ with marginals $\QQ_1$ and $\QQ_2$, respectively.
\end{definition}

Intuitively, the optimal transport cost $\Wass(\QQ_1, \QQ_2)$ computes the cheapest cost of transporting the mass from an initial distribution $\QQ_1$ to a target distribution $\QQ_2$, given a function $\DD(\xi_1,\xi_2)$ prescribing the cost of transporting a unit of mass from position $\xi_1$ to position $\xi_2$. The function $\mathbb{D}$ is called the ground transport cost function. The existence of an optimal joint distribution $\pi\opt$ that attains the minimal value in the definition is guaranteed by~\cite[Theorem 4.1]{ref:villani2008optimal}. The class of optimal transport cost is rich enough to encompass common probability metrics. In particular, if $\DD$ is a metric on $\Xi$, then $\Wass$ coincides with the type-1 Wasserstein distance. In this paper, we do not restrict the ground transportation cost $\DD$ to be a metric, and as a consequence, $\Wass$ is not necessarily a proper distance on the space of probability measure. Nevertheless, we will usually refer to $\Wass$ as an optimal transport distance with a slight abuse of terminology. A comprehensive introduction to the theory of optimal transport can be found in~\cite{ref:villani2008optimal}.

We now consider the covariate space $\mc X$, equipped with a ground cost $\DD_{\mc X}$, and the return space $\mc Y$, equipped with a ground cost $\DD_{\mc Y}$. The joint sample space $\mc X \times \mc Y$ is endowed with a ground cost $\DD$, which is additively separable using $\DD_{\mc X}$ and $\DD_{\mc Y}$. We suppose that the portfolio manager possesses $N$ historical data samples, where each sample is a pair of covariate-return $(\wh x_i, \wh y_i) \in \mc X \times \mc Y$. Using the optimal transport cost in Definition~\ref{def:OT}, we define the \textit{joint ambiguity set} of the distributions of $(X, Y)$ as
\[
	\mbb B_\rho = \left\{ \QQ \in \mc M(\mc X \times \mc Y): \Wass(\QQ, \Pnom) \leq \rho  \right\}.
\]
The set $\mbb B_\rho$ contains all distributions that are of optimal transport distance less than or equal to $\rho$ from the empirical distribution 
\[
	\Pnom \Let \frac{1}{N} \sum_{i \in [N]} \delta_{(\wh x_i, \wh y_i)},
\]
defined as the distribution that makes every member of the dataset $(\wh x_i, \wh y_i)_{i=1}^N$ equiprobable.\footnote{While other discrete distributions can be used as the nominal one, the empirical distribution is often considered in cases where observations are assumed to be independently and identically distributed.}
In this form, $\mbb B_\rho$ is a non-parametric ambiguity set.  
We propose to solve the distributionally robust conditional portfolio allocation problem
\be \label{eq:dro}
    \Min{\alpha \in \mc A,~\beta \in \mc B}~\Sup{\QQ \in \mbb B_\rho, \QQ(X \in \mc N_\gamma(x_0)) \ge \eps}~ \EE_{\QQ}[ \ell(Y, \alpha, \beta) | X \in \mc N_\gamma(x_0) ].
\ee
Problem~\eqref{eq:dro} is a special instance of problem~\eqref{eq:DR-side-info} with two particular design choices. First, the ambiguity set is chosen as the optimal transport ambiguity set $\mbb B_\rho$. Second,
the side information input is specifically modeled as a neighborhood around a covariate $x_0 \in \mc X$, and the size of this neighborhood is controlled by a parameter $\gamma \in \R_+$. This neighborhood is prescribed using the ground cost $\DD_{\mc X}$ on $\mc X$ as
\[
    \mc N_{\gamma}(x_0) \Let \left\{ x \in \mc X: \DD_{\mc X}(x, x_0) \le \gamma\right\}.
\]
The set $\mc N_{\gamma}(x_0) \times \mc Y$ is called the fiber set. When $\mc N_{\gamma}(x_0)$ collapses into a singleton, we obtain a \textit{singular} fiber set $\{x_0\} \times \mc Y$, and this singular case represents conditioning on the event $X = x_0$. In this perspective, our formulation is a generalization of the conventional conditional decision-making problem in the literature, which usually focuses on conditioning on $X = x_0$. Throughout this paper, we use $\QQ(X \in \mc N_\gamma(x_0))$ as a shorthand for $\QQ((X, Y) \in \mc N_\gamma(x_0) \times \mc Y)$. The constraint $\QQ(X \in \mc N_\gamma(x_0)) \ge \eps$ indicates a minimum amount $\eps \in [0, 1]$ of probability mass to be assigned to the fiber set $\mc N_\gamma(x_0) \times \mc Y$. The feasible set $\mc A$ and $\mc B$ are assumed to be simple in the sense that they are representable using second-order cone constraints. Overall, problem~\eqref{eq:dro} involves three parameters: the ambiguity size $\rho \in \R_+$, the fiber size $\gamma \in \R_+$, and the probability mass requirement $\eps \in [0, 1]$.

It is now instructive to discuss the difficulty level of problem~\eqref{eq:dro}. By the definition of the conditional expectation, problem~\eqref{eq:dro} is equivalent to
\[
    \Min{\alpha \in \mc A,~\beta \in \mc B}~\Sup{\QQ \in \mbb B_\rho, \QQ(X \in \mc N_\gamma(x_0)) \ge \eps}~ \frac{\EE_{\QQ}[ \ell(Y, \alpha, \beta) \mathbbm{1}_{\mc N_\gamma(x_0)} ( X) ]}{\EE_{\QQ}[\mathbbm{1}_{\mc N_\gamma(x_0)}( X) ]}.
\]
In this form, one can observe that the objective function is a fractional, and thus is a non-linear, function of the probability measure $\QQ$.  Problem~\eqref{eq:dro} hence belongs to the class of distributionally robust fractional optimization problems~\cite{ref:ji2020data, ref:zhao2017distributionally}. Thus, problem~\eqref{eq:dro} is fundamentally different and also fundamentally more difficult to solve compared to existing models in the field of distributionally robust optimization that simply minimizes the worst-case expected loss~\cite{ref:rahimian2019distributionally}.

It is important to emphasize that the auxiliary variable $\beta \in \mc B$ is intentionally regrouped to the feasible set of the outer minimization problem of formulation~\eqref{eq:dro}. This regrouping of portfolio allocation variables~$\alpha \in \mc A$ and the estimation variable~$\beta \in \mc B$ is a modeling choice. In the expected loss minimization perspective, this regrouping is without any loss of optimality under conditions such as the compactness of $\mc A$ and $\mc Y$, as we illustrate in the following technical lemmas.

\begin{lemma}[Mean-variance loss function] \label{lemma:MV}
    The robustified conditional mean-variance portfolio allocation problem is
    \[
\Min{\alpha \in \mc A}~\Sup{\QQ \in \mbb B_\rho, \QQ(X \in \mc N_\gamma(x_0)) \ge \eps}~\mathrm{Variance}_\QQ[Y^\top \alpha| X \in \mc N_\gamma(x_0)] - \eta\cdot \EE_\QQ[ Y^\top\alpha| X \in \mc N_\gamma(x_0)].
    \]
    If $\mc A$ and $\mc Y$ are compact, then the above optimization problem is equivalent to~\eqref{eq:dro} with $\ell$ representing the mean-variance loss function of the form
    \be \label{eq:mean-var-loss}
        \ell(y, \alpha, \beta) = (y^\top \alpha - \beta)^2 - \eta \cdot y^\top \alpha
    \ee
    and the set $\mc B$ defined as
    \[\mc B = \big[\inf_{\alpha\in\mc A,~y\in\mc Y}y^\top \alpha,\;\sup_{\alpha\in\mc A,~y\in\mc Y}y^\top \alpha \big].\]
\end{lemma}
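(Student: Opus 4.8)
The plan is to exploit the classical variational representation of the variance, $\mathrm{Variance}_\QQ[Y^\top\alpha \mid X\in\mc N_\gamma(x_0)] = \min_{\beta\in\R}\EE_\QQ[(Y^\top\alpha-\beta)^2 \mid X\in\mc N_\gamma(x_0)]$, whose minimiser is the conditional mean $\EE_\QQ[Y^\top\alpha\mid X\in\mc N_\gamma(x_0)]$, and then to interchange this inner minimisation with the adversary's supremum. Substituting this representation into the mean-variance objective turns the inner part of the mean-variance problem into $\sup_{\QQ}\min_{\beta\in\R}\EE_\QQ[\ell(Y,\alpha,\beta)\mid X\in\mc N_\gamma(x_0)]$ with $\ell$ as in~\eqref{eq:mean-var-loss}, so the lemma reduces to showing that, for each fixed $\alpha\in\mc A$,
\[
  \sup_{\QQ}~\min_{\beta\in\R}~\EE_\QQ[\ell(Y,\alpha,\beta)\mid X\in\mc N_\gamma(x_0)] ~=~ \min_{\beta\in\mc B}~\sup_{\QQ}~\EE_\QQ[\ell(Y,\alpha,\beta)\mid X\in\mc N_\gamma(x_0)],
\]
the supremum in both cases being over $\{\QQ\in\mbb B_\rho:\QQ(X\in\mc N_\gamma(x_0))\ge\eps\}$; applying $\min_{\alpha\in\mc A}$ to both sides then reproduces exactly~\eqref{eq:dro} with the stated $\ell$ and $\mc B$.

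The first, routine, step is to reduce the $\beta$-range from $\R$ to the compact interval $\mc B$. Since every $\QQ\in\mbb B_\rho$ is supported on $\mc X\times\mc Y$, the conditional mean $\EE_\QQ[Y^\top\alpha\mid X\in\mc N_\gamma(x_0)]$ lies in $[\inf_{y\in\mc Y}y^\top\alpha,\sup_{y\in\mc Y}y^\top\alpha]$, which, since $\alpha\in\mc A$, is contained in $\mc B$; compactness of $\mc A$ and $\mc Y$ together with continuity of $(\alpha,y)\mapsto y^\top\alpha$ guarantees that $\mc B$ is a nonempty compact interval. Hence the unconstrained $\beta$-minimiser of $\EE_\QQ[\ell(Y,\alpha,\beta)\mid X\in\mc N_\gamma(x_0)]$ already lies in $\mc B$, so $\min_{\beta\in\R}$ can be replaced by $\min_{\beta\in\mc B}$ for every feasible $\QQ$; a pointwise monotonicity argument — for $\beta$ below $\inf_{y\in\mc Y}y^\top\alpha$ or above $\sup_{y\in\mc Y}y^\top\alpha$ the integrand $(y^\top\alpha-\beta)^2$ increases pointwise, hence so does $\EE_\QQ[(Y^\top\alpha-\beta)^2\mid X\in\mc N_\gamma(x_0)]$ for every $\QQ$, and so does the supremum over $\QQ$ (which is finite, $\ell$ being bounded on the compact $\mc Y$) — shows the same replacement is legitimate inside $\min_{\beta\in\R}\sup_\QQ$ on the right-hand side.

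The core of the argument is the interchange $\sup_\QQ\min_{\beta\in\mc B}=\min_{\beta\in\mc B}\sup_\QQ$, for which I would invoke Sion's minimax theorem. The $\beta$-domain $\mc B$ is convex and compact; the adversary's feasible set $\{\QQ\in\mbb B_\rho:\QQ(X\in\mc N_\gamma(x_0))\ge\eps\}$ is convex because $\Wass(\cdot,\Pnom)$ is a convex functional (an infimum over couplings of a functional linear in the coupling, and couplings combine convexly) and the mass constraint is linear in $\QQ$. For fixed $\QQ$, $\beta\mapsto\EE_\QQ[\ell(Y,\alpha,\beta)\mid X\in\mc N_\gamma(x_0)]$ is a strictly convex quadratic, hence continuous and quasi-convex. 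For fixed $\beta$, writing the conditional expectation as the ratio $\EE_\QQ[\ell(Y,\alpha,\beta)\mathbbm{1}_{\mc N_\gamma(x_0)}(X)]/\EE_\QQ[\mathbbm{1}_{\mc N_\gamma(x_0)}(X)]$ of two functionals affine in $\QQ$ with denominator bounded below by $\eps$ exhibits it as quasi-linear — in particular quasi-concave — on the feasible set; equipping the space of measures with the topology of setwise convergence (under which $\QQ\mapsto\EE_\QQ[g]$ is continuous for bounded measurable $g$, and here $\ell$ is bounded on $\mc Y$ by compactness of $\mc A,\mc B,\mc Y$) renders it continuous, in particular upper semicontinuous. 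All hypotheses of Sion's theorem are thereby met.

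I expect the main obstacle to be precisely this verification for the \emph{fractional} objective: unlike a plain expectation, the conditional expectation is only quasi-concave — not concave — in $\QQ$, so one cannot appeal to a linear minimax theorem, and one must ensure the denominator stays uniformly bounded away from zero, which is exactly what $\QQ(X\in\mc N_\gamma(x_0))\ge\eps$ provides when $\eps>0$; the boundary case $\eps=0$, where $\QQ(X\in\mc N_\gamma(x_0))=0$ would leave the conditional expectation undefined, needs the separate convention and limiting argument anticipated after~\eqref{eq:dro} and is handled in the main development. A secondary point of care is the topology on $\mc M(\mc X\times\mc Y)$: since $\mathbbm{1}_{\mc N_\gamma(x_0)}$ is not continuous, the weak topology need not make the objective usc in $\QQ$, so one should work with setwise convergence (or total variation) — harmless here, because Sion's theorem only asks the $\QQ$-side to be convex with the objective usc there, the compactness being supplied by the $\beta$-side.
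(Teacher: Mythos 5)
Your proof is correct and follows the same overall strategy as the paper's: the variational representation of the variance, the observation that the minimizing $\beta$ is the conditional mean and hence lies in the compact interval $\mc B$, and a Sion-type minimax interchange. Where you differ is in how Sion's theorem is set up. The paper first passes from the joint measure $\QQ$ to the induced conditional measure on $\mc Y$, proves that the resulting conditional ambiguity set $\mc B_{x_0,\gamma,\eps}(\mbb B_\rho)$ is convex (Lemma~\ref{lemma:convexity}), and then applies Sion to an objective that is \emph{linear} in the conditional measure and weakly upper semicontinuous thanks to the compactness of $\mc Y$ --- no indicator function appears at that stage, so the weak topology suffices. You instead keep the joint measure, observe that the fractional objective is quasi-concave in $\QQ$ (superlevel sets of a ratio of linear functionals with denominator bounded below by $\eps$ are convex), and switch to the setwise-convergence topology to restore upper semicontinuity despite the discontinuous indicator $\mathbbm{1}_{\mc N_\gamma(x_0)}$. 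The two devices are two sides of the same coin: the convexity of the conditional ambiguity set in the paper is precisely the image of your quasi-linearity observation under the conditioning map. The paper's route makes the concavity manifest and stays in the more familiar weak topology; yours avoids introducing the conditional ambiguity set at the cost of the topological switch, which you correctly identify and justify. Both arguments, like the paper's, leave the degenerate case $\eps=0$ to the separate treatment in the main text.
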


\begin{lemma}[Mean-CVaR loss function] \label{lemma:MCVaR}
The robustified conditional mean-CVaR portfolio allocation problem with risk tolerance $\tau \in (0, 1)$ is
\[
    \Min{\alpha \in \mc A}~\Sup{\QQ \in \mbb B_\rho, \QQ(X \in \mc N_\gamma(x_0)) \ge \eps}~\mathrm{CVaR}^{1-\tau}_\QQ[Y^\top \alpha|X \in \mc N_\gamma(x_0)]-\eta\cdot\EE_\QQ[ Y^\top\alpha| X \in \mc N_\gamma(x_0)].
    \]
    If $\mc A$ and $\mc Y$ are compact, the above optimization problem is equivalent to problem~\eqref{eq:dro} with $\ell$ representing the mean-CVaR loss function of the form
\be \label{eq:mean-cvar-loss}
\ell(y, \alpha, \beta)
= -\eta y^\top \alpha + \beta+ \frac{1}{\tau} (-y^\top \alpha - \beta)^+ = \max\left\{ - \eta y^\top \alpha + \beta, - (\eta + \frac{1}{\tau}) y^\top \alpha + (1 - \frac{1}{\tau}) \beta\right\}
\ee and $\mc B$ as defined in Lemma \ref{lemma:MV}.
\end{lemma}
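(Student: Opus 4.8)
The plan is to follow the same template as the proof of Lemma~\ref{lemma:MV}, with the elementary identity $\mathrm{Variance}_\QQ[Z]=\min_{\beta\in\R}\EE_\QQ[(Z-\beta)^2]$ replaced by the Rockafellar--Uryasev variational representation of the Conditional Value-at-Risk. Fix $\alpha\in\mc A$ and any $\QQ$ with $\QQ(X\in\mc N_\gamma(x_0))\ge\eps$; applying the Rockafellar--Uryasev formula to the conditional law of the loss $-Y^\top\alpha$ given $X\in\mc N_\gamma(x_0)$ yields
\[
    \mathrm{CVaR}^{1-\tau}_\QQ[Y^\top\alpha\mid X\in\mc N_\gamma(x_0)]=\min_{\beta\in\R}\ \EE_\QQ\big[\beta+\frac{1}{\tau}(-Y^\top\alpha-\beta)^+\mid X\in\mc N_\gamma(x_0)\big].
\]
Because the term $-\eta\,\EE_\QQ[Y^\top\alpha\mid X\in\mc N_\gamma(x_0)]$ does not depend on $\beta$, it may be absorbed into the minimum and the expectation, so the conditional mean-CVaR objective becomes $\min_{\beta\in\R}\EE_\QQ[\ell(Y,\alpha,\beta)\mid X\in\mc N_\gamma(x_0)]$ with $\ell$ exactly as in~\eqref{eq:mean-cvar-loss}; the second, max-type expression for $\ell$ is just the epigraph rewriting of $(\cdot)^+$.

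Next I would argue that this inner minimization may be restricted to the compact interval $\mc B$ without changing its value, for every admissible $\QQ$. The map $\beta\mapsto\EE_\QQ[\ell(Y,\alpha,\beta)\mid X\in\mc N_\gamma(x_0)]$ is convex (an average of maxima of two functions affine in $\beta$) and coercive (the coefficients of $\beta$ in the two affine pieces, $1$ and $1-\frac{1}{\tau}$, have opposite signs), hence it attains its minimum at a $(1-\tau)$-quantile of the conditional law of $-Y^\top\alpha$. Since every $\QQ\in\mbb B_\rho$ is supported on $\mc X\times\mc Y$ and $\mc A,\mc Y$ are compact, $Y^\top\alpha$ takes values in the bounded interval $\mc B$ of Lemma~\ref{lemma:MV}, so one may take the minimizing $\beta$ inside $\mc B$; therefore $\min_{\beta\in\R}=\min_{\beta\in\mc B}$ in the line above, and the worst-case conditional mean-CVaR value at $\alpha$ equals $\sup_\QQ\min_{\beta\in\mc B}\EE_\QQ[\ell(Y,\alpha,\beta)\mid X\in\mc N_\gamma(x_0)]$.

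The decisive step is then to interchange $\min_{\beta\in\mc B}$ with $\sup_\QQ$ and to pull the minimum over $\beta$ outside to join $\min_{\alpha\in\mc A}$, which reproduces exactly problem~\eqref{eq:dro} with the stated $\ell$ and $\mc B$. I would carry this out with Sion's minimax theorem: the objective is convex and continuous in $\beta$ on the compact set $\mc B$; the feasible set $\{\QQ\in\mbb B_\rho:\QQ(X\in\mc N_\gamma(x_0))\ge\eps\}$ is convex; and, crucially, viewed as a function of $\QQ$ the objective is the ratio $\EE_\QQ[\ell\,\mathbbm{1}_{\mc N_\gamma(x_0)}(X)]/\EE_\QQ[\mathbbm{1}_{\mc N_\gamma(x_0)}(X)]$ of two linear functionals whose denominator is at least $\eps>0$ on the feasible set, so it is quasi-concave (indeed quasi-linear) there, with the upper semicontinuity required for the supremum following from the closedness of $\mc N_\gamma(x_0)$ and the continuity of $\ell$. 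This is precisely where the argument departs from standard worst-case-expectation reformulations, since the fractional dependence on $\QQ$ makes semi-infinite linear-programming duality inapplicable; verifying the quasi-concavity and the semicontinuity of the ratio --- and, in the degenerate regime $\eps=0$, first replacing the mass constraint by $\QQ(X\in\mc N_\gamma(x_0))>0$ and passing to the limit in the spirit of the discussion after~\eqref{eq:DR-side-info} --- is the main obstacle. Once the exchange is justified, taking $\min_{\alpha\in\mc A}$ on both sides and merging the two minimizations completes the proof, the whole argument running in parallel to that of Lemma~\ref{lemma:MV}.
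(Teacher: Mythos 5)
Your first two steps coincide with the paper's: the Rockafellar--Uryasev representation of the conditional CVaR, absorption of the $-\eta\,\EE_\QQ[Y^\top\alpha\mid\cdot]$ term, and the observation that the minimizing $\beta$ is a conditional $(1-\tau)$-quantile of $Y^\top\alpha$, hence lies in $\mc B$ when $\mc A$ and $\mc Y$ are compact. The divergence, and the gap, is in the min--sup interchange. You propose to apply Sion's theorem directly to the fractional objective $\QQ\mapsto\EE_\QQ[\ell\,\mathbbm{1}_{\mc N_\gamma(x_0)}(X)]/\EE_\QQ[\mathbbm{1}_{\mc N_\gamma(x_0)}(X)]$, claiming quasi-concavity (which is fine: a ratio of affine functionals with positive denominator is quasi-linear) and upper semicontinuity ``following from the closedness of $\mc N_\gamma(x_0)$.'' That last claim is where the argument breaks: closedness of the fiber set makes the \emph{denominator} $\QQ(X\in\mc N_\gamma(x_0))$ upper semicontinuous under weak convergence, and an u.s.c.\ denominator works \emph{against} u.s.c.\ of the ratio. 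Concretely, take $\QQ_n=\eps\,\delta_{(x_0,y_1)}+(1-\eps)\delta_{(x_n,y_2)}$ with $x_n\notin\mc N_\gamma(x_0)$ converging to a boundary point of $\mc N_\gamma(x_0)$; the conditional expectation equals $\ell(y_1,\alpha,\beta)$ along the sequence but equals $\eps\ell(y_1,\alpha,\beta)+(1-\eps)\ell(y_2,\alpha,\beta)$ at the weak limit, which is strictly smaller whenever $\ell(y_1,\alpha,\beta)>\ell(y_2,\alpha,\beta)$. So the fractional objective is genuinely not u.s.c.\ in $\QQ$, and Sion's theorem cannot be invoked in the form you describe. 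You flag this verification as ``the main obstacle'' but do not resolve it.

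The paper's proof sidesteps the issue by changing variables: instead of the joint measure $\QQ$, it works with the induced conditional law $\mu_{x_0}\in\mc M(\mc Y)$, shows that the set of achievable conditional laws $\mc B_{x_0,\gamma,\eps}(\mbb B_\rho)$ is convex (Lemma~\ref{lemma:convexity}, via an explicit mixture construction with a carefully chosen mixing weight $\theta$), and then applies Sion's theorem to the objective $\mu_{x_0}\mapsto\EE_{\mu_{x_0}}[\ell(Y,\alpha,\beta)]$, which is \emph{linear} and weakly u.s.c.\ on $\mc M(\mc Y)$ because $\mc Y$ is compact and $\ell$ is continuous (this is Lemma~\ref{lemma:interchange}); Sion only needs the conditional ambiguity set to be convex, not closed or compact, since compactness is supplied by $\mc B$. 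To repair your argument you would need to either adopt this reparametrization or prove the interchange by some other means; as written, the semicontinuity hypothesis of Sion's theorem fails for your choice of variables.
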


Theoretically, the assumptions on the compactness of the sets $\mc A$ and $\mc Y$ may be restrictive. However, in practice and especially in the portfolio decision setting, the feasible allocation set $\mc A$ is usually compact. Moreover, we empirically observe that restricting $\mc Y$ to be inside a ball of sufficiently large diameter does not alter the numerical solution. Alternatively, weak duality also implies that problem~\eqref{eq:dro} is a conservative formulation of the robustified risk measure minimization problem, see Remark~\ref{remark:weak-duality} for an example. Thus even when the compactness assumptions do not hold, the optimal solution of~\eqref{eq:dro} can still be considered a robust, or more risk-averse, portfolio allocation.

Throughout this paper, we make the following regularity assumption.

\begin{assumption}[Regularity conditions] \label{a}
    The following assumptions hold.
    \begin{enumerate}[label=(\roman*)]
        \item \label{a:cost} Separable ground cost: The joint space $\mc X \times \mc Y$ is endowed with a separable cost function $\DD$ of the form
        \[
            \DD\big( (x, y), (x', y') \big) = \DD_{\mc X}(x, x') + \DD_{\mc Y}(y, y') \quad \forall (x, y),~(x', y') \in \mc X \times \mc Y.
        \]
        The individual ground transport costs $\DD_{\mc X}$ and $\DD_{\mc Y}$ are symmetric, non-negative, and continuous on $\mc X \times \mc X$ and $\mc Y \times \mc Y$, respectively. Moreover, $\DD\big( (x, y), (x', y') \big) = 0$ if and only if $(x, y) = (x', y')$.
        \item \label{a:proj} Projection: For any $i \in [N]$, there exists a unique projection $\wh x_i^p \in \mc N_{\gamma}(x_0)$  of $\wh x_i$ onto $\mc N_{\gamma}(x_0)$ under the cost $\DD_{\mc X}$, that is,
        \be \label{eq:kappa-def}
            0 \le \kappa_i \Let \Min{x \in \mc N_\gamma(x_0)}\DD_{\mc X}(x, \wh x_i) = \DD_{\mc X}(\wh x_i^p, \wh x_i).
        \ee
        \item \label{a:vic} Vicinity: For any $x \in \partial \mc N_{\gamma}(x_0)$ and for any radius $r > 0$, the neighborhood set $\{x'\in \mc X \backslash \mc N_{\gamma}(x_0) : \DD_{\mc X}(x', x) \le r\}$ around the boundary point $x$ is non-empty.
    \end{enumerate}
\end{assumption}

The conditions on $\DD_{\mc X}$ and $\DD_{\mc Y}$ in Assumption~\ref{a}\ref{a:cost} are trivially satisfied if these individual ground costs are chosen as continuous functions of norms on $\mc X$ and $\mc Y$, respectively.  Assumption~\ref{a}\ref{a:proj} asserts the existence of the projection of any training sample point $(\wh x_i, \wh y_i)$ onto the set $\mc N_{\gamma}(x_0) \times \mc Y$. It is easy to see that $\kappa_i = 0$ and $\wh x_i^p = \wh x_i$ whenever $\wh x_i \in \mc N_{\gamma}(x_0)$ thanks to the choice of $\DD$ in Assumption~\ref{a}\ref{a:cost}. Assumption~\ref{a}\ref{a:vic} indicates that any points on the boundary of the fiber set can be shifted outside the fiber with an arbitrarily small cost. Assumption~\ref{a}\ref{a:vic} holds whenever the set $\mc N_\gamma(x_0)$ lies in the interior of the set $\mc X$.

\subsection{Feasibility condition}

Notice that for a fixed amount of fiber probability $\eps$, if the transportation budget $\rho$ is small, the feasible set of the inner supremum problem in~\eqref{eq:dro} may be empty. We define the minimum value of the radius $\rho$ so that this feasible set is non-empty as
\be \label{eq:rho-LB-1}
    \rho_{\min}(x_0, \gamma, \eps) \Let \inf
    \left\{
    \Wass(\QQ, \Pnom): 
    \QQ \in \mc M (\mc X \times \mc Y), ~\QQ(X \in \mc N_\gamma(x_0)) \ge \eps
    \right\}.
\ee

The value $\kappa_i$ defined in Assumption~\ref{a}\ref{a:proj} signifies the unit cost of moving a point mass from the observation $(\wh x_i, \wh y_i)$ to the fiber set $\mc N_\gamma(x_0) \times \mc Y$. The magnitude of $\kappa_i$ depends on $\gamma$, however, this dependence is implicit. Using this definition of $\kappa$, the next proposition asserts that the value of $\rho_{\min}(x_0, \gamma, \eps)$ can be computed by solving a finite-dimensional optimization problem.

\begin{proposition}[Minimum radius] \label{prop:rho-LB-1}
The value $\rho_{\min}(x_0, \gamma, \eps)$ equals the optimal value of a linear program
\be  \label{eq:greedy-1}
	\rho_{\min}(x_0, \gamma, \eps) = 	\min \left\{
	    \ds N^{-1} \sum_{i\in [N]} \kappa_i \upsilon_i 
		: \upsilon \in [0, 1]^N, \, \ds\sum_{i\in [N]} \upsilon_i \ge N\eps
	\right\},
\ee
where $\kappa$ are defined as in~\eqref{eq:kappa-def}.
Furthermore, there exists a measure $\QQ \in \mbb B_\rho$ such that $\QQ(X \in \mc N_\gamma(x_0)) \ge \eps$ if and only if $\rho \ge \rho_{\min}(x_0, \gamma, \eps)$. 
\end{proposition}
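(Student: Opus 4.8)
The plan is to verify the identity in~\eqref{eq:greedy-1} by a sandwiching argument — bounding $\rho_{\min}(x_0,\gamma,\eps)$ below and above by the optimal value of the displayed linear program — and then to read off the ``if and only if'' statement as a direct corollary.

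For the lower bound, I would exploit that $\Pnom$ is finitely supported on the atoms $(\wh x_i,\wh y_i)$, $i\in[N]$. Every coupling $\pi\in\Pi(\QQ,\Pnom)$ then disintegrates with respect to its second marginal as $\pi = N^{-1}\sum_{i\in[N]} \QQ_i \otimes \delta_{(\wh x_i,\wh y_i)}$ for probability measures $\QQ_i\in\mc M(\mc X\times\mc Y)$ (concretely, $\QQ_i$ is $N$ times the restriction of $\pi$ to $\mc X\times\mc Y\times\{(\wh x_i,\wh y_i)\}$), so that $\QQ = N^{-1}\sum_{i\in[N]}\QQ_i$ and $\EE_\pi[\DD(\xi_1,\xi_2)] = N^{-1}\sum_{i\in[N]} \int \DD\big((x,y),(\wh x_i,\wh y_i)\big)\,\QQ_i(\dd x,\dd y)$. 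By the separability of $\DD$ in Assumption~\ref{a}\ref{a:cost}, nonnegativity of $\DD_{\mc Y}$, and the definition of $\kappa_i$ in~\eqref{eq:kappa-def}, the integrand satisfies $\DD((x,y),(\wh x_i,\wh y_i)) \ge \DD_{\mc X}(x,\wh x_i) \ge \kappa_i$ for every $x\in\mc N_\gamma(x_0)$, hence $\EE_\pi[\DD(\xi_1,\xi_2)] \ge N^{-1}\sum_{i\in[N]}\kappa_i\upsilon_i$ with $\upsilon_i\Let\QQ_i(\mc N_\gamma(x_0)\times\mc Y)\in[0,1]$. Since $N^{-1}\sum_{i\in[N]}\upsilon_i = \QQ(X\in\mc N_\gamma(x_0))\ge\eps$, the vector $\upsilon$ is feasible for~\eqref{eq:greedy-1}; taking the infimum over couplings $\pi$ and then over admissible $\QQ$ gives $\rho_{\min}(x_0,\gamma,\eps) \ge$ (optimal value of~\eqref{eq:greedy-1}).

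For the reverse inequality I would construct an explicit optimal competitor. The linear program~\eqref{eq:greedy-1} is feasible (take $\upsilon\equiv 1$) and minimizes a linear objective over the compact set $[0,1]^N$, so it attains its optimum at some $\upsilon^\star$. I would then set $\QQ^\star \Let N^{-1}\sum_{i\in[N]}\big(\upsilon_i^\star\,\delta_{(\wh x_i^p,\wh y_i)} + (1-\upsilon_i^\star)\,\delta_{(\wh x_i,\wh y_i)}\big)$, i.e.\ move a fraction $\upsilon_i^\star$ of each atom onto its projection $\wh x_i^p\in\mc N_\gamma(x_0)$ guaranteed by Assumption~\ref{a}\ref{a:proj}. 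The coupling pairing each relocated atom with $(\wh x_i,\wh y_i)$ has cost $N^{-1}\sum_{i\in[N]}\upsilon_i^\star\,\DD_{\mc X}(\wh x_i^p,\wh x_i) = N^{-1}\sum_{i\in[N]}\kappa_i\upsilon_i^\star$, so $\Wass(\QQ^\star,\Pnom) \le N^{-1}\sum_{i\in[N]}\kappa_i\upsilon_i^\star$, while $\QQ^\star(X\in\mc N_\gamma(x_0)) \ge N^{-1}\sum_{i\in[N]}\upsilon_i^\star \ge \eps$. Hence $\rho_{\min}(x_0,\gamma,\eps) \le N^{-1}\sum_{i\in[N]}\kappa_i\upsilon_i^\star$, which together with the lower bound establishes~\eqref{eq:greedy-1} and, as a byproduct, shows the infimum defining $\rho_{\min}(x_0,\gamma,\eps)$ in~\eqref{eq:rho-LB-1} is attained (by $\QQ^\star$).

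The ``furthermore'' claim then follows at once: if $\rho\ge\rho_{\min}(x_0,\gamma,\eps)$ the measure $\QQ^\star$ above lies in $\mbb B_\rho$ and satisfies $\QQ^\star(X\in\mc N_\gamma(x_0))\ge\eps$; conversely, any $\QQ\in\mbb B_\rho$ with $\QQ(X\in\mc N_\gamma(x_0))\ge\eps$ is admissible in~\eqref{eq:rho-LB-1}, so $\rho\ge\Wass(\QQ,\Pnom)\ge\rho_{\min}(x_0,\gamma,\eps)$. I expect the only slightly delicate point to be making the disintegration step in the lower bound watertight (measurability of the family $\{\QQ_i\}$ and the sum--integral interchange), but because $\Pnom$ has only finitely many atoms this is elementary rather than an appeal to a general disintegration theorem; note also that Assumption~\ref{a}\ref{a:vic} plays no role here, since the construction transports mass only \emph{into} the fiber set.
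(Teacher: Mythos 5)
Your proof is correct and follows essentially the same route as the paper: both decompose the coupling with respect to the empirical marginal into per-sample measures, identify the two-point structure $\upsilon_i\,\delta_{(\wh x_i^p,\wh y_i)}+(1-\upsilon_i)\,\delta_{(\wh x_i,\wh y_i)}$ as optimal, and prove the ``furthermore'' claim by the same explicit construction of $\QQ^\star$. The only (cosmetic) difference is that your two-sided sandwich establishes the lower bound for arbitrary feasible couplings, which sidesteps the paper's step of first positing an optimal $\{\pi_i^\star\}$ for the infinite-dimensional problem before attainment is shown.
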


Notice that the minimization problem~\eqref{eq:greedy-1} can be formulated as a fractional knapsack problem, and it can be solved in time $O(N \log N)$ using a greedy heuristic~\cite{ref:dantzig1957discrete}, see also~\cite[Proposition~17.1]{ref:korte2007}.


\subsection{Discussion and roadmap} \label{sec:roadmap}

Problem~\eqref{eq:dro} is governed by three parameters: the ambiguity size $\rho$, the fiber size $\gamma$, and the fiber probability requirement $\eps$. Figure~\ref{fig:roadmap} gives an overview of the results corresponding to all combinations of parameter choices for $(\rho, \gamma, \eps)$.

\begin{figure}[ht!]
    \centering
    \includegraphics[width=1.0\textwidth]{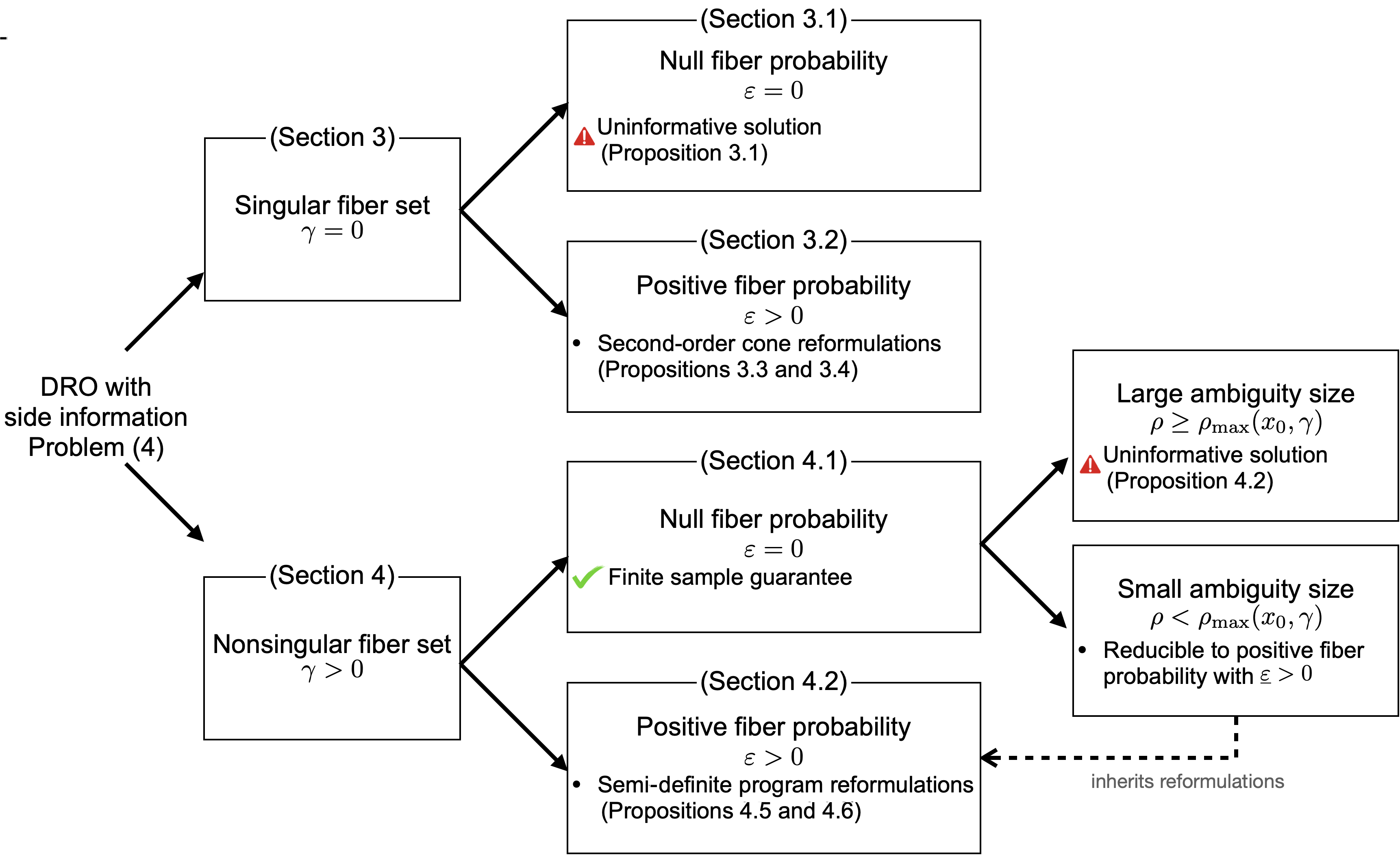}
    \caption{Schematic overview of the results in this paper.}
    \label{fig:roadmap}
\end{figure}

    The parameter $\gamma$ determines the size of the neighborhood around $x_0$. The case with $\gamma = 0$ results in a singular fiber set $\{x_0\} \times \mc Y$: in this case, either problem~\eqref{eq:dro} leads to uninformative conditioning result in Section~\ref{sec:epsgamma0} or we need to impose a strictly positive mass in order to get informative conditioning result ($\eps > 0$) in Section~\ref{sec:gamma0_epspositive}. Notice that imposing the constraint that $\QQ(X = x_0) \ge \eps > 0$ will automatically eliminate distributions with a density around $x_0$. Thus, if the modeler has a prior belief that the data-generating distribution admits a density around $x_0$, then it is more reasonable to use $\gamma > 0$. The parameter $\gamma$ prescribes a neighborhood in the covariate space $\mathcal{X}$, and it resembles the kernel bandwidth in the field of nonparametric statistics. One can rely on this resemblance and tune $\gamma$ based on the number of samples $N$ using the same theoretical guidance for choosing the bandwidth. For example, if $\mathcal{X}$ is one-dimensional, then~\cite[Section 3.2]{ref:fan1996local} suggests that $\gamma$ should be of the order of $1/N^{1/5}$. For higher dimensions, the rate can be obtained by imposing suitable assumptions on the data-generating distribution.
    
    Adding a neighborhood around $x_0$ also implies that the modeler wishes to hedge against potential misspecification or perturbation of the covariate $x_0$. In a portfolio optimization setting, the covariate $x_0$ may represent a combination of macroeconomics (inflation, GDP growth, etc.) and/or market indices (VIX, etc.), which are notoriously difficult to assign an exact value due to time lags, noisy or constantly-updated measurements. It is thus imperative to take the effects of an uncertain covariate $x_0$ into account.

    From a technical viewpoint, the parameter $\eps$ helps to avoid the ill-posedness of the optimization problem. As we will show in Sections~\ref{sec:epsgamma0} and~\ref{sec:gammapositive_eps0}, the conditioning in problem~\eqref{eq:dro} may become \textit{un}informative if $\eps = 0$, and this issue can be eliminated by imposing a strictly positive value of $\eps$. From a modeling viewpoint, the parameter $\eps$ can capture the prior information of the modeler on the magnitude of the density function around $x_0$, which translates integrally to a lower-confidence bound on the probability value assigned to the set $\mc N_{\gamma}(x_0) \times \mc Y$. Alternatively, the parameter $\eps$ can be tuned using a similar idea as tuning the kernel bandwidth in the nonparametric statistics literature: if the neighborhood around $x_0$ is of a radius $\gamma > 0$ and if the data-generating distribution admits a density, then we expect $\eps$ to scale in the order of $\gamma^n$, where $n$ is the dimension of the covariate space $\mc X$. 
    
Finally, the parameter $\rho \in \R_{++}$ hedges against possible error in the finite sample estimation, equivalently known as the epistemic uncertainty. It is possible to tune $\rho$ so that problem~\eqref{eq:dro} possesses some desirable theoretical guarantees. For example, under the condition $\gamma > 0$ and $\eps = 0$ of Section~\ref{sec:gammapositive_eps0}, problem~\eqref{eq:dro} satisfies the finite sample guarantee provided that $\rho$ is chosen so that $\mbb B_\rho$ contains the data-generating distribution with high probability (e.g., by choosing $\rho$ that satisfies the finite guarantee from~\cite{ref:fournier2015rate}). Alternatively, $\rho$ can be also tuned using the robust Wasserstein profile inference (RWPI) approach with the goal of recovering the correct decision~\cite{ref:blanchet2021statistical, ref:si2021testing}.

The most comprehensive and general case in this paper is presented in Section~\ref{sec:gammapositive_epspositive} with all parameters $(\rho, \gamma, \eps)$ being non-zero. In addition, the results in Section~\ref{sec:gammapositive_epspositive} also provide the reformulations for a specific case with $\eps = 0$; see the dashed arrow. Thus, the combination of non-zero parameters $(\rho, \gamma, \eps)$ presented in Section~\ref{sec:gammapositive_epspositive} is useful on two fronts: first, it equips the modeler with the most flexible setting to capture different prior information, and second, it serves as a reformulating auxiliary combination to resolve the problem under a null fiber probability assumption.

While the statistical performance guarantees (either in asymptotic or finite sample regime) of problem~\eqref{eq:dro} are also of high theoretical relevance, our paper focuses mainly on practical relevance. Towards this goal, our main focus is to provide the reformulations for problem~\eqref{eq:dro} via a complete and thorough analysis for each combination of the parameters $(\rho, \gamma, \eps)$. We leave the statistical performance guarantee of problem~\eqref{eq:dro} open for further research.

\section{Tractable Reformulations for Singular Fiber Set}
\label{sec:gamma0}

We study in this section the case where the radius $\gamma$ is zero, which implies that $\mc N_\gamma(x_0) = \{x_0\}$ and the fiber set becomes $\{x_0\} \times \mc Y$. In this case, we simply recover the conventional portfolio allocation problem conditional on $X = x_0$. Interestingly, the qualitative behavior of the robustified conditional portfolio allocation problem~\eqref{eq:dro} depends on whether $\eps= 0$ or $\eps > 0$. We will explore these two cases in the subsequent subsections. 

\subsection{Null fiber probability $\eps = 0$}
\label{sec:epsgamma0}

We consider now the situation when $\gamma = \eps = 0$. 
Notice that the probability mass constraint in this case should be taken as a strict inequality of the form $\QQ(X = x_0) > 0$ to avoid conditioning on sets of measure zero. This is equivalent to viewing the mass constraint in the limit as $\eps$ tends to zero. If we use the type-$\infty$ Wasserstein distance to dictate the set $\mbb B_\rho$, then the results from~\cite{ref:nguyen2020distributionally} can be utilized in order the compute the worst-case conditional expected loss. However, in this paper, we use the optimal transport of Definition~\ref{def:OT} to prescribe $\mbb B_\rho$, and the worst-case expected loss becomes uninformative, as is shown in the following result.

\begin{proposition}[Uninformative solution when $\eps = \gamma = 0$] \label{prop:zero-eps}
    Suppose that $\eps = \gamma = 0$. For any $x_0 \in \mc X$ and $\rho \in \R_{++}$, the worst-case conditional expected loss becomes
		\[
		\Sup{\QQ \in \mbb B_\rho, \QQ(X = x_0) > 0}~ \EE_{\QQ}[ \ell(Y, \alpha, \beta) | X =x_0 ] = \Sup{y \in \mc Y}~ \ell(y, \alpha, \beta).
		\]
\end{proposition}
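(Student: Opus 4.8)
The plan is to prove the identity by establishing the two inequalities separately, with the main work being the ``$\ge$'' direction via an explicit construction of near-optimal feasible distributions.

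\textbf{The easy direction ($\le$).} First I would fix any $\alpha \in \mc A$, $\beta \in \mc B$, and any feasible $\QQ \in \mbb B_\rho$ with $\QQ(X = x_0) > 0$. Conditioning on the event $X = x_0$ yields a probability measure supported on $\{x_0\} \times \mc Y$, which we may identify with a measure on $\mc Y$. Since $\ell(y, \alpha, \beta) \le \sup_{y' \in \mc Y} \ell(y', \alpha, \beta)$ pointwise, taking expectations gives $\EE_{\QQ}[\ell(Y,\alpha,\beta) \mid X = x_0] \le \sup_{y \in \mc Y} \ell(y, \alpha, \beta)$. Taking the supremum over feasible $\QQ$ preserves this bound, so the left-hand side is at most the right-hand side. (A minor caveat: if $\ell$ is unbounded above on $\mc Y$, both sides are $+\infty$ and there is nothing to prove; otherwise the argument is clean.)

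\textbf{The hard direction ($\ge$).} Here I need to show that for every $\delta > 0$ I can find a feasible $\QQ$ whose conditional expected loss exceeds $\sup_{y \in \mc Y} \ell(y,\alpha,\beta) - \delta$, and crucially this must be done with an arbitrarily small transport budget being ``wasted'' on reaching the fiber $\{x_0\} \times \mc Y$ — that is the whole point of why the solution is uninformative. Pick $y^\star \in \mc Y$ with $\ell(y^\star, \alpha, \beta) \ge \sup_{y \in \mc Y} \ell(y,\alpha,\beta) - \delta$ (or a sequence achieving the supremum in the unbounded case). The idea is to take one data point, say $(\wh x_1, \wh y_1)$, split off a tiny fraction $t \in (0,1]$ of its mass, and move that fraction to the single atom $(x_0, y^\star)$; leave everything else in place. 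The resulting measure $\QQ_t \Let \Pnom + \frac{t}{N}\big(\delta_{(x_0, y^\star)} - \delta_{(\wh x_1, \wh y_1)}\big)$ satisfies $\QQ_t(X = x_0) \ge t/N > 0$, and conditioning on $X = x_0$ puts \emph{all} the conditional mass on $y^\star$ (assuming, as is generic, that no data point already sits exactly at $x_0$; if some do, one simply scales $t$ small enough that the moved mass dominates, or handles those atoms by an analogous perturbation), so $\EE_{\QQ_t}[\ell(Y,\alpha,\beta)\mid X = x_0] = \ell(y^\star,\alpha,\beta) \ge \sup_y \ell(y,\alpha,\beta) - \delta$. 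It remains to check feasibility: $\Wass(\QQ_t, \Pnom) \le \frac{t}{N}\,\DD\big((\wh x_1,\wh y_1),(x_0,y^\star)\big)$ by using the obvious transport plan that moves the displaced mass and keeps the rest fixed, and this can be made $\le \rho$ by choosing $t$ small, since $\rho > 0$ and $\DD\big((\wh x_1,\wh y_1),(x_0,y^\star)\big)$ is a fixed finite constant. Letting $\delta \downarrow 0$ (and, in the unbounded case, letting the chosen $y^\star$ range over a sequence with $\ell(y^\star,\cdot) \to \infty$) gives the reverse inequality.

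\textbf{Anticipated obstacle.} The main subtlety is not the construction itself but handling the conditioning cleanly when the empirical distribution might already place mass on the fiber $\{x_0\}\times\mc Y$ — in that case the conditional law of $\QQ_t$ given $X = x_0$ is a mixture of those pre-existing atoms with the new atom $\delta_{y^\star}$, so I cannot simply read off $\ell(y^\star,\alpha,\beta)$. The fix is to also transport away (at vanishing cost, using a perturbation within $\mc Y$, or by noting we can instead move those atoms' returns toward $y^\star$) or to observe that by shrinking $t$ we only affect the budget, not the conditioning structure — so the correct move is: when $\wh x_i = x_0$ for some $i$, redefine $\QQ_t$ to additionally relocate the return coordinate of each such atom to $y^\star$, which costs at most $\sum_{i : \wh x_i = x_0} \frac{1}{N}\DD_{\mc Y}(\wh y_i, y^\star)$ — a \emph{fixed} constant independent of $t$, hence still absorbable into $\rho$ only if $\rho$ exceeds it. Since $\rho$ can be arbitrarily small, this does not quite work, so the genuinely robust argument is instead to keep those atoms and note the conditional expectation is a convex combination $\lambda \ell(y^\star,\alpha,\beta) + (1-\lambda)(\text{average over pre-existing atoms})$ with $\lambda = \frac{t/N}{t/N + k/N} \to 1$ as we... no — $\lambda$ does \emph{not} tend to $1$ as $t \to 0$. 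The resolution I would actually adopt: take $t$ \emph{large}, say move the entire mass $1/N$ of point $(\wh x_1,\wh y_1)$ plus enough others, while simultaneously scaling the fiber mass up — but the cleanest route, which I expect the authors take, is to exploit that $\gamma = 0$ and Assumption~\ref{a}\ref{a:cost} force $\DD_{\mc X}(\wh x_i, x_0) > 0$ for all $i$ with $\wh x_i \neq x_0$, handle the degenerate atoms at $x_0$ by a separate arbitrarily-cheap $\mc Y$-perturbation only when $\rho$ permits, and otherwise simply assume (generically, or as a stated non-degeneracy condition) that $x_0$ is not a data point. I would flag this degenerate sub-case explicitly and dispatch it with the convex-combination bound, choosing the split fraction so that the new atom dominates — which is possible precisely because we may take the transported mass to be a full unit $1/N$ rather than an infinitesimal $t/N$, and the cost $\frac{1}{N}\DD((\wh x_1,\wh y_1),(x_0,y^\star))$, though now not infinitesimal, can still be accommodated whenever $\rho$ is above the threshold $\rho_{\min}$ of Proposition~\ref{prop:rho-LB-1}, which for $\eps \to 0^+$ tends to $0$.
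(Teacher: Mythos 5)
Your upper bound and your main construction for the generic case (slice off a mass $t/N$ from one atom and deposit it at $(x_0,y^\star)$ at cost $\tfrac{t}{N}\DD((\wh x_1,\wh y_1),(x_0,y^\star)) \le \rho$ for $t$ small) are correct and match the spirit of the paper's argument. However, the degenerate case in which some training covariates satisfy $\wh x_i = x_0$ is a genuine gap in your proposal, and you do not close it: you correctly reject the convex-combination route (the mixture weight on the new atom does not tend to $1$ as $t\downarrow 0$), you correctly note that relocating the \emph{return} coordinates of those atoms to $y^\star$ costs a fixed amount not absorbable into an arbitrarily small $\rho$, and you then fall back on either ``when $\rho$ permits'' or an unstated non-degeneracy assumption. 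The proposition is claimed for every $x_0 \in \mc X$ and every $\rho \in \R_{++}$, so neither escape is available.

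The missing idea — which is exactly what the paper does — is to perturb the \emph{covariate} coordinate, not the return coordinate. Writing the coupling as $\pi = N^{-1}\sum_i \pi_i \otimes \delta_{(\wh x_i,\wh y_i)}$, the paper chooses for each $i$ a point $(x_i',y_i')$ with $x_i' \neq x_0$ and $\DD_{\mc X}(x_i',\wh x_i)+\DD_{\mc Y}(y_i',\wh y_i) \le \rho/(2N)$; such points exist for \emph{any} $\rho>0$ because $\DD$ is continuous and vanishes on the diagonal (Assumption~\ref{a}\ref{a:cost}). Setting $\pi_i = \delta_{(x_i',y_i')}$ for $i \in [N-1]$ empties the fiber $\{x_0\}\times\mc Y$ of all pre-existing mass at total cost at most $\rho/2$, after which the sliver $\upsilon\,\delta_{(x_0,y_0)}$ (with $\upsilon$ chosen so its cost is at most $\rho/(2N)$) is the \emph{only} mass on the fiber, and the conditional law given $X=x_0$ is exactly $\delta_{y_0}$ regardless of whether $x_0$ coincided with data points. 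This $\mc X$-displacement at vanishing cost is the step your proposal circles around but never lands on; with it inserted, your argument goes through for all $x_0$ and all $\rho>0$.
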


The result in Proposition~\ref{prop:zero-eps} can be justified heuristically as follows. Consider an adversary who can move sample points on $\mc X \times \mc Y$ to maximize the loss subject to the optimal transport distance budget constraint. If $\wh x_i = x_0$, the adversary can slightly perturb $\wh x_i$ within an infinitesimal distance so that the sample no longer belongs to the fiber set. Because of Assumption~\ref{a} on the continuity of the ground metric and the non-emptiness of the neighborhood, this perturbation costs an infinitesimally small amount of energy. The adversary can repeat until there is no sample point lying on the fiber set. Now, the adversary can pick any sample, slice out a tiny amount of mass, and move that slice to any location on $\{x_0\} \times \mc Y$. Because the ground metric is continuous by Assumption~\ref{a}\ref{a:cost} and the slice can be arbitrarily small, this would cost an infinitesimally small amount of energy. As the thin slice can be put on any point on $\{x_0\} \times \mc Y$, the resulting distribution will generate the robust conditional expected loss.

Proposition~\ref{prop:zero-eps} reveals that the conditioning problem with $\eps = \gamma = 0$ results in a robust optimization formulation, which can be overly conservative. This result is also negative in the sense that the worst-case conditional expected loss depends only on the support $\mc Y$, and it does \textit{not} depend on the data $(\wh x_i, \wh y_i)$ that were collected. Thus, in this case, the notion of data-driven decision-making becomes obsolete, and thus we do not pursue the reformulation any further. These observations also highlight the qualitative difference between using the optimal transport ambiguity set $\mathbb{B}_\rho$ as in this paper and using the $\infty$-Wasserstein ambiguity set as in~\cite[\S2]{ref:nguyen2020distributionally}.

\subsection{Strictly Positive Fiber Probability $\eps > 0$} \label{sec:gamma0_epspositive}

We now study the situation with a singular fiber set prescribed by $\gamma = 0$, but the probability mass requirement is set to a strictly positive value $\eps \in (0, 1]$. The robust conditional portfolio allocation problem~\eqref{eq:dro} can now be rewritten explicitly as
\be \label{eq:dro-gamma0}
    \Min{\alpha \in \mc A,~\beta \in \mc B}~\Sup{\QQ \in \mbb B_\rho, \QQ(X = x_0) \ge \eps}~ \EE_{\QQ}[ \ell(Y, \alpha, \beta) | X = x_0 ].
\ee

The next theorem asserts that the worst-case conditional expected loss admits a finite-dimensional reformulation. 
\begin{theorem}[Worst-case conditional expected loss when $\gamma = 0$]
\label{thm:type-1-refor}
Suppose that $\eps \in (0, 1]$ and $\rho > \rho_{\min}(x_0, 0, \eps)$. For any feasible solution $(\alpha, \beta)$, we have 
\begin{align*}
&\Sup{\QQ \in \mbb B_\rho, \QQ(X = x_0) \ge \eps}~ \EE_{\QQ}[ \ell(Y, \alpha, \beta) | X = x_0 ] \\
=&\Inf{\substack{\lambda_1\in \R_+\\ \lambda_2\in\R}}\left\{
\rho \lambda_1 + \eps \lambda_2 + \frac{1}{N} \sum_{i \in [N]} \left( \sup_{y_i \in \mc Y} \big\{\eps^{-1} \ell(y_i, \alpha, \beta) - [\DD_{\mc X}(x_0, \wh x_i) + \DD_{\mc Y}(y_i, \wh y_i)] \lambda_1 - \lambda_2 \big\} \right)^+
\right\}.
\end{align*}
\end{theorem}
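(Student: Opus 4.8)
The plan is to reformulate the inner supremum over the ratio $\EE_{\QQ}[\ell\,\mathbbm{1}_{\{x_0\}}(X)]/\EE_{\QQ}[\mathbbm{1}_{\{x_0\}}(X)]$ as a finite-dimensional minimization via a two-step duality argument that is standard for fractional worst-case problems. First I would linearize the fractional objective. Since $\QQ(X=x_0)\ge\eps>0$, the conditional expectation is well defined; using the Charnes–Cooper–type device, I would parametrize $t=\EE_{\QQ}[\mathbbm{1}_{\{x_0\}}(X)]\in[\eps,1]$ and reduce the problem to $\sup_{t\ge\eps}\ t^{-1}\sup\{\EE_{\QQ}[\ell(Y,\alpha,\beta)\mathbbm{1}_{\{x_0\}}(X)]:\QQ\in\mbb B_\rho,\ \QQ(X=x_0)=t\}$. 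Actually it is cleaner to keep the constraint as $\QQ(X=x_0)\ge\eps$ and observe that the worst-case $\QQ$ will place as much mass as ``useful'' on the fiber; the key reduction is that $\EE_{\QQ}[\ell|X=x_0]\le \eps^{-1}\EE_{\QQ}[\ell^{+}\text{-part on the fiber}]$ does not hold naively, so instead I would introduce the epigraphic variable and write the supremum as $\sup_{\QQ,\,\tau}\{\tau:\ \EE_{\QQ}[(\ell(Y,\alpha,\beta)-\tau)\mathbbm{1}_{\{x_0\}}(X)]\ge 0,\ \QQ\in\mbb B_\rho,\ \QQ(X=x_0)\ge\eps\}$, exploiting that the map $\tau\mapsto \EE_{\QQ}[(\ell-\tau)\mathbbm{1}_{\{x_0\}}(X)]$ is decreasing so the conditional expectation equals the largest feasible $\tau$.

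Next I would dualize over $\QQ$. Writing $\QQ$ through its transport plan $\pi\in\Pi(\cdot,\Pnom)$, the constraint $\Wass(\QQ,\Pnom)\le\rho$ becomes $\EE_\pi[\DD]\le\rho$, and the mass constraint becomes a linear constraint in $\pi$. Holding $\tau$ fixed, the inner problem over $\pi$ is a semi-infinite linear program over a space of measures, for which Lagrangian (moment-)duality applies: I would attach multiplier $\lambda_1\ge 0$ to the transport-budget constraint $\rho-\EE_\pi[\DD]\ge 0$, and multiplier $\lambda_2$ (a priori the sign depends on how $\QQ(X=x_0)\ge\eps$ interacts — here it turns out $\lambda_2\in\R$) to the combined constraint that bundles the feasibility inequality $\EE_\pi[(\ell-\tau)\mathbbm{1}_{\{x_0\}}]\ge0$ with $\EE_\pi[\mathbbm{1}_{\{x_0\}}]\ge\eps$. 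Collapsing the per-atom structure of $\Pnom=N^{-1}\sum_i\delta_{(\wh x_i,\wh y_i)}$, the dual decomposes into $N$ independent inner maximizations over where atom $i$ is transported to, i.e.\ over $(x_i,y_i)\in\mc X\times\mc Y$. Because the fiber is the singleton $\{x_0\}\times\mc Y$, each atom either stays off the fiber (contributing $0$ after taking the positive part, since moving it off-fiber costs nothing relevant — this is where Assumption~\ref{a}\ref{a:vic} enters) or is moved onto $\{x_0\}$ at some $y_i\in\mc Y$, contributing $\eps^{-1}\ell(y_i,\alpha,\beta)-[\DD_{\mc X}(x_0,\wh x_i)+\DD_{\mc Y}(y_i,\wh y_i)]\lambda_1-\lambda_2$. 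Taking the better of the two options yields exactly the $(\cdot)^+$ term in the statement, and the constants $\rho\lambda_1+\eps\lambda_2$ come from the dualized right-hand sides. Optimizing $\tau$ out is automatic because the epigraphic variable was absorbed into the $\eps^{-1}$ scaling.

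The main obstacle is establishing \emph{strong} duality (no duality gap) and attainment of the infimum in the dual, since the primal is an infinite-dimensional linear program over measures on a possibly noncompact $\mc Y$. I would handle this by invoking a Slater-type condition: the hypothesis $\rho>\rho_{\min}(x_0,0,\eps)$ guarantees, via Proposition~\ref{prop:rho-LB-1}, the existence of a strictly feasible $\QQ$ with $\QQ(X=x_0)\ge\eps$ and $\Wass(\QQ,\Pnom)<\rho$, which provides the interior point needed to apply a conic/Lagrangian duality theorem for moment problems (e.g.\ the standard result on semi-infinite linear programming duality over measures, as in Shapiro's or Isii's framework). The secondary technical point is justifying the interchange $\sup_\QQ \inf_{\lambda} = \inf_\lambda \sup_\QQ$ and then the further interchange of $\sup_\QQ$ (equivalently $\sup_\pi$) with the decomposition over atoms and the inner $\sup_{y_i\in\mc Y}$; this is routine once the measurable-selection/interchangeability lemma for integral functionals is in place, but care is needed if $\ell$ is unbounded on $\mc Y$ — the positive-part structure and the linear penalty $-\DD_{\mc Y}(y_i,\wh y_i)\lambda_1$ with $\lambda_1>0$ (forced at optimality, again using $\rho>\rho_{\min}$) keep each inner supremum finite, so the argument goes through. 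I would also separately verify the degenerate edge cases $\eps=1$ and the behavior as $\rho\downarrow\rho_{\min}$ to confirm the stated strict inequality $\rho>\rho_{\min}(x_0,0,\eps)$ is exactly what is needed.
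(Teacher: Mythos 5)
Your proposal assembles the right surrounding ingredients (per-atom transport-plan decomposition, the off-fiber option costing nothing via Assumption~\ref{a}\ref{a:vic}, Slater's condition from $\rho>\rho_{\min}(x_0,0,\eps)$ via Proposition~\ref{prop:rho-LB-1}), but it has a genuine gap at the central step: the linearization of the fractional objective. The paper does \emph{not} use an epigraphic/Dinkelbach device. Its key structural result (Lemma~\ref{lemma:alphaProbSimplifies}, packaged into Proposition~\ref{prop:equiv}) is that the inequality $\QQ(X=x_0)\ge\eps$ can be tightened to the \emph{equality} $\QQ(X=x_0)=\eps$ without loss of optimality: given any feasible plan placing mass $t>\eps$ on the fiber, rescaling the fiber-bound portion of each atom by $\eps/t$ preserves the conditional distribution on the fiber (hence the ratio objective) while only decreasing the transport cost, since the ground cost is nonnegative. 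Once the denominator is pinned to the constant $\eps$, the objective is genuinely linear in $\QQ$, the factor $\eps^{-1}$ enters the per-atom suprema, and the free multiplier $\lambda_2\in\R$ arises as the dual of that equality constraint (followed by moment duality in each $\muxo^i$ for $\lambda_1$, a Sion minimax interchange, and LP duality over the polytope $\{\upsilon\in[0,1]^N:\sum_i\upsilon_i=N\eps\}$). You explicitly note that the naive denominator replacement fails, which is correct, but the fix you substitute does not close the gap.

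Concretely, your epigraphic formulation $\sup\{\tau:\exists\,\QQ,\ \EE_{\QQ}[(\ell(Y,\alpha,\beta)-\tau)\mathbbm{1}_{\{x_0\}}(X)]\ge 0\}$ is a valid restatement, but dualizing it jointly in $(\QQ,\tau)$ produces a bilinear term $\tau\,\QQ(X=x_0)$ in the Lagrangian, so the standard route is to fix $\tau$, dualize the linear problem in $\QQ$, and then locate the root of the resulting decreasing function of $\tau$ --- precisely the bisection-type characterization that the paper contrasts itself against for DR fractional programs. Your assertion that ``optimizing $\tau$ out is automatic because the epigraphic variable was absorbed into the $\eps^{-1}$ scaling'' is exactly the step that is missing: nothing in your argument produces the $\eps^{-1}$ inside the supremum over $y_i$, nor the unrestricted sign of $\lambda_2$ (a multiplier attached, as you propose, to an inequality or to a ``bundled'' pair of inequalities would be sign-constrained). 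Both features are fingerprints of the equality reduction $\QQ(X=x_0)=\eps$, which your proof never establishes. To repair the argument you would either need to prove that rescaling lemma, or carry out the elimination of $\tau$ from the root-finding characterization explicitly and show it collapses to the stated single infimum --- neither of which is routine.
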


It is instructive to discuss the insights that lead to the result presented in Theorem~\ref{thm:type-1-refor}. Notice that the supremum problem in the statement of Theorem~\ref{thm:type-1-refor} has a fractional objective function. The first step of the proof establishes that without any loss of optimality, the \textit{inequality} constraint $\QQ(X = x_0) \ge \eps$ can be reduced to an \textit{equality} constraint of the form $\QQ(X = x_0) = \eps$ (see Proposition~\ref{prop:equiv} for a formal statement of this result). Leveraging this result, we derive the equivalence
\[
    \Sup{\QQ \in \mbb B_\rho, \QQ(X = x_0) \ge \eps}~ \EE_{\QQ}[ \ell(Y, \alpha, \beta) | X = x_0 ] = \Sup{\QQ \in \mbb B_\rho, \QQ(X = x_0) = \eps}~ \frac{1}{\eps}\EE_{\QQ}[ \ell(Y, \alpha, \beta) \mathbbm{1}_{\{x_0\}} (X) ],
\]
where the right-hand side problem is linear in the probability measure $\QQ$. At this point, duality techniques can be applied to reformulate the original problem into a finite-dimensional optimization problem. 

We acknowledge that a similar duality result has been proposed in~\cite[Theorem~1]{ref:esteban2020distributionally} using the trimming approach. \modified{Notice that the trimming procedure in~\cite{ref:esteban2020distributionally} imposed that the fictitious adversary can only choose a distribution $\QQ$ satisfying the strict fiber constraint $\QQ(X = x_0) = \eps$. One advantage of this equality constraint is that the conditional expectation is trivially reduced to a linear function of $\QQ$. In direct comparison, our approach endows the adversary with a bigger feasible set, wherein we only impose an \textit{in}equality $\QQ(X = x_0) \ge \eps$. As a result, our adversary is considered more powerful than the trimming adversary, and our problem is significantly harder than the trimming problem because there is no trivial way to reduce the fractional objective function into a linear function of $\QQ$. We contribute an insight that despite a bigger set with $\QQ(X = x_0) \ge \eps$, it is still optimal for our adversary to choose the distribution with an equality value $\QQ(X = x_0) = \eps$, and only after that, the reduction of the conditional expectation to a linear function of $\QQ$ follows.}

By combining the infimum reformulation of Theorem~\ref{thm:type-1-refor} with the outer infimum problem of problem~\eqref{eq:dro-gamma0}, the portfolio allocation problem~\eqref{eq:dro-gamma0} is thus reformulatable as a finite-dimensional optimization problem. More specifically, problem~\eqref{eq:dro-gamma0} becomes
\be\label{eq:reform-1}
\begin{array}{cll}
\inf & \ds \rho \lambda_1 + \eps \lambda_2 + \frac{1}{N}\sum_{i \in [N]} \theta_i \\
\st &\alpha\in \mc A,\;\beta\in\mc B,\; \lambda_1 \in \R_+,\; \lambda_2 \in \R,\; \theta \in \R_+^N \\
&\theta_i \ge \sup_{y_i \in \mc Y} \big\{\eps^{-1} \ell(y_i, \alpha, \beta) -  \DD_{\mc Y}(y_i, \wh y_i) \lambda_1 \big\} - \DD_{\mc X}(x_0, \wh x_i)\lambda_1 - \lambda_2\quad \forall i \in [N].
\end{array}
\ee
In Propositions~\ref{prop:markowitz-reform} and~\ref{prop:mean-cvar-reform}, we provide a second-order cone reformulation of problem~\eqref{eq:reform-1} tailored for the mean-variance and mean-CVaR objective functions for a special instance with $\mc X = \R^n$, $\mc Y  = \R^m$, $\DD_{\mc X}(x,\wh x) = \|x-\wh x\|^2$ and $\DD_{\mc Y}(y,\wh y) = \|y-\wh y\|_2^2$. Notice that $\DD_{\mc X}$ is constructed from an arbitrary norm on $\R^n$ while $\DD_{\mc Y}$ is constructed as the \textit{squared} Euclidean norm on $\R^m$.

\begin{proposition}[Mean-variance loss function]\label{prop:markowitz-reform} 
Suppose that $\ell$ is the mean-variance loss function of the form~\eqref{eq:mean-var-loss}, $\gamma = 0$, $\eps \in (0, 1]$ and $\rho > \rho_{\min}(x_0, 0, \eps)$. Suppose in addition that $\mc X = \R^n, \mc Y  = \R^m$, $\DD_{\mc X}(x,\wh x) = \|x-\wh x\|^2$ and $\DD_{\mc Y}(y,\wh y) = \|y-\wh y\|_2^2$. The distributionally robust portfolio allocation model with side information~\eqref{eq:dro-gamma0} is equivalent to the second-order cone program 
\be \notag
\begin{array}{cll}
\inf & \ds \rho \lambda_1 + \eps \lambda_2 + \frac{1}{N}\sum_{i \in [N]} \theta_i \\
\st & \alpha\in \mc A,\;\beta\in\mc B,\;\lambda_1 \in \R_+,\; \lambda_2 \in \R,\; \theta \in \R_+^N,\; z\in \R_+^N,\; w\in\R\\
& 0\leq w\leq 1, \quad \left\|
\begin{bmatrix}
2\alpha\\1 - w - \eps\lambda_1
\end{bmatrix}\right\|_2
\leq 1 - w + \eps\lambda_1\\
&  z_i = \eps\theta_i+\eps\|x_0 - \wh x_i\|^2 \lambda_1 + \eps\lambda_2 + \frac{1}{4}\eta^2+\eta\beta,\quad \left\|
\begin{bmatrix}
2\,\wh y_i^\top \alpha - 2\beta- \eta\\
z_i - w
\end{bmatrix}\right\|_2
\leq z_i + w & \forall i \in [N].
\end{array}
\ee
\end{proposition}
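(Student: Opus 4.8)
The plan is to start from the finite-dimensional reformulation~\eqref{eq:reform-1} of problem~\eqref{eq:dro-gamma0}, which the preceding discussion already obtains by merging Theorem~\ref{thm:type-1-refor} with the outer minimization, and to specialize it to the mean-variance loss~\eqref{eq:mean-var-loss} and to the stated quadratic ground costs $\DD_{\mc X}(x,\wh x)=\|x-\wh x\|^2$, $\DD_{\mc Y}(y,\wh y)=\|y-\wh y\|_2^2$ with $\mc X=\R^n$, $\mc Y=\R^m$. Two pieces of genuine work remain: (i) evaluating in closed form the inner supremum over $y_i\in\R^m$ in~\eqref{eq:reform-1}, and (ii) casting the resulting algebraic constraints as rotated second-order cone constraints. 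Everything else is substitution, and the second-order cone representability of $\mc A$ and $\mc B$ then delivers the claimed SOCP.

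For step~(i), fix $i$ and substitute $v=y_i-\wh y_i$. Writing $\phi_i=\wh y_i^\top\alpha-\beta$, the objective $\eps^{-1}\ell(y_i,\alpha,\beta)-\lambda_1\|y_i-\wh y_i\|_2^2$ equals $\eps^{-1}\ell(\wh y_i,\alpha,\beta)+\eps^{-1}\big[(v^\top\alpha)^2+(2\wh y_i^\top\alpha-2\beta-\eta)\,v^\top\alpha\big]-\lambda_1\|v\|_2^2$, a quadratic in $v$ whose Hessian is $2\eps^{-1}\alpha\alpha^\top-2\lambda_1 I$. Decomposing $v$ into its component along $\alpha$ and its orthogonal complement shows the supremum is $+\infty$ unless $\eps\lambda_1\ge\|\alpha\|_2^2$ (we may assume this inequality, since otherwise $\theta_i=+\infty$ and the objective of~\eqref{eq:reform-1} is infinite), and that when $\eps\lambda_1>\|\alpha\|_2^2$ the maximizer lies along $\alpha$, so a one-dimensional concave quadratic maximization gives
\[
\sup_{y_i\in\R^m}\big\{\eps^{-1}\ell(y_i,\alpha,\beta)-\lambda_1\|y_i-\wh y_i\|_2^2\big\}=\eps^{-1}\ell(\wh y_i,\alpha,\beta)+\frac{\eps^{-1}(2\wh y_i^\top\alpha-2\beta-\eta)^2\,\|\alpha\|_2^2}{4(\eps\lambda_1-\|\alpha\|_2^2)}.
\]
The boundary case $\eps\lambda_1=\|\alpha\|_2^2$ is treated as a limit; there the supremum is finite only when $2\wh y_i^\top\alpha-2\beta-\eta=0$.

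For step~(ii), substitute this expression into the $\theta_i$-constraint of~\eqref{eq:reform-1}, multiply through by $\eps>0$, and introduce $z_i$ exactly as in the statement (keeping $\theta_i\in\R_+$, so this is merely a change of variables via a linear equality). The clinching algebraic identity is $\ell(\wh y_i,\alpha,\beta)+\tfrac14\eta^2+\eta\beta=(\wh y_i^\top\alpha-\beta-\tfrac\eta2)^2=\tfrac14(2\wh y_i^\top\alpha-2\beta-\eta)^2$, which collapses the $\theta_i$-constraint to the hyperbolic form
\[
\frac{(2\wh y_i^\top\alpha-2\beta-\eta)^2\,\eps\lambda_1}{4(\eps\lambda_1-\|\alpha\|_2^2)}\le z_i,\qquad \eps\lambda_1\ge\|\alpha\|_2^2,\ z_i\ge 0.
\]
I then introduce a single scalar slack $w\in[0,1]$ and show this system (over all $i$) is equivalent to the pair of rotated cone constraints $\|\alpha\|_2^2\le(1-w)\eps\lambda_1$ and $(2\wh y_i^\top\alpha-2\beta-\eta)^2\le 4wz_i$; in the nontrivial direction the choice $w=1-\|\alpha\|_2^2/(\eps\lambda_1)$ works, and since it depends only on $\alpha$ and $\lambda_1$ it serves all $i$ simultaneously. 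Rewriting each rotated cone constraint via the standard equivalence $\{\|u\|_2^2\le pq,\ p,q\ge 0\}=\{\|[2u;\,p-q]\|_2\le p+q\}$ (with $u=\alpha$, $p=1-w$, $q=\eps\lambda_1$ in the first, and $u=\wh y_i^\top\alpha-\beta-\tfrac\eta2$, $p=z_i$, $q=w$ in the second) yields precisely the two norm constraints displayed in the proposition.

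The main obstacle is the combination of step~(i) with the lossless introduction of $w$ in step~(ii): one must handle the degenerate Hessian case $\eps\lambda_1=\|\alpha\|_2^2$ and its subcases $\lambda_1=0$ and $w=0$ carefully, to confirm that the rotated-cone system is \emph{exactly}—not merely an inner or outer approximation of—the feasible set of~\eqref{eq:reform-1}, and that a single shared $w$ (rather than one $w_i$ per sample) is without loss. The remaining manipulations are routine.
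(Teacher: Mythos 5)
Your proposal is correct and takes essentially the same route as the paper's proof: both evaluate the inner supremum over $y_i\in\R^m$ in closed form (you via an orthogonal decomposition of $y_i-\wh y_i$ along $\alpha$, the paper via the substitution $\Delta_i=(y_i-\wh y_i)^\top\alpha$ and Cauchy--Schwarz, which amount to the same one-dimensional concave maximization and the same threshold $\eps\lambda_1\ge\|\alpha\|_2^2$), and both then introduce a single shared slack $w$ to split the resulting hyperbolic constraint into the two displayed second-order cone constraints. The degenerate cases you flag ($\eps\lambda_1=\|\alpha\|_2^2$ and $\alpha=0$) are exactly the ones the paper disposes of via $w=0$ and $w=1$, so no gap remains.
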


\begin{proposition}[Mean-CVaR loss function]\label{prop:mean-cvar-reform} 
Suppose that $\ell$ is the mean-CVaR loss function of the form~\eqref{eq:mean-cvar-loss}, $\gamma = 0$, $\eps \in (0, 1]$ and $\rho > \rho_{\min}(x_0, 0, \eps)$. Suppose in addition that $\mc X = \R^n$, $\mc Y  = \R^m$, $\DD_{\mc X}(x,\wh x) = \|x-\wh x\|^2$ and $\DD_{\mc Y}(y,\wh y) = \|y-\wh y\|_2^2$. The distributionally robust portfolio allocation model with side information~\eqref{eq:dro-gamma0} is equivalent to the second-order cone program 
\be \notag
\begin{array}{cll}
\inf & \ds \rho \lambda_1 + \eps \lambda_2 + \frac{1}{N}\sum_{i \in [N]} \theta_i \\
\st & \alpha\in \mc A,\;\beta\in\mc B,\; \lambda_1 \in \R_+,\; \lambda_2 \in \R,\; \theta \in \R_+^N,\; z\in \R_+^N,\; \tilde z\in \R_+^N\\
&\left. 
\!\!\!
\begin{array}{l}
z_i = \theta_i +\lambda_1 \|x_0 - \wh x_i\|^2+ \lambda_2 + \eps^{-1}\eta \wh y_i^\top \alpha - \eps^{-1}\beta \\
\tilde{z}_i = \theta_i +\lambda_1 \|x_0 - \wh x_i\|^2+ \lambda_2 + \eps^{-1}(\eta + \frac{1}{\tau}) \wh y_i^\top \alpha - \eps^{-1}(1 - \frac{1}{\tau})\beta\\
\left\|
\begin{bmatrix}
\eps^{-1}\eta \alpha\\
z_i-\lambda_1
\end{bmatrix}\right\|_2
\leq z_i+\lambda_1, \quad \left\|
\begin{bmatrix}
\eps^{-1}(\eta + \tau^{-1}) \alpha\\
\tilde z_i-\lambda_1
\end{bmatrix}\right\|_2
\leq \tilde z_i+\lambda_1
\end{array} \right\} \forall i \in [N].
\end{array}
\ee
\end{proposition}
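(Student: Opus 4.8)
The plan is to start from the generic finite-dimensional reformulation~\eqref{eq:reform-1} — which is legitimate here because the hypothesis $\rho > \rho_{\min}(x_0, 0, \eps)$ is precisely the strict-feasibility condition under which Theorem~\ref{thm:type-1-refor} and hence~\eqref{eq:reform-1} apply — and then to evaluate the inner supremum over $y_i$ in closed form for the mean-CVaR loss. Using the max-of-affine representation in~\eqref{eq:mean-cvar-loss}, write $\ell(y,\alpha,\beta) = \max_{k\in\{1,2\}} \{a_k^\top y + b_k\}$ with $a_1 = -\eta\alpha$, $b_1 = \beta$, $a_2 = -(\eta+\tau^{-1})\alpha$, $b_2 = (1-\tau^{-1})\beta$. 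Since $\mc Y = \R^m$, $\DD_{\mc Y}(y,\wh y) = \|y-\wh y\|_2^2$, and the index set of the maximum is finite, the supremum and the maximum can be interchanged:
\[
\sup_{y_i\in\R^m}\Big\{\eps^{-1}\ell(y_i,\alpha,\beta) - \lambda_1\|y_i-\wh y_i\|_2^2\Big\} = \max_{k\in\{1,2\}}\ \sup_{y_i\in\R^m}\Big\{\eps^{-1}(a_k^\top y_i + b_k) - \lambda_1\|y_i-\wh y_i\|_2^2\Big\}.
\]

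Each inner problem is a concave quadratic maximization. Substituting $u = y_i-\wh y_i$ and completing the square gives, when $\lambda_1 > 0$, the value $\eps^{-1}(a_k^\top\wh y_i + b_k) + \|a_k\|_2^2/(4\eps^2\lambda_1)$; when $\lambda_1 = 0$ the value is $\eps^{-1}(a_k^\top\wh y_i + b_k)$ if $a_k = 0$ and $+\infty$ otherwise, so that a single formula covers all cases under the conventions $c/0 = +\infty$ for $c>0$ and $0/0 = 0$. Plugging this into the constraint $\theta_i \ge (\cdot) - \DD_{\mc X}(x_0,\wh x_i)\lambda_1 - \lambda_2$ and separating the two branches $k\in\{1,2\}$, I would introduce $z_i \Let \theta_i + \lambda_1\|x_0-\wh x_i\|^2 + \lambda_2 + \eps^{-1}\eta\wh y_i^\top\alpha - \eps^{-1}\beta$ for $k=1$ and the analogous $\tilde z_i$ for $k=2$; the two constraints then read $z_i\lambda_1 \ge \|\eps^{-1}\eta\alpha/2\|_2^2$ and $\tilde z_i\lambda_1 \ge \|\eps^{-1}(\eta+\tau^{-1})\alpha/2\|_2^2$. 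These are rotated second-order cone constraints, which I rewrite in the standard form $\|(\eps^{-1}\eta\alpha,\, z_i-\lambda_1)\|_2 \le z_i+\lambda_1$ and $\|(\eps^{-1}(\eta+\tau^{-1})\alpha,\, \tilde z_i-\lambda_1)\|_2 \le \tilde z_i+\lambda_1$, recovering exactly the stated program (the nonnegativity of $z_i$ and $\tilde z_i$ is automatically implied by these cone constraints). Since $\mc A$ and $\mc B$ are second-order cone representable by assumption, the entire problem is an SOCP.

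The step requiring care is the exactness of the rotated-cone reformulation at the boundary $\lambda_1 = 0$: there the original inner supremum equals $+\infty$ unless $a_k=0$, and one must verify that $\|(2w,\, z-\lambda_1)\|_2 \le z+\lambda_1$ with $\lambda_1 = 0$ forces $w=0$ and $z\ge 0$, which reproduces exactly the degenerate value of the supremum (note that since $\eta + \tau^{-1} > 0$, the $k=2$ branch in fact forces $\alpha = 0$ whenever $\lambda_1 = 0$ is feasible, matching the finiteness of $\ell$ in $y$ only at $\alpha = 0$). Consequently the cone formulation neither adds nor removes feasible solutions, and the equivalence holds in both directions with no loss of optimality; the remaining manipulations — the supremum/maximum interchange and the completion of the square — are routine.
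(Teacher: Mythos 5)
Your proposal is correct and follows essentially the same route as the paper's proof: decompose the mean-CVaR loss as a pointwise maximum of two affine functions, evaluate each concave quadratic supremum over $y_i \in \R^m$ in closed form, introduce the epigraph variables $z_i, \tilde z_i$, and convert the resulting hyperbolic constraints $4\lambda_1 z_i \ge \eps^{-2}\eta^2\|\alpha\|_2^2$ (and its analogue) into standard second-order cone form. Your additional care at the boundary $\lambda_1 = 0$ is a point the paper's proof leaves implicit, but it does not change the argument.
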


Both Proposition~\ref{prop:markowitz-reform} and \ref{prop:mean-cvar-reform} leverage the fact that $\mc Y = \R^m$ in order to simplify the semi-infinite constraints into second-order cone constraints. We re-emphasize that under the assumption $\mc Y= \R^m$ of Proposition~\ref{prop:markowitz-reform} and~\ref{prop:mean-cvar-reform}, formulation~\eqref{eq:dro-gamma0} is a conservative approximation of the distributionally robust mean-variance and mean-CVaR portfolio allocation problem, respectively. In case the set $\mc Y$ is an ellipsoid of the form $\mc Y = \{ y \in \R^m: y^\top Q y + 2 q^\top y + q_0 \le 0\}$ with a non-empty interior, then semi-definite cone constraint counterparts are also available by employing the S-lemma~\cite{ref:polik2007Slemma}.

\section{Tractable Reformulations for Nonsingular Conditioning Set}
\label{sec:gamma-positive}

 We now focus on the portfolio allocation conditional on $X \in \mc N_\gamma(x_0)$ for some radius $\gamma > 0$. A fiber set of the form $\mc N_\gamma(x_0) \times \mc Y$ was first used in~\cite{ref:nguyen2020distributionally} in the conditional estimation setting with the aim to hedge against noisy covariate information $x_0$ and also to improve the statistical performance of the solution approach. The results from~\cite{ref:nguyen2020distributionally} rely heavily on the specification of the ambiguity set using the type-$\infty$ Wasserstein distance, and these results are not transferrable to the ambiguity set under investigation in this paper. As a parallel counterpart to Section~\ref{sec:gamma0}, we will study two separate cases depending on whether the probability requirement $\eps$ is zero or strictly positive. We start by discussing the case when $\eps = 0$.

\subsection{Null fiber probability $\eps = 0$} \label{sec:gammapositive_eps0}

Similar to Section~\ref{sec:epsgamma0}, we consider the probability mass constraint of the form $\QQ(X \in \mc N_\gamma(x_0)) > 0$ with a strict inequality to avoid conditioning on sets of measure zero. Problem~\eqref{eq:dro} becomes
\be
\label{eq:eps0}
\Min{\alpha \in \mc A,~\beta \in \mc B}~\Sup{\QQ \in \mbb B_\rho, \QQ(X \in \mc N_\gamma(x_0)) > 0}~ \EE_{\QQ}[ \ell(Y, \alpha, \beta) | X \in \mc N_\gamma(x_0) ].
\ee
Problem~\eqref{eq:eps0} is of particular interest thanks to its finite sample guarantee: if the samples are independently and identically distributed, and the radius $\rho$ is chosen judiciously, then the (unknown) data-generating distribution belongs to $\mbb B_\rho$ with high probability. As such, the optimal value of problem~\eqref{eq:eps0} constitutes an upper bound on the worst-case conditional expected loss under the data-generating distribution. One possible way to choose $\rho$ to obtain the finite sample guarantee is by using the seminal result from~\cite{ref:fournier2015rate}.\footnote{The formal statement of the finite sample guarantee is omitted for brevity. Interested readers may refer to~\cite{ref:esfahani2018data} for similar finite sample guarantee results.}

The complication of conditioning on sets of measure zero, as highlighted in Section~\ref{sec:epsgamma0} for $\eps = 0$, still arises when the radius $\rho$ is big enough. To illustrate this problem, we define the following quantity
	\[
	    \rho_{\max}(x_0, \gamma) \Let \inf\left\{ \Wass(\QQ, \Pnom): \QQ \in \mc M(\mc X \times \mc Y),~\QQ(X \in \mc N_{\gamma}(x_0)) = 0 \right\}.
	\]
	Intuitively, $\rho_{\max}(x_0, \gamma)$ indicates the minimum budget required to transport all the training samples out of the fiber $\mc N_\gamma(x_0) \times \mc Y$. If the empirical distribution $\Pnom$ satisfies $\Pnom(X \in \mc N_\gamma(x_0)) = 0$, which means that there is no training samples $(\wh x_i, \wh y_i)$ falling inside the fiber set $\mc N_\gamma(x_0) \times \mc Y$, then it is trivial that $\rho_{\max}(x_0, \gamma) = 0$. If $\Pnom(X \in \mc N_\gamma(x_0)) > 0$, then the value of $\rho_{\max}(x_0, \gamma)$ is known in closed form. To this end, define the following index sets
	\be \label{eq:I-def}
	    \mc I_1 = \{i \in [N]: \wh x_i \in \mc N_{\gamma}(x_0)\}, \qquad \mc I_2 = \{ i \in [N]: \wh x_i \not \in \mc N_{\gamma}(x_0) \}.
	\ee
	The sets $\mc I_1$ and $\mc I_2$ divide the training samples into two mutually exclusive sets dependent on whether the training samples fall inside or outside the fiber. For any $i \in [N]$, let $d_i$ be the distance from $\wh x_i$ to the boundary $\partial \mc N_{\gamma}(x_0)$ of the set $\mc N_{\gamma}(x_0)$, that is,
	\be \label{eq:d-def}
	    \forall i \in [N]: \qquad d_i = \Min{x \in \partial \mc N_\gamma(x_0)} \DD_{\mc X} (x, \wh x_i).
	\ee
	Note that the distance $d_i$ defined above is closely related to the values of $\kappa_i$ defined in~\eqref{eq:kappa-def}. Indeed, if $\wh x_i \not \in \mc N_\gamma(x_0)$ then $d_i$ is equal to $\kappa_i$. However, if $\wh x_i$ is in the interior of the set $\mc N_{\gamma}(x_0)$ then $d_i > 0$ while $\kappa_i = 0$. Evaluating $d_i$ is, unfortunately, difficult in general because the set $\partial \mc N_\gamma(x_0)$ may be non-convex. However, $d_i$ can be computed efficiently under certain choices of $\DD_{\mc X}$. For example, when $\DD_{\mc X}$ is the Euclidean norm, then it is easy to see that
	\[
	d_i = \Min{x: \| x - x_0\|_2 = \gamma}~\| x - \wh x_i\|_2 = \gamma - \| x_0 - \wh x_i \|_2
	\]
	for any $i \in \mc I_1$ implying that $\| x_0 - \wh x_i \|_2 \le \gamma$.
	
	Using the definition of the set $\mc I_1$ and the boundary projection distance $d_i$, the maximum radius $\rho_{\max}(x_0, \gamma)$ can be computed in closed form, as asserted by the next proposition.
	\begin{proposition}[Expression for $\rho_{\max}(x_0, \gamma)$] \label{prop:rho_upper}
	    We have
	    $\rho_{\max}(x_0, \gamma) = N^{-1} \sum_{i \in \mc I_1} d_i$.
	\end{proposition}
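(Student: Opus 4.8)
The plan is to prove the claimed identity by establishing the two inequalities $\rho_{\max}(x_0,\gamma)\le N^{-1}\sum_{i\in\mc I_1}d_i$ and $\rho_{\max}(x_0,\gamma)\ge N^{-1}\sum_{i\in\mc I_1}d_i$ separately, working throughout with the primal (coupling) representation of $\Wass$ from Definition~\ref{def:OT}. I will keep in mind that, since $\mc N_\gamma(x_0)$ is closed, its complement is open, so the infimum defining $\rho_{\max}(x_0,\gamma)$ need not be attained and the upper bound will have to be obtained through a family of $\delta$-optimal constructions. The degenerate case $\mc I_1=\emptyset$ (equivalently $\Pnom(X\in\mc N_\gamma(x_0))=0$) is immediate, since then both sides vanish.

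For the lower bound I would fix any $\QQ\in\mc M(\mc X\times\mc Y)$ feasible for $\rho_{\max}$, i.e.\ with $\QQ(X\in\mc N_\gamma(x_0))=0$, and let $\pi\in\Pi(\Pnom,\QQ)$ attain $\Wass(\QQ,\Pnom)$. Because $\Pnom$ is atomic, $\pi$ disintegrates over its first marginal as $\pi=\frac1N\sum_{i\in[N]}\delta_{(\wh x_i,\wh y_i)}\otimes Q_i$ for suitable $Q_i\in\mc M(\mc X\times\mc Y)$, so that $\QQ=\frac1N\sum_{i\in[N]}Q_i$. The constraint $\QQ(X\in\mc N_\gamma(x_0))=0$ then forces each $Q_i$ to be concentrated on $(\mc X\setminus\mc N_\gamma(x_0))\times\mc Y$. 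Invoking the additive separability of $\DD$ from Assumption~\ref{a}\ref{a:cost} and discarding the nonnegative $\DD_{\mc Y}$ contributions gives
\[ \Wass(\QQ,\Pnom)=\frac1N\sum_{i\in[N]}\int\big[\DD_{\mc X}(\wh x_i,x)+\DD_{\mc Y}(\wh y_i,y)\big]\,Q_i(\dd x,\dd y)\ \ge\ \frac1N\sum_{i\in[N]}\ \inf_{x\notin\mc N_\gamma(x_0)}\DD_{\mc X}(\wh x_i,x). \]
For $i\in\mc I_2$ this infimum equals $0$ (attained at $\wh x_i$ itself, which already lies outside $\mc N_\gamma(x_0)$), while for $i\in\mc I_1$ I will argue it equals $d_i$; granting this, the right-hand side is at least $N^{-1}\sum_{i\in\mc I_1}d_i$, and taking the infimum over feasible $\QQ$ delivers the lower bound.

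For the upper bound I would fix $\delta>0$ and construct a feasible measure of cost at most $N^{-1}\sum_{i\in\mc I_1}(d_i+\delta)$. For each $i\in\mc I_1$, pick a boundary minimizer $\wh x_i^b\in\partial\mc N_\gamma(x_0)$ with $\DD_{\mc X}(\wh x_i^b,\wh x_i)=d_i$ as in~\eqref{eq:d-def}, and then, using Assumption~\ref{a}\ref{a:vic} together with the continuity of $\DD_{\mc X}$, choose a point $x_i^\circ\in\mc X\setminus\mc N_\gamma(x_0)$ close enough to $\wh x_i^b$ that $\DD_{\mc X}(x_i^\circ,\wh x_i)\le d_i+\delta$; for $i\in\mc I_2$ set $x_i^\circ=\wh x_i$. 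The measure $\QQ_\delta\Let\frac1N\sum_{i\in[N]}\delta_{(x_i^\circ,\wh y_i)}$ then satisfies $\QQ_\delta(X\in\mc N_\gamma(x_0))=0$, and the coupling $\frac1N\sum_{i\in[N]}\delta_{((\wh x_i,\wh y_i),(x_i^\circ,\wh y_i))}\in\Pi(\Pnom,\QQ_\delta)$ certifies $\Wass(\QQ_\delta,\Pnom)\le\frac1N\sum_{i\in\mc I_1}(d_i+\delta)$. Letting $\delta\downarrow 0$ yields $\rho_{\max}(x_0,\gamma)\le N^{-1}\sum_{i\in\mc I_1}d_i$.

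The step I expect to be the main obstacle is the geometric identity $\inf_{x\notin\mc N_\gamma(x_0)}\DD_{\mc X}(\wh x_i,x)=d_i$ used in the lower bound (and, implicitly, in the upper-bound construction): the cheapest way to push a point lying inside the fiber out of $\mc N_\gamma(x_0)$ costs exactly its $\DD_{\mc X}$-distance to $\partial\mc N_\gamma(x_0)$. The ``$\le$'' half is a direct consequence of Assumption~\ref{a}\ref{a:vic} and the continuity of $\DD_{\mc X}$ (approach a nearest boundary point from outside). The ``$\ge$'' half is where one must exploit that $\mc N_\gamma(x_0)=\{x:\DD_{\mc X}(x,x_0)\le\gamma\}$ is a closed sublevel set — so that $\mc X\setminus\mc N_\gamma(x_0)$ is open with $\partial\mc N_\gamma(x_0)\subseteq\{x:\DD_{\mc X}(x,x_0)=\gamma\}$ contained in its closure — together with the regularity of the ground cost $\DD_{\mc X}$; for the concrete costs treated in the paper, e.g.\ $\DD_{\mc X}(x,x')=\|x-x'\|^2$ (for which $\mc N_\gamma(x_0)$ is a Euclidean ball and $d_i=(\sqrt{\gamma}-\|x_0-\wh x_i\|)^2$ for $i\in\mc I_1$), this identity is elementary. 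Combining the two inequalities then gives $\rho_{\max}(x_0,\gamma)=N^{-1}\sum_{i\in\mc I_1}d_i$.
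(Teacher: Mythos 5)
Your proof is correct, and while the $\delta$-optimal construction you use for the upper bound is essentially the one in the paper, your lower bound takes a genuinely different route. The paper establishes $\rho_{\max}(x_0,\gamma)\ge N^{-1}\sum_{i\in\mc I_1}d_i$ via a weak duality result for the moment problem, restricting the dual constraint to $(x',y')=(\wh x_i,\wh y_i)$ and then evaluating $\sup_{\zeta\ge 0}\frac{1}{N}\sum_{i\in[N]}\inf_{x\in\mc X}\{\DD_{\mc X}(x,\wh x_i)+\zeta\,\mathbbm{1}_{\mc N_\gamma(x_0)}(x)\}$. You instead stay entirely on the primal side: you disintegrate a (near-)optimal coupling over the atoms of $\Pnom$, note that feasibility forces every conditional measure $Q_i$ off the fiber, and bound the transport cost termwise by $\inf_{x\notin\mc N_\gamma(x_0)}\DD_{\mc X}(\wh x_i,x)$. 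This is more elementary (no duality invoked), consistent with the decomposition the paper uses as its workhorse elsewhere (e.g., in the proofs of Propositions~\ref{prop:rho-LB-1} and~\ref{prop:equiv}), and it delivers the exact lower bound in one step rather than a one-sided dual bound that must afterwards be certified tight by the explicit construction. Both arguments ultimately rest on the same geometric identity $\inf_{x\notin\mc N_\gamma(x_0)}\DD_{\mc X}(x,\wh x_i)=d_i$ for $i\in\mc I_1$: the paper uses it silently when it equates the inner infimum with $\min\{\zeta,d_i\}$, so the ``main obstacle'' you single out is equally present in the paper's proof, and your treatment of it (Assumption~\ref{a}\ref{a:vic} plus continuity of $\DD_{\mc X}$ for one direction, the closed sublevel-set structure of $\mc N_\gamma(x_0)$ for the other) is at the same level of rigor as the paper's.
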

	
	The result of Proposition~\ref{prop:rho_upper} is also intuitive: if $\wh x_i \in \mc N_{\gamma}(x_0)$ then $i \in \mc I_1$, and we will need a sample-wise budget of $d_i/N$ to transport $(\wh x_i, \wh y_i)$ out of the fiber set $\mc N_{\gamma}(x_0) \times \mc Y$. The value $\rho_{\max}(x_0, \gamma)$ is thus obtained by summing all $d_i/N$ over the set $\mc I_1$. If $\mc I_1 = \emptyset$ then there is no training sample inside the fiber set, and thus $\rho_{\max}(x_0, \gamma) = 0$. Proposition~\ref{prop:rho_upper} also highlights that computing $\rho_{\max}(x_0, \gamma)$ necessitates evaluating $|\mc I_1|$ values $d_i$ for each $i \in \mc I_1$.
	
	The computation of $\rho_{\max}(x_0, \gamma)$ provides a natural upper bound on the radius $\rho$. Indeed, if the radius $\rho$ prescribing the ambiguity set is bigger than $\rho_{\max}(x_0, \gamma)$, we recover the robust worst-case conditional loss. This robust loss is uninformative because it depends only on the support $\mc Y$, and it is \textit{in}dependent on the training samples. This negative result is reminiscent of Proposition~\ref{prop:zero-eps} and highlights the sophistication of the distributionally robust conditional decision-making problem.
	
\begin{proposition}[Uninformative solution when $\rho$ is sufficiently large] \label{prop:robust_eps0}
When $\gamma \in \R_{++}$, $\eps = 0$, and $\rho > \rho_{\max}(x_0, \gamma)$, we have
\[
    \Sup{\QQ \in \mbb B_\rho, \QQ(X \in \mc N_\gamma(x_0)) > 0}~ \EE_{\QQ}[ \ell(Y, \alpha, \beta) | X \in \mc N_\gamma(x_0) ] = \Sup{y \in \mc Y}~ \ell(y, \alpha, \beta).
\]
\end{proposition}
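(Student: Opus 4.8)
The plan is to establish the two inequalities separately. The inequality ``$\le$'' is elementary, whereas ``$\ge$'' follows from an explicit worst-case construction in the spirit of the proof of Proposition~\ref{prop:zero-eps}.

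For ``$\le$'', take any feasible $\QQ$, so that $\QQ(X \in \mc N_\gamma(x_0)) > 0$. Bounding the integrand pointwise by $\Sup{y'\in\mc Y}\ell(y',\alpha,\beta)$ in the numerator yields
\[
    \EE_{\QQ}\big[\ell(Y,\alpha,\beta) \mid X \in \mc N_\gamma(x_0)\big]
    = \frac{\EE_{\QQ}\big[\ell(Y,\alpha,\beta)\,\mathbbm{1}_{\mc N_\gamma(x_0)}(X)\big]}{\QQ(X \in \mc N_\gamma(x_0))}
    \le \Sup{y'\in\mc Y}\ell(y',\alpha,\beta),
\]
and taking the supremum over all feasible $\QQ$ preserves this bound.

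For ``$\ge$'', fix $(\alpha,\beta)$ and choose a sequence $(y_k)_{k\ge 1} \subseteq \mc Y$ with $\ell(y_k,\alpha,\beta) \to \Sup{y\in\mc Y}\ell(y,\alpha,\beta)$, the limit being $+\infty$ if the supremum is infinite. It suffices to exhibit, for every $k$, a measure $\QQ \in \mbb B_\rho$ with $\QQ(X \in \mc N_\gamma(x_0)) > 0$ whose conditional expected loss equals $\ell(y_k,\alpha,\beta)$; passing to the limit $k\to\infty$ then gives the claim. I would build such a $\QQ$ from $\Pnom$ by a two-stage displacement of mass. First, every atom indexed by $i \in \mc I_1$ is moved from $(\wh x_i,\wh y_i)$ to $(\wh x_i',\wh y_i)$ with $\wh x_i' \in \mc X \setminus \mc N_\gamma(x_0)$ chosen as follows: pick a near-minimizer $x_i^{b} \in \partial \mc N_\gamma(x_0)$ of $\DD_{\mc X}(\cdot,\wh x_i)$, so $\DD_{\mc X}(x_i^{b},\wh x_i) \le d_i + \delta_1/2$ by~\eqref{eq:d-def}; by Assumption~\ref{a}\ref{a:vic} there are points outside $\mc N_\gamma(x_0)$ arbitrarily close to $x_i^{b}$, and continuity of $\DD_{\mc X}$ lets us select $\wh x_i'$ among them with $\DD_{\mc X}(\wh x_i',\wh x_i) \le d_i + \delta_1$. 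After this stage no mass remains in $\mc N_\gamma(x_0)\times\mc Y$, and the cost incurred is at most $N^{-1}\sum_{i\in\mc I_1}(d_i+\delta_1) \le \rho_{\max}(x_0,\gamma) + \delta_1$ by Proposition~\ref{prop:rho_upper}. Second, peel a slice of mass $\zeta \in (0, 1/N]$ off one atom (which, after the first stage, lies outside the fiber) and transport it to $(x_0, y_k)$; note $(x_0,y_k) \in \mc N_\gamma(x_0)\times\mc Y$ since $\DD_{\mc X}(x_0,x_0) = 0 \le \gamma$. This second move costs $\zeta$ times a finite constant. Because the slack $\rho - \rho_{\max}(x_0,\gamma)$ is strictly positive, $\delta_1$ and $\zeta$ can be taken small enough that the total transport cost is at most $\rho$, hence $\QQ \in \mbb B_\rho$. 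By construction the only $\QQ$-mass in the fiber is the slice of size $\zeta$ located at $(x_0,y_k)$, so $\QQ(X \in \mc N_\gamma(x_0)) = \zeta > 0$ and the conditional expected loss is exactly $\ell(y_k,\alpha,\beta)$, as desired.

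The only delicate step is the first stage. Since $\DD_{\mc X}$ is merely continuous and not assumed to be a metric, one cannot argue by ``moving a sample to the boundary and then stepping a little further'' via the triangle inequality; the relocation of the in-fiber atoms must instead be justified through Assumption~\ref{a}\ref{a:vic} together with continuity of $\DD_{\mc X}$, and one must check that the excess cost over $d_i$ can be controlled uniformly across the finitely many indices of $\mc I_1$, so that together with the single slice the total stays within the budget slack $\rho - \rho_{\max}(x_0,\gamma)$. The degenerate case $\Sup{y\in\mc Y}\ell(y,\alpha,\beta)=+\infty$ causes no trouble: the identity then holds in the extended reals, since the construction produces conditional losses $\ell(y_k,\alpha,\beta)\to+\infty$.
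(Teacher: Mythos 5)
Your proposal is correct and follows essentially the same strategy as the paper's proof: the upper bound by pointwise domination of the integrand, and the lower bound by an explicit adversarial construction that evacuates all in-fiber atoms at cost roughly $\rho_{\max}(x_0,\gamma)$ (using Assumption~\ref{a}\ref{a:vic} and continuity of $\DD_{\mc X}$ to control the excess over $d_i$) and then deposits an arbitrarily thin slice of mass at a near-maximizing point of $\ell$ inside the fiber. The only differences are cosmetic — you use a maximizing sequence $y_k$ and place the slice at $(x_0,y_k)$, whereas the paper fixes an arbitrary $y_0$ and keeps the slice's covariate at $\wh x_j$ (or its boundary projection when $\mc I_1=\emptyset$) — and neither affects the validity of the argument.
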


We now provide a heuristic justification for the result of Proposition~\ref{prop:robust_eps0}. To form the worst-case distribution, the adversary first moves all the samples with index in $\mc I_1$ out of the fiber, which would cost an amount of energy $\rho_{\max}(x_0, \gamma)$. After that, the adversary can pick any sample, slice out an infinitesimally small amount of mass, and then move that slice to any point in the fiber $\mc N_{\gamma}(x_0) \times \mc Y$. Because $\rho > \rho_{\max}(x_0, \gamma)$ and because the ground metric is continuous, this new arrangement of the samples is feasible and constitutes the worst-case distribution that maximizes the conditional expected loss.

Consider the case where the ambiguity size $\rho$ is strictly smaller than the maximum value $\rho_{\max}(x_0, \gamma)$. It can be shown that in this situation, any distribution $\QQ$ that is feasible in the supremum problem of~\eqref{eq:dro-gamma0} should satisfy $\QQ(X \in \mc N_{\gamma}(x_0)) \ge \underline\eps$ for some strictly positive lower bound $\underline \eps$. Further, this value $\underline \eps$ can be quantified by solving a linear optimization problem.

\begin{proposition}[Strictly positive probability requirement equivalence] \label{prop:eps_lower}
    Suppose that $\gamma \in \R_{++}$, $\eps = 0$ and $\rho < \rho_{\max}(x_0, \gamma)$. Then there exists an $\underline \eps>0$, such that the distributionally robust portfolio allocation model with side information~\eqref{eq:dro-gamma0} is equivalent to
    \[
        \min_{\alpha\in\mc A,~\beta \in\mc B}\;\Sup{\QQ \in \mbb B_\rho, \QQ(X \in \mc N_\gamma(x_0)) \geq \underline \eps}~ \EE_{\QQ}[ \ell(Y, \alpha, \beta) | X \in \mc N_\gamma(x_0) ].
    \]
    In particular, this equivalence holds for
    \[
       \underline \eps = \min\left\{ \frac{1}{N} \sum_{i \in \mc I_1} p_i : p \in [0, 1]^N,~ \frac{1}{N}\sum_{i \in \mc I_1} d_i (1-p_i) \le \rho \right\}.
    \]    
\end{proposition}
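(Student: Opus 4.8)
The plan is to reduce the statement to a single quantitative fact about the ambiguity set --- that under the hypothesis $\rho<\rho_{\max}(x_0,\gamma)$ every member of $\mbb B_\rho$ places at least a fixed strictly positive amount of mass on the fiber $\mc N_\gamma(x_0)\times\mc Y$ --- and then read off the equivalence almost for free. Writing $v$ for the optimal value of the displayed linear program, I would prove (i) $\QQ(X\in\mc N_\gamma(x_0))\ge v$ for every $\QQ\in\mbb B_\rho$, (ii) $v>0$, and (iii) that (i) and (ii) force both feasible sets $\{\QQ\in\mbb B_\rho:\QQ(X\in\mc N_\gamma(x_0))>0\}$ and $\{\QQ\in\mbb B_\rho:\QQ(X\in\mc N_\gamma(x_0))\ge v\}$ to coincide with $\mbb B_\rho$, so the two min-max problems in the statement have the same optimal value and the same minimizers --- the claim, with $\underline\eps:=v$.

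For (i), fix $\QQ\in\mbb B_\rho$ and take an optimal coupling $\pi\in\Pi(\Pnom,\QQ)$ (which exists; see the discussion after Definition~\ref{def:OT}), so $\EE_\pi[\DD]=\Wass(\QQ,\Pnom)\le\rho$. Since $\Pnom$ is supported on the $N$ atoms $(\wh x_i,\wh y_i)$, disintegrating $\pi$ against its first marginal gives probability measures $\pi_i\in\mc M(\mc X\times\mc Y)$ with $\pi=\frac1N\sum_{i\in[N]}\delta_{(\wh x_i,\wh y_i)}\otimes\pi_i$, hence $\QQ=\frac1N\sum_{i\in[N]}\pi_i$. Using the additive separability of $\DD$ (Assumption~\ref{a}\ref{a:cost}) and discarding the nonnegative $\DD_{\mc Y}$-term,
\[
  \rho\ \ge\ \frac1N\sum_{i\in[N]}\int \DD_{\mc X}(\wh x_i,x)\,\pi_i(\dd(x,y))\ \ge\ \frac1N\sum_{i\in\mc I_1} d_i\,\big(1-\pi_i(\mc N_\gamma(x_0)\times\mc Y)\big),
\]
where the last step keeps only the $\mc I_1$-terms (the remaining terms being nonnegative) and uses that for $i\in\mc I_1$ any point outside $\mc N_\gamma(x_0)$ lies at $\DD_{\mc X}$-cost at least $d_i$ from $\wh x_i$ --- the relocation-cost identity $\inf_{x'\notin\mc N_\gamma(x_0)}\DD_{\mc X}(x',\wh x_i)=d_i$ underlying Proposition~\ref{prop:rho_upper}. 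Setting $p_i:=\pi_i(\mc N_\gamma(x_0)\times\mc Y)\in[0,1]$, the vector $p$ is LP-feasible, and $\QQ(X\in\mc N_\gamma(x_0))=\frac1N\sum_{i\in[N]}p_i\ge\frac1N\sum_{i\in\mc I_1}p_i\ge v$.

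Statement (ii) is immediate from the LP: if an optimal $p^\star$ had $\sum_{i\in\mc I_1}p_i^\star=0$, the budget constraint would read $\frac1N\sum_{i\in\mc I_1}d_i\le\rho$, i.e.\ $\rho_{\max}(x_0,\gamma)\le\rho$ by Proposition~\ref{prop:rho_upper}, contradicting the hypothesis; hence $v>0$. Combining with (i), every $\QQ\in\mbb B_\rho$ has $\QQ(X\in\mc N_\gamma(x_0))\ge v>0$, so the two constraint sets in (iii) both equal $\mbb B_\rho$; since the inner objective $\EE_\QQ[\ell(Y,\alpha,\beta)|X\in\mc N_\gamma(x_0)]$ is well defined on this common set, the equivalence follows. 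One can moreover show $v$ is the \emph{exact} minimal fiber mass over $\mbb B_\rho$, hence the sharpest admissible $\underline\eps$, via the complementary construction: given LP-optimal $p^\star$, leave every $\mc I_2$-sample and a slightly inflated $p_i^\star$-fraction of every $\mc I_1$-sample in place, and transport the remaining $\mc I_1$-mass just across $\partial\mc N_\gamma(x_0)$, which Assumption~\ref{a}\ref{a:vic} together with continuity of $\DD_{\mc X}$ allows at per-unit cost arbitrarily close to $d_i$ while staying within budget $\rho$; letting the inflation vanish drives the fiber mass down to $v$.

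I expect the complementary construction to be the delicate part: because $\DD_{\mc X}$ need not satisfy a triangle inequality, one must argue directly from Assumptions~\ref{a}\ref{a:cost} and~\ref{a}\ref{a:vic} and continuity that there exist points strictly outside $\mc N_\gamma(x_0)$ whose $\DD_{\mc X}$-cost to $\wh x_i$ is within $\delta$ of $d_i$, and one must treat the generic case in which the LP budget constraint is tight (so $v$ is approached but not attained) through the inflation trick. The ``$\ge$'' direction (i) is comparatively routine but quietly relies on the same relocation-cost identity $\inf_{x'\notin\mc N_\gamma(x_0)}\DD_{\mc X}(x',\wh x_i)=d_i$, which I would either import from the proof of Proposition~\ref{prop:rho_upper} or re-establish here. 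Note that only (i) and (ii) are actually needed for the stated equivalence; the construction is what certifies that the LP value is the tightest choice of $\underline\eps$.
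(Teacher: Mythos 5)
Your proposal is correct and follows essentially the same route as the paper: both decompose any $\QQ\in\mbb B_\rho$ as $\frac{1}{N}\sum_{i}\pi_i$, bound the transport budget below by $\frac{1}{N}\sum_{i\in\mc I_1}d_i\bigl(1-\pi_i(\mc N_\gamma(x_0)\times\mc Y)\bigr)$, and read off the linear program. The only differences are organizational --- the paper first proves positivity of $\inf_{\QQ\in\mbb B_\rho}\QQ(X\in\mc N_\gamma(x_0))$ by contradiction and then identifies it with the LP value via the boundary-projection construction, whereas you obtain the LP lower bound directly (restricting the per-sample cost bound to $i\in\mc I_1$, which is the correct restriction, since samples with $\wh x_i\notin\mc N_\gamma(x_0)$ incur no cost to stay outside) and rightly observe that only this lower-bound direction, not the attainment construction, is needed for the stated equivalence.
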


The linear program that defines~$\underline \eps$ in Proposition~\ref{prop:eps_lower} can be further shown to be a fractional knapsack problem, and it can be solved efficiently to optimality using a greedy heuristic~\cite[Proposition~17.1]{ref:korte2007}. The conditions of Proposition~\ref{prop:eps_lower} are satisfied only when $\rho_{\max}(x_0, \gamma) > 0$, which further implies that $\mc I_1 \neq \emptyset$ and there exists at least one training sample in the fiber set $\mc N_\gamma(x_0) \times \mc Y$. With an ambiguity size $\rho$ which is strictly smaller than $\rho_{\max}(x_0, \gamma)$, it is not possible for the adversary to remove all the samples out of the fiber set $\mc N_\gamma(x_0) \times \mc Y$. The lower bound value $\underline \eps$ represents the lowest possible amount of probability mass left on the fiber. 

The result of Proposition~\ref{prop:eps_lower} indicates that the portfolio allocation problem when $\rho < \rho_{\max}(x_0, \gamma)$ can be obtained by solving the general problem with a strictly positive probability mass requirement. This general case is our subject of study in the next subsection.

\subsection{Strictly positive fiber probability $\eps > 0$} \label{sec:gammapositive_epspositive}

We now consider the last and most general case with a nonsingular fiber set $\gamma > 0$ and a strictly positive fiber probability $\eps > 0$. More specifically, we aim to solve the portfolio allocation of the form
\be
\label{eq:dro2}
    \Min{\alpha \in \mc A,~\beta \in \mc B}~\Sup{\QQ \in \mbb B_\rho, \QQ(X \in \mc N_{\gamma}(x_0)) \ge \eps}~ \EE_{\QQ}[ \ell(Y, \alpha, \beta) | X \in \mc N_{\gamma}(x_0) ].
\ee
Notice that problem~\eqref{eq:dro2} is also relevant to the case with null probability requirement of Section~\ref{sec:gammapositive_eps0} because problem~\eqref{eq:dro2} is equivalent to problem~\eqref{eq:eps0} when $\rho < \rho_{\max}(x_0, \gamma)$ by choosing a proper value of $\eps$ (cf.~Proposition~\ref{prop:eps_lower}). 
As the first step towards solving~\eqref{eq:dro2}, we define the following feasible set for the dual variables
\be \label{eq:dual-set}
\mc V \Let \left\{
\begin{array}{ll}
      (\lambda, s, \nu^+, \nu^-, \phi, \varphi, \psi) \in \R_+^N \times \R^N \times \R_+ \times \R_+ \times \R \times \R_+ \times \R_+^N \text{ such that: } \\
            \phi - d_i \varphi + \psi_i   - s_i \ge 0 &\forall i \in \mc I_1 \\
            \phi + d_i \varphi + \psi_i  - s_i \ge 0 &\forall i \in \mc I_2 \\
            \nu^+ - \nu^- + (\sum_{i \in \mc I_1} d_i - N\rho) \varphi  -\sum_{i \in [N]} \psi_i  \ge 0 \\
            \varphi - \lambda_i \ge 0 & \forall i \in [N]
\end{array}
\right\}.
\ee
The next theorem presents the finite-dimensional reformulation of the worst-case conditional expected loss.

\begin{theorem}[Worst-case conditional expected loss for $\eps > 0$] 
\label{thm:eps-zero}
Suppose that $\gamma \in \R_{++}$, $\eps \in \R_{++}$, and $\rho > \rho_{\min}(x_0,\gamma,\eps)$. Let the parameters $d_i$ be defined as in~\eqref{eq:d-def}. For any feasible solution $(\alpha, \beta)$, we have
\begin{align*}
&\Sup{\QQ \in \mbb B_\rho, \QQ(X \in \mc N_\gamma(x_0)) \ge \eps}~ \EE_{\QQ}[ \ell(Y, \alpha, \beta) | X \in \mc N_\gamma(x_0) ] 
= \left\{
        \begin{array}{cll}
            \inf & \phi + (N\eps)^{-1} \nu^+ - N^{-1}\nu^- \\
            \st & (\lambda, s, \nu^+, \nu^-, \phi, \varphi, \psi) \in \mc V \\
            &s_i\geq \Sup{y_i \in \mc Y}~ \{\ell(y_i, \alpha, \beta)  - \lambda_i \DD_{\mc Y}(y_i, \wh y_i)\}  &\forall i\in[N].
        \end{array}    
\right. 
\end{align*}
\end{theorem}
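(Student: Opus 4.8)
The plan is to turn the fractional, infinite-dimensional inner maximization into a finite-dimensional conic-representable program through a Charnes--Cooper-type homogenization, a disintegration of the transport plan against the atomic nominal measure, and a Lagrangian dualization. Concretely, writing the conditional expectation as $\EE_\QQ[\ell(Y,\alpha,\beta)\mathbbm{1}_{\mc N_\gamma(x_0)}(X)]/\QQ(X\in\mc N_\gamma(x_0))$ and introducing $r=1/\QQ(X\in\mc N_\gamma(x_0))\in[1,\eps^{-1}]$ together with the homogenized measure $\widetilde\QQ=r\QQ$ (a non-negative measure of total mass $r$ and fiber mass exactly $1$), the objective becomes the \emph{linear} functional $\EE_{\widetilde\QQ}[\ell(Y,\alpha,\beta)\mathbbm{1}_{\mc N_\gamma(x_0)}(X)]$, while $\Wass(\QQ,\Pnom)\le\rho$ rewrites as the existence of a coupling between $\widetilde\QQ$ and the scaled empirical measure $r\Pnom$ of cost at most $r\rho$. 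Since $\Pnom=\frac1N\sum_{i\in[N]}\delta_{(\wh x_i,\wh y_i)}$ is atomic, any such coupling disintegrates into non-negative measures $\widetilde\QQ_i$ on $\mc X\times\mc Y$, one per sample, each of total mass $r/N$; writing $\pi_i=\widetilde\QQ_i(X\in\mc N_\gamma(x_0))$ for the in-fiber portion of sample $i$, the fiber-mass normalization becomes $\sum_{i\in[N]}\pi_i=1$.

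Next I would reduce the $x$-coordinate. By the separability of $\DD$ in Assumption~\ref{a}\ref{a:cost}, and because $\ell$ depends only on $y$ while $\mathbbm{1}_{\mc N_\gamma(x_0)}$ depends only on $x$, the $x$-component of each $\widetilde\QQ_i$ contributes only transport cost: the in-fiber portion is optimally placed at the projection $\wh x_i^p$ (Assumption~\ref{a}\ref{a:proj}, cost $\kappa_i$), and the out-of-fiber portion can be pushed to within arbitrarily small distance of $\partial\mc N_\gamma(x_0)$ by Assumption~\ref{a}\ref{a:vic}. Combining $\kappa_i=0$ with out-of-fiber cost $d_i$ for $i\in\mc I_1$, and $\kappa_i=d_i$ with out-of-fiber cost $0$ for $i\in\mc I_2$ (see~\eqref{eq:kappa-def},~\eqref{eq:d-def} and Proposition~\ref{prop:rho_upper}), the minimal $x$-transport cost of a split $(\pi_i)$ with atom masses $r/N$ is $\sum_{i\in\mc I_1}(r/N-\pi_i)d_i+\sum_{i\in\mc I_2}\pi_i d_i$; this $\mc I_1$-versus-$\mc I_2$ asymmetry and the resulting offset $\sum_{i\in\mc I_1}d_i=N\rho_{\max}(x_0,\gamma)$ are precisely what surface in the constraints $\phi-d_i\varphi+\psi_i-s_i\ge0$ ($i\in\mc I_1$), $\phi+d_i\varphi+\psi_i-s_i\ge0$ ($i\in\mc I_2$) and $\nu^+-\nu^-+(\sum_{i\in\mc I_1}d_i-N\rho)\varphi-\sum_{i\in[N]}\psi_i\ge0$ defining $\mc V$ in~\eqref{eq:dual-set}. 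What remains inside each atom is the choice of the in-fiber $y$-marginal, which trades off $\ell$ against its $\DD_{\mc Y}$-transport budget; dualizing this with a per-atom rate $\lambda_i$ (constrained by $\varphi\ge\lambda_i$) collapses it to $s_i=\sup_{y_i\in\mc Y}\{\ell(y_i,\alpha,\beta)-\lambda_i\DD_{\mc Y}(y_i,\wh y_i)\}$.

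Finally I would take the Lagrangian dual of the resulting finite-dimensional bilinear program in $(r,(\pi_i),(\lambda_i),\dots)$: the multiplier $\varphi\ge0$ on the cost budget $r\rho$, the multiplier $\phi$ on $\sum_i\pi_i=1$, the multipliers $\nu^+,\nu^-\ge0$ on the bounds $r\le\eps^{-1}$ and $r\ge1$ (whence the $N\eps$ and $N$ normalizations in the dual objective $\phi+(N\eps)^{-1}\nu^+-N^{-1}\nu^-$), and $\psi_i\ge0$ on $\pi_i\le r/N$. Forcing the coefficient of the free variable $r$ to vanish produces the third inequality of $\mc V$; non-negativity of the coefficients of the $\pi_i$ produces the per-atom inequalities of $\mc V$; the objective is assembled from the constant terms, yielding exactly the stated program (and one should check that the specialization $\rho\ge\rho_{\max}$ recovers the uninformative $\sup_y\ell(y,\alpha,\beta)$, consistent with Proposition~\ref{prop:robust_eps0}).

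I expect the \textbf{main obstacles} to be the following. First, carrying out the $x$-reduction of the second paragraph \emph{exactly}, i.e.\ proving that concentrating in-fiber mass at $\wh x_i^p$ and out-of-fiber mass arbitrarily near the boundary is without loss of optimality; this is where Assumptions~\ref{a}\ref{a:proj} and~\ref{a}\ref{a:vic} do the real work, and it requires a limiting argument because these configurations are only approached, not attained. Second, establishing strong duality with zero gap and with the infimum attained on the right-hand side when $\mc Y$ is possibly unbounded: here the strict feasibility afforded by $\rho>\rho_{\min}(x_0,\gamma,\eps)$ (Proposition~\ref{prop:rho-LB-1}) should supply the requisite Slater point, and the degenerate case $s_i=+\infty$ is absorbed by noting it merely removes $\lambda_i$ from the effective domain. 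A subsidiary technical point is justifying the interchange of the disintegration and the nested suprema so that the homogenized problem is genuinely equivalent to the original conditional-expectation maximization.
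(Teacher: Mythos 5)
Your proposal follows essentially the same route as the paper's proof: disintegration of the transport plan against the atomic empirical measure, reduction of the $x$-marginals to the fiber projections and near-boundary points via Assumptions~\ref{a}\ref{a:proj} and~\ref{a}\ref{a:vic}, a Charnes--Cooper homogenization to linearize the fractional objective, conic duality in $\lambda_i$ for the per-atom $y$-moment problems, and a final minimax interchange plus linear programming duality whose multipliers $(\phi,\varphi,\nu^+,\nu^-,\psi)$ play exactly the roles you assign them, with the Slater point supplied by $\rho>\rho_{\min}(x_0,\gamma,\eps)$. The obstacles you flag (the limiting argument for the $x$-reduction and the zero-gap duality) are precisely the points the paper's argument addresses, so the plan is sound and matches the published proof.
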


	We can join the minimization operator over $(\alpha, \beta)$ to the result of Theorem~\ref{thm:eps-zero}, and the portfolio allocation problem with side information~\eqref{eq:dro2} is equivalent to the following finite-dimensional optimization problem
	\begin{equation}
	    \label{eq:optimization-eps-0}
	    \begin{array}{cll}
            \min & \phi + (N\eps)^{-1} \nu^+ - N^{-1}\nu^- \\
            \st & \alpha \in \mc A,~\beta\in\mc B,~(\lambda, s, \nu^+, \nu^-, \phi, \varphi, \psi) \in \mc V \\
            &s_i\geq 
            \Sup{y_i\in\mc Y}~ \{
            \ell(y_i, \alpha, \beta)  - \lambda_i \DD_{\mc Y}(y_i, \wh y_i) \}
            &\forall i\in[N].
        \end{array}   
	\end{equation}
	The last constraint of problem~\eqref{eq:optimization-eps-0} is a semi-infinite constraint that can be reformulated into a set of semi-definite constraints under specific situations. These results are  highlighted in Proposition~\ref{prop:markowitz-reform-2} and~\ref{prop:mean-cvar-reform-2}.

	\begin{proposition}[Mean-variance loss function]\label{prop:markowitz-reform-2} 
    Suppose that $\ell$ is the mean-variance loss function of the form~\eqref{eq:mean-var-loss}, $\gamma \in \R_{++}$, $\eps \in (0, 1]$ and $\rho > \rho_{\min}(x_0,\gamma,\eps)$. Suppose in addition that $\mc Y  = \R^m$ and $\DD_{\mc Y}(y,\wh y) = \|y-\wh y\|_2^2$. Let the parameters $d_i$ be defined as in~\eqref{eq:d-def}. The distributionally robust portfolio allocation model with side information~\eqref{eq:optimization-eps-0} is equivalent to the semi-definite optimization problem
    \begin{equation*}
        \begin{array}{cll} 
            \min & \phi + (N\eps)^{-1} \nu^+ - N^{-1}\nu^- \\
            \st & \alpha \in \mc A,~\beta \in \mc B,~t\in\R_+,~A_i \in \PSD^m\quad \forall i \in [N] \\
            & (\lambda, s, \nu^+, \nu^-, \phi, \varphi, \psi) \in \mc V \\
            &\begin{bmatrix}
                \lambda_i I - A_i & \alpha \\ \alpha^\top & 1
            \end{bmatrix} \succeq 0,~\begin{bmatrix}
                A_i & (\beta + \eta/2) \alpha - \lambda_i \wh y_i \\ (\beta + \eta/2) \alpha^\top  - \lambda_i \wh y_i^\top & s_i + \lambda_i \| \wh y_i\|_2^2 - t 
            \end{bmatrix} \succeq 0 & \forall i \in [N] \\
            & \begin{bmatrix}
                t & \beta \\ \beta & 1
            \end{bmatrix} \succeq 0.
        \end{array}
    \end{equation*}
    \end{proposition}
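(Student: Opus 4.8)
The starting point is Theorem~\ref{thm:eps-zero}: problem~\eqref{eq:optimization-eps-0} already casts the robustified mean--variance allocation as a finite-dimensional program whose only intractable ingredient is the family of semi-infinite constraints $s_i \ge \sup_{y_i \in \mc Y}\{\ell(y_i,\alpha,\beta) - \lambda_i \DD_{\mc Y}(y_i,\wh y_i)\}$, $i \in [N]$, alongside the dual-feasibility constraint $(\lambda,s,\nu^+,\nu^-,\phi,\varphi,\psi)\in\mc V$. Under the present hypotheses ($\mc Y = \R^m$, $\DD_{\mc Y}(y,\wh y) = \|y-\wh y\|_2^2$, $\ell$ the mean--variance loss~\eqref{eq:mean-var-loss}), the plan is to replace each semi-infinite constraint by semi-definite constraints. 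Expanding the $i$-th inner objective as a quadratic in $y$ gives
\[
    \ell(y,\alpha,\beta) - \lambda_i\|y-\wh y_i\|_2^2 = -\,y^\top(\lambda_i I - \alpha\alpha^\top)\,y - 2\,b_i^\top y + \beta^2 - \lambda_i\|\wh y_i\|_2^2, \qquad b_i \Let \big(\beta+\tfrac{\eta}{2}\big)\alpha - \lambda_i\wh y_i,
\]
so that the $i$-th constraint is precisely the requirement that the quadratic $y \mapsto y^\top(\lambda_i I - \alpha\alpha^\top)y + 2 b_i^\top y + \big(s_i - \beta^2 + \lambda_i\|\wh y_i\|_2^2\big)$ be nonnegative on all of $\R^m$. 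Invoking the standard fact that a quadratic $y\mapsto y^\top P y + 2 b^\top y + c$ is nonnegative on $\R^m$ if and only if $\left[\begin{smallmatrix} P & b \\ b^\top & c\end{smallmatrix}\right]\succeq 0$ --- an equivalence that also implicitly enforces $\lambda_i I \succeq \alpha\alpha^\top$ (finiteness of the supremum) and covers the boundary case $\lambda_i = \|\alpha\|_2^2$ through the attendant range condition --- the $i$-th semi-infinite constraint becomes a single matrix inequality, nonlinear in the decision variables,
\[
    \begin{bmatrix} \lambda_i I - \alpha\alpha^\top & b_i \\ b_i^\top & s_i - \beta^2 + \lambda_i\|\wh y_i\|_2^2\end{bmatrix} \succeq 0.
\]

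It then remains to linearize the two nonlinear ingredients of this matrix, namely the rank-one term $\alpha\alpha^\top$ and the scalar $\beta^2$. For the former, introduce for each $i$ a matrix variable $A_i\in\PSD^m$ subject to $A_i \preceq \lambda_i I - \alpha\alpha^\top$; the identity
\[
    \begin{bmatrix}\lambda_i I - \alpha\alpha^\top & b_i \\ b_i^\top & c\end{bmatrix} = \begin{bmatrix}A_i & b_i \\ b_i^\top & c\end{bmatrix} + \begin{bmatrix}(\lambda_i I - \alpha\alpha^\top) - A_i & 0 \\ 0 & 0\end{bmatrix},
\]
valid for any scalar $c$, shows that the existence of a feasible $A_i$ making $\left[\begin{smallmatrix}A_i & b_i \\ b_i^\top & c\end{smallmatrix}\right]\succeq 0$ is equivalent to the nonlinear matrix inequality above: take $A_i = \lambda_i I - \alpha\alpha^\top$ for one direction, and add the two positive semi-definite summands for the other. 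The constraint $A_i \preceq \lambda_i I - \alpha\alpha^\top$, i.e. $\lambda_i I - A_i - \alpha\alpha^\top\succeq 0$, is itself the Schur complement LMI $\left[\begin{smallmatrix}\lambda_i I - A_i & \alpha \\ \alpha^\top & 1\end{smallmatrix}\right]\succeq 0$. For the scalar $\beta^2$, introduce $t$ with $t\ge\beta^2$ --- equivalently the Schur complement LMI $\left[\begin{smallmatrix} t & \beta \\ \beta & 1\end{smallmatrix}\right]\succeq 0$ --- and substitute $t$ for $\beta^2$ in the $(2,2)$ block; since increasing the $(2,2)$ entry of a symmetric block matrix preserves positive semi-definiteness, the existence of a feasible $t\ge\beta^2$ rendering the block matrix PSD is equivalent to the original inequality (attained at $t=\beta^2$). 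After these substitutions the homogenized matrix inequality reads $\left[\begin{smallmatrix}A_i & b_i \\ b_i^\top & s_i + \lambda_i\|\wh y_i\|_2^2 - t\end{smallmatrix}\right]\succeq 0$, which with $b_i = (\beta+\tfrac{\eta}{2})\alpha - \lambda_i\wh y_i$ is exactly the second family of matrix inequalities appearing in the statement.

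To conclude, I would re-append the outer minimization over $\alpha\in\mc A$, $\beta\in\mc B$, leave the objective $\phi + (N\eps)^{-1}\nu^+ - N^{-1}\nu^-$ and the constraint $(\lambda,s,\nu^+,\nu^-,\phi,\varphi,\psi)\in\mc V$ unchanged from Theorem~\ref{thm:eps-zero}, and note that each auxiliary variable ($A_i$, $t$) appears only in constraints shown equivalent to the originals; hence the two programs have the same optimal value and the same $(\alpha,\beta)$-optima. The step I expect to require the most care is establishing the equivalences in \emph{both} directions precisely in the degenerate regime where $\lambda_i I - \alpha\alpha^\top$ (equivalently $A_i$) is rank-deficient: there the inner supremum, if finite, is attained on an affine subspace rather than at a single point, so one must invoke the pseudo-inverse/range-condition versions of the quadratic-nonnegativity and Schur-complement equivalences rather than their strictly positive-definite counterparts. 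Everything else is routine semi-definite modeling.
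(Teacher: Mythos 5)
Your proposal is correct and follows essentially the same route as the paper's proof: expand the inner objective as a quadratic in $y_i$, invoke the global quadratic-nonnegativity $\Leftrightarrow$ PSD equivalence (valid because $\mc Y = \R^m$), and linearize the nonconvex terms $\alpha\alpha^\top$ and $\beta^2$ via the auxiliary variables $A_i \preceq \lambda_i I - \alpha\alpha^\top$ and $t \ge \beta^2$, each encoded by a Schur-complement LMI. Your two-sided justification of the $A_i$ and $t$ substitutions (the sum-of-PSD decomposition and the monotonicity of the $(2,2)$ entry) is slightly more explicit than what the paper writes, but the argument is the same.
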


    \begin{proposition}[Mean-CVaR loss function]\label{prop:mean-cvar-reform-2} 
    Suppose that $\ell$ is the mean-CVaR loss function of the form~\eqref{eq:mean-cvar-loss}, $\gamma \in \R_{++}$, $\eps \in (0, 1]$ and $\rho > \rho_{\min}(x_0, \gamma, \eps)$. Suppose in addition that $\mc Y  = \R^m$ and $\DD_{\mc Y}(y,\wh y) = \|y-\wh y\|_2^2$. Let the parameters $d_i$ be defined as in~\eqref{eq:d-def}. The distributionally robust portfolio allocation model with side information~\eqref{eq:optimization-eps-0} is equivalent to the semi-definite optimization problem
    \begin{equation*}
        \begin{array}{cll} 
            \min & \phi + (N\eps)^{-1} \nu^+ - N^{-1}\nu^- \\
            \st & \alpha \in \mc A,~\beta \in \mc B,~(\lambda, s, \nu^+, \nu^-, \phi, \varphi, \psi) \in \mc V \\
            &\begin{bmatrix}
                \lambda_i I & \frac{\eta}{2} \alpha - \lambda_i \wh y_i \\ \frac{\eta}{2} \alpha^\top  - \lambda_i \wh y_i^\top & s_i + \lambda_i \| \wh y_i\|_2^2 - \beta
            \end{bmatrix} \succeq 0,~
            \begin{bmatrix}
                \lambda_i I & \frac{\eta \tau + 1}{2\tau}\alpha - \lambda_i \wh y_i \\ \frac{\tau + 1}{2\tau} \alpha^\top  - \lambda_i \wh y_i^\top & s_i + \lambda_i \| \wh y_i\|_2^2 - (1-1/\tau)\beta
            \end{bmatrix} \succeq 0 & \forall i \in [N].
        \end{array}
    \end{equation*}
    \end{proposition}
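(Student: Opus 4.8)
The plan is to start from the finite-dimensional reformulation~\eqref{eq:optimization-eps-0}, obtained by joining the minimization over $(\alpha,\beta)$ to Theorem~\ref{thm:eps-zero}: the objective $\phi + (N\eps)^{-1}\nu^+ - N^{-1}\nu^-$, the dual feasibility set $\mc V$, and the memberships $\alpha \in \mc A$, $\beta \in \mc B$ are all kept verbatim, so the entire content of the proof is to show that, under the specialization $\mc Y = \R^m$, $\DD_{\mc Y}(y, \wh y) = \|y - \wh y\|_2^2$, and $\ell$ being the mean-CVaR loss~\eqref{eq:mean-cvar-loss}, the semi-infinite constraints
\[
s_i \ge \Sup{y_i \in \R^m}~ \big\{ \ell(y_i, \alpha, \beta) - \lambda_i \|y_i - \wh y_i\|_2^2 \big\} \qquad \forall i \in [N]
\]
are equivalent to the two linear matrix inequalities displayed in Proposition~\ref{prop:mean-cvar-reform-2}.

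First I would use the representation of the mean-CVaR loss as the pointwise maximum of two affine functions of $y$, namely $\ell(y, \alpha, \beta) = \max\{ -\eta\alpha^\top y + \beta,\ -(\eta + \tfrac1\tau)\alpha^\top y + (1 - \tfrac1\tau)\beta \}$, and pull the common concave term $-\lambda_i\|y_i - \wh y_i\|_2^2$ inside the maximum; since the supremum of a pointwise maximum equals the maximum of the suprema, the single constraint above splits into a pair of constraints $s_i \ge \Sup{y_i \in \R^m}\{ a^\top y_i + b - \lambda_i\|y_i - \wh y_i\|_2^2\}$, one for $(a, b) = (-\eta\alpha, \beta)$ and one for $(a, b) = (-(\eta + \tfrac1\tau)\alpha, (1 - \tfrac1\tau)\beta)$. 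Each right-hand side is an unconstrained maximization of a concave quadratic in $y_i$ (concave because $\lambda_i \ge 0$ on $\mc V$); completing the square gives the value $b - \lambda_i\|\wh y_i\|_2^2 + \tfrac{1}{4\lambda_i}\|a + 2\lambda_i\wh y_i\|_2^2$ when $\lambda_i > 0$, the value $b$ when $\lambda_i = 0$ and $a = 0$, and $+\infty$ when $\lambda_i = 0$ and $a \ne 0$. Thus $s_i \ge \Sup{y_i}\{\cdots\}$ is equivalent to the rotated quadratic inequality $\lambda_i(s_i + \lambda_i\|\wh y_i\|_2^2 - b) \ge \|\tfrac{a}{2} + \lambda_i\wh y_i\|_2^2$ augmented by the requirement $a = 0$ whenever $\lambda_i = 0$, and a Schur-complement argument identifies this exactly with the linear matrix inequality $\begin{bmatrix} \lambda_i I & \tfrac{a}{2} + \lambda_i\wh y_i \\ \tfrac{a^\top}{2} + \lambda_i\wh y_i^\top & s_i + \lambda_i\|\wh y_i\|_2^2 - b \end{bmatrix} \succeq 0$. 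Substituting the two choices of $(a, b)$, and conjugating the block matrix by $\diag(I, -1)$ to flip the sign of the off-diagonal block into the forms $\tfrac\eta2\alpha - \lambda_i\wh y_i$ and $\tfrac{\eta\tau + 1}{2\tau}\alpha - \lambda_i\wh y_i$ appearing in the statement, yields precisely the claimed semi-definite program.

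The step I expect to require the most care is the degenerate case $\lambda_i = 0$: one must check that the Schur-complement equivalence remains exact on the boundary $\lambda_i = 0$, where it reduces to the range condition forcing the off-diagonal block to vanish (hence $a = 0$) together with nonnegativity of the lower-right entry. This guarantees that the linear matrix inequality neither discards feasible points with $\lambda_i > 0$ nor admits spurious feasible points with $\lambda_i = 0$ and $\alpha \ne 0$, for which the supremum is $+\infty$ and the original semi-infinite constraint is violated. Once this is verified, the equivalence of the two optimization problems is immediate because all other variables, constraints, and the objective are untouched.
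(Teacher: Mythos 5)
Your proposal is correct and follows essentially the same route as the paper: both split the pointwise maximum in $\ell$ into two semi-infinite concave-quadratic constraints over $y_i \in \R^m$ and then convert each into the displayed linear matrix inequality (the paper by directly invoking the equivalence between global nonnegativity of a quadratic and positive semi-definiteness of its $(m+1)\times(m+1)$ coefficient matrix, you by completing the square and applying a Schur complement). Your explicit treatment of the boundary case $\lambda_i = 0$ is a welcome precaution, but it is automatically subsumed by the paper's direct quadratic-form equivalence, so no substantive gap separates the two arguments.
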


    Both Proposition~\ref{prop:markowitz-reform-2} and \ref{prop:mean-cvar-reform-2} leverage the fact that $\mc Y = \R^m$ to reformulate the problem using semi-definite constraint. Again, under the assumption $\mc Y= \R^m$, the reformulation~\eqref{eq:optimization-eps-0} is a conservative approximation of the distributionally robust conditional portfolio allocation problem; thus the reformulations in  Proposition~\ref{prop:markowitz-reform-2} and~\ref{prop:mean-cvar-reform-2} are also conservative approximations of the distributionally robust conditional mean-variance and mean-CVaR portfolio allocation problems, respectively. In case $\mc Y$ is an ellipsoid with a non-empty interior of the form $\mc Y = \{ y \in \R^m: y^\top Q y + 2 q^\top y + q_0 \le 0\}$ for some symmetric matrix $Q$, then the S-lemma~\cite{ref:polik2007Slemma} can also be applied to devise a similar semi-definite optimization problem, the details can be found in Appendix~\ref{appendix:refor-compact}. 
    
    Despite having a fractional objective function, the results in this section assert that distributionally robust portfolio allocation with side information problems can be solved efficiently using convex conic programming solvers. This is in stark contrast with existing results in distributionally robust fractional programming where only \textit{non}convex reformulations are available, and the optimal solution is obtained by solving a sequence of convex optimization problems after bisection~\cite{ref:ji2020data, ref:zhao2017distributionally}.

\section{Numerical Experiment}
\label{sec:experiment}
\modified{
We conduct a comprehensive numerical experiment using real data with the following methods: (i) the equal-weighted model, (ii) the unconditional model, (iii) the distributionally robust unconditional model, (iv) the conditional model, (v) the distributionally robust conditional model with type-$\infty$ Wasserstein ambiguity set, and (vi) our proposed optimal transport conditional model. We experiment with two popular portfolio allocation strategies: the mean-variance criterion in Section~\ref{sec:experiment_mean-variance} and the mean-CVaR criterion in Section~\ref{sec:experiment_mean-cvar}. The details about the optimization models for the mean-variance and mean-CVaR problems will be provided in the corresponding section.

\textbf{Asset return and side information:} For the asset returns, we take the historical S\&P500 constituents data from January 01, 2017, to March 31, 2023.\footnote{The data is downloaded from the Wharton Research Data Services: \url{https://wrds-www.wharton.upenn.edu}.} To ensure a stable portfolio construction, we retain only assets that have been in the S\&P500 since 2010, which gives a universe of 399 assets in total. The sampling frequency is one sample per day: at day $t$, the response variable $Y$ is defined as the 1-day stock percentage return from day $t$ to day $t+1$. For the side information, we follow \cite{ref:Chenreddy2022data} and use five publicly available market indices: (i) Volatility Index (VIX), (ii) 10-year Treasury Yield Index (TNX), (iii) Crude Oil Index (CL=F), (iv) S\&P 500 (GSPC), (v) Dow Jones Index (DJI) to construct the side information covariate $X \in \R^5$.\footnote{Side information indices can be downloaded from yfinance: \url{https://pypi.org/project/yfinance/}.} 

We preprocess the side information using the Gaussian kernel bandwidth selection method. Namely, over the period from January 1, 2017 to December 31, 2020, we pick the side information indices as the predictors and pick the unweighted average stock return (averaged over the asset universe ) as the target values. We then use the least-squares cross-validation method from~\cite{li2004cross} to estimate the optimal Gaussian kernel bandwidth for each side information index in the Nadaraya-Watson kernel prediction. Then for each side information index, we divide the raw values by the corresponding optimal bandwidth.}

\modified{
\textbf{Parameter tuning:} We tune the hyperparameters of each portfolio allocation method as follows. At each trading day $t$ in the period between January 1, 2019 and December 31, 2020, we randomly sample $m = 50$ stocks from the asset universe to create a \emph{stock pool} instance. Then, we use the historical observations of the side information indices and the asset returns in the stock pool in the two-year window ($252\times 2$ observations) before $t$ to form the nominal distribution $\hat{\mathbb{P}}$. The covariate $x_0$ for time $t$ is chosen as the observed side information at time $t$. We then find the optimal portfolio allocation of each method by solving the corresponding optimization problem with varying hyperparameter values. The validation block consists of the subsequent $60$ trading days after $t$: we evaluate the expected loss of the portfolio by calculating the objective value $\EE_{\PP}[\ell(Y, \alpha, \beta)]$ using these 60 samples. We pick the hyperparameters that \textit{minimize} the average expected loss over the whole tuning period from January 1, 2019, to December 31, 2020.
}

\modified{\textbf{Out-of-sample performance computation:} To obtain the out-of-sample performance, we evaluate each method with the optimal hyperparameters in a rolling horizon scheme on the two-year test period from January 1, 2021, to December 31, 2022. On each trading day in the test period, we randomly sample $m=50$ assets from the universe to form the stock pool, and we pick the prior two-year window data on the side information and asset returns to form the nominal distribution. We solve the allocation problem associated with each method using the corresponding tuned hyperparameters to obtain the portfolio allocation, and then we calculate the realized daily percentage return and expected loss of this portfolio based on the empirical distribution of the subsequent 60 days.

If we denote the daily percentage return of a portfolio by $r_i$, $i = 1, \ldots, 60$, then the annualized Sharpe ratio is computed as:
\begin{equation*}
    \mbox{annualized Sharpe ratio}
    := \sqrt{252}\times
    \left(
    \widehat{\mbox{mean}}(\{r_i\}_{i=1, \ldots, 60}) \Big/
    \widehat{\mbox{std}}(\{r_i\}_{i=1, \ldots, 60})
    \right),
\end{equation*}
where $\widehat{\mbox{mean}}$ and $\widehat{\mbox{std}}$ denote the empirical mean and standard deviation estimator, respectively. 

The test procedure is repeated for trading days from the test period, resulting in $252\times2$ independent test experiments. For each optimally tuned method, we compute $252\times2$ Sharpe ratios, and will report on the average value. Other performance metrics of the portfolios, such as the expected loss $\ell(\cdot)$ values, the mean returns, the CVaR values, and the standard deviations, are computed using the empirical distribution of the 60 subsequent realized returns, and averaged over $252\times 2$ test experiments.
}

\textbf{Feasible set.} All methods share the same feasible set for the optimization problems. If an asset in the stock pool is not tradable at day $t$, we impose its weight as zero. 
In addition, we prohibit short selling by enforcing the investment weight vector $\alpha$ to be non-negative. 
Thus, the feasible region for the portfolio allocation problem is a subset of the non-negative simplex
\[
    \mc A = \left\{\alpha\in\R_{+}^{m} \mid \sum_{i\in[m]} \alpha_i = 1, ~\alpha_j = 0 \mbox{ if stock }j\mbox{ is not tradable} \right\}.
\]

Notice that the feasible set $\mc B$ is set to $\R$ throughout the experiment. The ground costs are chosen with $\DD_{\mc X}(x,\wh x) = \|x-\wh x\|_2^2$ and $\DD_{\mc Y}(y,\wh y) = \|y-\wh y\|_2^2$. All optimization problems are solved using the MOSEK quadratic program solver~\cite{mosek}. \modified{Due to the computational complexity of the hyperparameter fine-tuning process, we do not consider baselines that are reformulated as semi-definite programs in the experiment. We also provide computational solution time on the mean-variance and mean-CVaR models with $m=50$ and $399$ assets in Appendix \ref{sec:mean-cvar}.} All codes, data, and results are available at: 
\url{https://github.com/shanshanwang2019/Robustifying-Conditional-Portfolio-Decisions-via-Optimal-Transport}.

\subsection{Mean-Variance Portfolio Allocation Problems}\label{sec:experiment_mean-variance}

We first consider the portfolio allocation problem using the mean-variance criterion as in Lemma~\ref{lemma:MV}. Below, the loss function~$\ell$ is prescribed in~\eqref{eq:mean-var-loss}. We compare the following methods:
\begin{enumerate}[label = (\roman*), leftmargin = 5mm]
\item the Equal Weighted model (EW): in this case, the portfolio allocation solves
    \be \label{eq:MV} 
        \Min{\alpha \in \mc A}~\|\alpha\|_2.
    \tag{EW}
    \ee
    If all stocks are tradable, then the EW allocation coincides with the $1/m$-portfolio, which is renowned for its robustness~\cite{ref:demiguel2009optimal}. This portfolio is parameter-free, and was previously shown to be the limit of the distributionally robust portfolio allocations as the ambiguity size increases~\cite{ref:pflug20121/N}.
    \item the unconditional Mean-Variance model (MV) \cite{ref:markowitz1952portfolio}: the portfolio allocation solves
    \be 
    \Min{\alpha \in \mc A,~\beta\in \R} \EE_{\Pnom} [ \ell(Y, \alpha, \beta)]
    \tag{MV}
    \ee
    with the distribution $\Pnom$ is the empirical distribution supported on the available return data. 
    \item the Distributionally Robust unconditional Mean-Variance model (DRMV): the portfolio allocation solves
    \be
        \Min{\alpha \in \mc A}~\sqrt{\alpha^\top \textrm{Variance}_{\Pnom}(Y) \alpha } 
        -\eta\cdot \alpha^\top\EE_{\Pnom}[Y]
        +\sqrt{(1+\eta^2)\rho}\|\alpha\|_2.
        \tag{DRMV}
    \ee
    The above optimization problem is the reformulation of the distributionally robust mean-variance portfolio allocation with a Wasserstein ambiguity set on the distributions of the asset returns $Y$~\cite{ref:blanchet2018distributionally}. The tuning parameter for this method is $\rho \in \{0.05, 0.1, 0.25\}$.
    \item the Conditional Mean-Variance model (CMV): the portfolio allocation solves
    \be 
    \Min{\alpha \in \mc A,~\beta\in \R} \EE_{\Pnom} [ \ell(Y, \alpha, \beta)|X \in \mc N_{\gamma}(x_0)].
    \tag{CMV}
    \ee
    The parameter $\gamma$ is set to the $a$-quantile of the empirical distribution of the distance between $x_0$ and the training covariate vectors\footnote{More precisely, this is the empirical distribution supported on $\Delta_i = \DD_{\mc X}(\wh x_i, x_0)$ for $i = 1, \ldots, N$.}, where the quantile value is in the range $a \in \{5\%, 10\%, 25\%\}$. Notice that by the choice of $\gamma$, the empirical distribution satisfies $\Pnom(X \in \mc N_{\gamma}(x_0)) > 0$ and the conditional expectation is well-defined.
    \item the Distributionally Robust Conditional Mean-Variance model (DRCMV) with type-$\infty$ Wasserstein ambiguity set \cite{ref:nguyen2020distributionally}, in which the portfolio allocation solves 
    \be 
    \Min{\alpha \in \mc A,~\beta\in \R} \Sup{\QQ \in \mbb B^\infty_\rho, \QQ(X \in \mc N_\gamma(x_0)) > 0} ~\EE_{\QQ} [ \ell(Y, \alpha, \beta) | X \in \mc N_{\gamma}(x_0)].
    \tag{DRCMV}
    \ee
     The parameter $\gamma$ is set to the $a$-quantile of the empirical distribution of the distance between $x_0$ and the training covariate vectors, where the quantile value is in the range $\{5\%, 10\%, 25\%\}$. The radius $\rho$ is set to the $b$-quantile of the empirical distribution of the distance between $x_0$ and the training covariate vectors, where the quantile value is in the range $b \in \{5\%, 10\%, 25\%\}$. Using the result from~\cite[Proposition~2.5]{ref:nguyen2020distributionally}, this model can be reformulated as a second-order cone program.
    \item the Optimal Transport based (distributionally robust) Conditional Mean-Variance model (OTCMV) where the portfolio allocation is the solution to problem~\eqref{eq:dro-gamma0} 
    \be 
    \Min{\alpha \in \mc A,~\beta\in \R} \Sup{\QQ \in \mbb B_\rho, \QQ(X = x_0) \ge \eps} ~\EE_{\QQ} [ \ell(Y, \alpha, \beta) | X = x_0].
    \tag{OTCMV}
    \ee
The tuning parameters for this model includes the probability bound $\eps \in \{0.05,0.1,0.15\}$ and the radius $\rho = a\times\rho_{\min}$, where $a\in\{1.05,1.1,1.15\}$ and $\rho_{\min}$ denotes the minimum distance between the training covariate $(\wh x_i)_{i=1}^N$ and $x_0$. This optimization problem is equivalent to a second-order cone program thanks to Proposition~\ref{prop:markowitz-reform}.
\end{enumerate}

\modified{The tuning process and the out-of-sample performance computations are conducted as described previously. We present the average out-of-sample results regarding the average expected loss (exp. loss), portfolio return (mean), CVaR (CVaR) and standard deviation (stdDev) of return, and annualized Sharpe ratio (Sharpe) 
for the optimally-tuned methods in Table \ref{tab:mean-variance-50}. We present five different values of $\eta \in \{1, 3, 5, 7, 9\}$, corresponding to different possible trade-offs between the portfolio mean return and portfolio variance for the portfolio manager.

\begin{table}[!htp]
\caption{Average out-of-sample performance for optimally tuned mean-variance models. Bold values represent the best performance for each value of $\eta$.}\label{tab:mean-variance-50}
\begin{tabular}{llllllll}
\hline
$\eta$             & model & exp. loss $\downarrow$  & mean $\uparrow$ & CVaR  $\downarrow$ & stdDev  $\downarrow$ & Sharpe $\uparrow$ 
\\\hline
\multirow{6}{*}{1} & EW    & 1.421 & \textbf{0.037} & 0.605 & 1.162  & 0.735\\
                   & MV    & 0.885 & 0.021 & 0.571 & 0.921  & 0.534\\
                   & DRMV  & 0.895 & 0.022 & 0.526 & 0.923  & 0.609\\
                   & CMV   & \textbf{0.854} & 0.034 & 0.533 & \textbf{0.916}  & \textbf{0.758}\\
                   & DRCMV & 1.024 & 0.029 & 0.531 & 0.982  & 0.753\\
                   & OTCMV & 0.870 & 0.028 & \textbf{0.520} & 0.917  & 0.645\\\hline
\multirow{6}{*}{3} & EW    & 1.346 & \textbf{0.037} & 0.605 & 1.162  & 0.735\\
                   & MV    & 0.891 & 0.019 & 0.566 & 0.940  & 0.516\\
                   & DRMV  & 0.970 & 0.021 & \textbf{0.508} & 0.975  & 0.617\\
                   & CMV   & 0.956 & 0.032 & 0.553 & 0.998  & 0.676\\
                   & DRCMV & 1.005 & 0.029 & 0.540 & 1.003  & \textbf{0.738}\\
                   & OTCMV & \textbf{0.834} & 0.027 & 0.516 & \textbf{0.926}  & 0.624\\\hline
\multirow{6}{*}{5} & EW    & 1.272 & \textbf{0.037} & 0.605 & 1.162  & 0.735\\
                   & MV    & 0.933 & 0.017 & 0.567 & 0.971  & 0.503\\
                   & DRMV  & 1.076 & 0.020 & \textbf{0.512} & 1.041  & 0.616\\
                   & CMV   & 1.057 & 0.031 & 0.589 & 1.071  & 0.631\\
                   & DRCMV & 1.051 & 0.030 & 0.546 & 1.043  & \textbf{0.752}\\
                   & OTCMV & \textbf{0.808} & 0.026 & \textbf{0.512} & \textbf{0.938}  & 0.610 \\\hline
\multirow{6}{*}{7} & EW    & 1.197 &\textbf{0.037} & 0.605 & 1.162  & 0.735\\
                   & MV    & 0.999 & 0.015 & 0.569 & 1.009  & 0.488\\
                   & DRMV  & 1.188 & 0.020 & 0.540 & 1.104  & 0.599\\
                   & CMV   & 1.140 & 0.032 & 0.615 & 1.133  & 0.607 \\
                   & DRCMV & 0.989 & 0.030 & 0.543 & 1.042  & \textbf{0.752}\\
                   & OTCMV & \textbf{0.780} & 0.026 & \textbf{0.513} & \textbf{0.949}  & 0.609\\\hline
\multirow{6}{*}{9} & EW    & 1.122 & \textbf{0.037} & 0.605 & 1.162  & 0.735\\
                   & MV    & 1.082 & 0.013 & 0.579 & 1.051  & 0.476\\
                   & DRMV  & 1.163 & 0.023 & 0.551 & 1.119  & 0.621\\
                   & CMV   & 1.209 & 0.032 & 0.647 & 1.186  & 0.591\\
                   & DRCMV & 0.928 & 0.030 & 0.541 & 1.041  & \textbf{0.750}\\
                   & OTCMV & \textbf{0.745} & 0.026 & \textbf{0.515} & \textbf{0.957}  & 0.611\\\hline
\end{tabular}
\end{table}

\modified{Table \ref{tab:mean-variance-50} shows that OTCMV delivers the lowest expected loss compared to the baselines for four values of $\eta \in \{3, 5, 7, 9\}$. Only when $\eta=1$, the CMV baseline has a slightly smaller average expected loss than OTCMV. Compared with the EW baseline, OTCMV reduces by 36.3\%, on average over the set of $\eta$ values, for the average expected loss. 

From Table \ref{tab:mean-variance-50}, we observe that the MV baseline outperforms the DRMV and CMV baselines regarding the expected loss when $\eta \in \{3,5,7,9\}$, and the DRCMV baseline when $\eta \in \{1,3,5\}$. This indicates that incorporating type-$\infty$ Wasserstein ambiguity set and side information into the mean-variance models has a marginal effect on reducing the expected loss. Meanwhile, OTCMV provides better performance than MV in terms of the expected loss. The observations from Table~\ref{tab:mean-variance-50} further confirm the advantage of our proposed OT ambiguity set for the mean-variance portfolio problem. 

To gain further confirmation, we perform a collection of statistical hypothesis tests to compare the expected loss incurred by the OTCMV model with that incurred by the remaining five baselines. Let $\tilde L_{\mathrm{OTCMV}}$ be the random expected loss incurred by the OTCMV model, where the randomness is generated by a combination of the random stock pool, and pseudo-random trading day. Let $\tilde L_{i}$, $i \in  \{\mathrm{EW}, \mathrm{MV}, \mathrm{DRMV}, \mathrm{CMV}, \mathrm{DRCMV}\}$, be the (paired) random expected loss incurred by the competing baselines. Because a lower expected loss value is more desirable, for each  competing baseline $i$, we aim to test the following null hypothesis:
\[
\mathcal H_{0}^i: \text{Probability that $\tilde L_i > \tilde L_{\mathrm{OTCMV}}$ is 50\%},
\]
with the alternative hypothesis
\[
\mathcal H_{+}^i: \text{Probability that $\tilde L_i > \tilde L_{\mathrm{OTCMV}}$ is strictly more than 50\%}.
\]
We use the one-sided sign test to test the above null hypothesis. 
To this end, we form the pairs $(L_{i,t}, L_{\mathrm{OTCMV}, t})$, where $t = 1, \ldots, 252\times 2$ runs over all observed expected loss values collected in the two-year test period, and conduct the hypothesis test using the resulting paired-samples. The $p$-values are reported in the top part of Table~\ref{tab:p_mean-variance_50}. If we choose a significance level of 0.05 for the one-sided sign test with alternative hypothesis $\mathcal H_+$, the null hypothesis is rejected when $\eta\in\{3,5,7,9\}$. This indicates that the out-of-sample expected loss of the OTCMV model tends to be lower than the competing baselines. 

We also conduct the test with the reverse direction of the alternative hypothesis with
\[
\mathcal H_{-}^i: \text{Probability that $\tilde L_i > \tilde L_{\mathrm{OTCMV}}$ is strictly less than 50\%}.
\]
The $p$-values for this direction are reported in the bottom part of Table~\ref{tab:p_mean-variance_50}. If we choose a significance level
of 0.05, we fail to reject the null hypothesis for all competing baselines. Thus, no competing baseline can
outperform OTCMV. Overall, OTCMV appears to be the preferred choice among the mean-variance models.

\begin{table}[!htp]
\caption{$p$-values of the one-sided sign tests for mean-variance models. Bold values indicate that the $p$-value is smaller than 0.05.}\label{tab:p_mean-variance_50}
\begin{threeparttable}
\begin{tabular}{cllllll}
\hline
Alternative hypothesis                     & $\eta$ & EW        & MV       & DRMV     & CMV      & DRCMV    \\\hline
\multirow{5}{*}{$\mathcal H_+$} & 1      & $\bf{6.07\times10^{-114}}$ & 0.43 & 0.14 & 0.61 & $\bf{2.39\times10^{-21}}$ \\
                         & 3      & $\bf{1.17\times10^{-63}}$  & $\bf{2.14\times10^{-3}}$ & $\bf{1.10\times 10^{-15}}$ & $\bf{1.98\times 10^{-14}}$ & $\bf{2.24\times 10^{-12}}$ \\
                         & 5      & $\bf{5.42\times 10^{-37}}$  & $\bf{2.30\times 10^{-8}}$ & $\bf{1.43\times 10^{-24}}$ & $\bf{3.30\times 10^{-20}}$ & $\bf{2.80\times 10^{-9}}$ \\
                         & 7      & $\bf{1.01\times 10^{-26}}$  & $\bf{2.24\times 10^{-12}}$ & $\bf{1.30\times 10^{-27}}$ & $\bf{1.59\times 10^{-22}}$ & $\bf{8.82\times 10^{-6}}$ \\
                         & 9      & $\bf{1.14\times 10^{-16}}$  & $\bf{4.75\times 10^{-15}}$ & $\bf{2.39\times 10^{-21}}$ & $\bf{5.43\times 10^{-25}}$ & $\bf{3.44\times 10^{-4}}$ \\\hline
\multirow{5}{*}{$\mathcal H_-$}  & 1      & 1.00  & 0.61 & 0.88 & 0.43 & 1.00 \\
                         & 3      & 1.00  & 1.00 & 1.00 & 1.00 & 1.00 \\
                         & 5      & 1.00  & 1.00 & 1.00 & 1.00 & 1.00 \\
                         & 7      & 1.00  & 1.00 & 1.00 & 1.00 & 1.00 \\
                         & 9      & 1.00  & 1.00 & 1.00 & 1.00 & 1.00\\\hline
\end{tabular}
\end{threeparttable}
\end{table}

}
\subsection{Mean-CVaR Portfolio Allocation Problems}\label{sec:experiment_mean-cvar}
We next turn to considering the portfolio allocation problem using the mean-CVaR criterion as in Lemma~\ref{lemma:MCVaR}. Below, the loss function~$\ell$ is prescribed in~\eqref{eq:mean-cvar-loss}. We compare the following methods:
\begin{enumerate}[label = (\roman*), leftmargin = 5mm]
    \item the Equal Weighted model (EW);
    \item the unconditional Mean-CVaR model (MC): the portfolio allocation solves
    \be 
    \Min{\alpha \in \mc A,~\beta\in \R} \EE_{\Pnom} [ \ell(Y, \alpha, \beta)]
    \tag{MC}
    \ee
    with the distribution $\Pnom$ is the empirical distribution supported on the available return data. In the experiment, the risk tolerance $\tau = 0.05$ is fixed and also be used to form the objective function in subsequent methods. 
    \item the Distributionally Robust unconditional Mean-CVaR model (DRMC): the portfolio allocation solves
    \be
        \Min{\alpha \in \mc A}~\Sup{\QQ \in \mbb B_\rho} ~\EE_{\QQ} [ \ell(Y, \alpha, \beta)].
        \tag{DRMC}
    \ee
    The inner maximization problem can be reformulated using the results of~\cite{ref:blanchet2019quantifying,ref:esfahani2018data}. The parameter grid for $\rho$ is the same as for the DRMV model.
    \item the Conditional Mean-CVaR model (CMC): the portfolio allocation solves
    \be 
    \Min{\alpha \in \mc A,~\beta\in \R} \EE_{\Pnom} [ \ell(Y, \alpha, \beta)|X \in \mc N_{\gamma}(x_0)].
    \tag{CMC}
    \ee
    The parameter grid for $\gamma$ is the same as for the CMV model.
    
    \item the Distributionally Robust Conditional Mean-CVaR model (DRCMC) with type-$\infty$ Wasserstein ambiguity set \cite{ref:nguyen2020distributionally}, in which the portfolio allocation solves 
    \be 
    \Min{\alpha \in \mc A,~\beta\in \R} \Sup{\QQ \in \mbb B^\infty_\rho, \QQ(X \in \mc N_\gamma(x_0)) > 0} ~\EE_{\QQ} [ \ell(Y, \alpha, \beta) | X \in \mc N_{\gamma}(x_0)].
    \tag{DRCMC}
    \ee
    The parameter grids for $\gamma$ and $\rho$ are the same as for the DRCMV model.
    Using the result from~\cite[Proposition~2.5]{ref:nguyen2020distributionally}, this model can be reformulated as a second-order cone program (see Appendix~\ref{appendix:type-infty} for definition of $\BB_\rho^{\infty}$ and detailed derivation of reformulation).
    \item the Optimal Transport (distributionally robust) Conditional Mean-CVaR model (OTCMC), where the portfolio allocation is the solution to problem~\eqref{eq:dro-gamma0} with the mean-CVaR loss function~\eqref{eq:mean-cvar-loss}
    \be 
    \Min{\alpha \in \mc A,~\beta\in \R} \Sup{\QQ \in \mbb B_\rho, \QQ(X = x_0) \ge \eps} ~\EE_{\QQ} [ \ell(Y, \alpha, \beta) | X = x_0].
    \tag{OTCMC}
    \ee
    The parameter grids for $\eps$ and $\rho$ are the same as for the OTCMV model. 
    This model is equivalent to a second-order cone program thanks to Proposition~\ref{prop:mean-cvar-reform}.
\end{enumerate}

\modified{Table \ref{tab:mean-cvar-50} presents the average out-of-sample results for the optimally tuned mean-CVaR models. }
\begin{table}[!htp]
\caption{Average out-of-sample performance for optimally tuned mean-CVaR models. Bold values represent the best performance for each value of $\eta$.}\label{tab:mean-cvar-50}
\begin{tabular}{llllllll}
\hline
$\eta$             & model &exp. loss $\downarrow$ & mean $\uparrow$ & CVaR $\downarrow$ & stdDev $\downarrow$ & Sharpe $\uparrow$ \\\hline
\multirow{6}{*}{1} & EW    & 0.568 & \textbf{0.037} & 0.605 & 1.162  & \textbf{0.735}\\
                   & MC    & 0.564 & 0.020 & 0.583 & 0.966  & 0.459\\
                   & DRMC  & 0.508 & 0.024 & 0.531 & \textbf{0.936}  & 0.599\\
                   & CMC   & 0.521 & 0.026 & 0.547 & 0.994  & 0.559\\
                   & DRCMC & \textbf{0.492} & 0.026 & \textbf{0.518} & 0.993  & 0.678\\
                   & OTCMC & 0.495 & 0.028 & 0.523 & 1.021  & 0.679\\\hline
\multirow{6}{*}{3} & EW    & 0.493 & \textbf{0.037} & 0.605 & 1.162  & \textbf{0.735} \\
                   & MC    & 0.526 & 0.018 & 0.579 & 0.983  & 0.448\\
                   & DRMC  & 0.469 & 0.021 & 0.531 & \textbf{0.950}  & 0.558\\
                   & CMC   & 0.510 & 0.027 & 0.591 & 1.059  & 0.558\\
                   & DRCMC & 0.451 & 0.028 & 0.535 & 1.019  & 0.715\\
                   & OTCMC & \textbf{0.434} & 0.028 & \textbf{0.518} & 0.992  & 0.679\\\hline
\multirow{6}{*}{5} & EW    & 0.418 & \textbf{0.037} & 0.605 & 1.162  & \textbf{0.735}\\
                   & MC    & 0.508 & 0.015 & 0.582 & 1.012  & 0.418\\
                   & DRMC  & 0.430 & 0.019 & 0.527 & \textbf{0.969}  & 0.549 \\
                   & CMC   & 0.496 & 0.029 & 0.640 & 1.129  & 0.550 \\
                   & DRCMC & 0.401 & 0.029 & 0.546 & 1.035  & 0.726 \\
                   & OTCMC & \textbf{0.385} & 0.028 & \textbf{0.524} & 0.994  & 0.688\\\hline
\multirow{6}{*}{7} & EW    & 0.344 & \textbf{0.037} & 0.605 & 1.162  & \textbf{0.735}\\
                   & MC    & 0.518 & 0.012 & 0.603 & 1.055  & 0.395 \\
                   & DRMC  & 0.405 & 0.018 & \textbf{0.528} & \textbf{0.996}  & 0.542\\
                   & CMC   & 0.468 & 0.032 & 0.690 & 1.208  & 0.554 \\
                   & DRCMC & \textbf{0.343} & 0.030 & 0.551 & 1.045  & 0.727 \\
                   & OTCMC & 0.347 & 0.027 & 0.537 & 1.008  & 0.677 \\\hline
\multirow{6}{*}{9} & EW    & \textbf{0.269} & \textbf{0.037} & 0.605 & 1.162  & \textbf{0.735}\\
                   & MC    & 0.552 & 0.010 & 0.639 & 1.116  & 0.378\\
                   & DRMC  & 0.376 & 0.017 & \textbf{0.527} & 1.028  & 0.545\\
                   & CMC   & 0.420 & 0.035 & 0.731 & 1.279  & 0.572\\
                   & DRCMC & 0.283 & 0.030 & 0.553 & 1.052  & 0.726\\
                   & OTCMC & 0.287 & 0.027 & 0.533 & \textbf{1.012}  & 0.675 \\\hline
\end{tabular}
\end{table}

Table~\ref{tab:mean-cvar-50} shows that OTCMC has the lowest average expected loss when $\eta\in \{3,5\}$, and competitive average expected loss for the other three $\eta$ values. 
Moreover, the DRMC and CMC baselines outperform the MC baseline regarding the expected loss; this implies that incorporating distributionally robustification or side information improves the vanilla MC baseline. However, DRCMC and OTCMC deliver better performance than DRMC and CMC along these metrics. These observations imply that incorporating distributionally robustification \textit{and} side information further improves performance. Moreover, compared with EW, OTCMC reduces by 5.0\% on average for the average expected loss.

We perform a collection of hypothesis tests described in Section \ref{sec:experiment_mean-variance} for the mean-CVaR results. The $p$-values for the tests are reported in Table~\ref{tab:p_mean-cvar_50}. If we choose a significance level of 0.05 for the one-sided sign test with alternative hypothesis $\mathcal H_+$,  the null hypothesis is rejected for 40\% of instances. This indicates a reasonable tendency for OTCMC to deliver lower expected loss than the competing baselines. If we choose a significance level of 0.05, we fail to reject the null hypothesis in favor of the alternative $\mathcal{H}_-$ for all competing baselines. Thus, no competing baseline is found to outperform OTCMC in a statistically significant way. Overall, OTCMC appears to be the preferred choice among the mean-CVaR models.
\begin{table}[!htp]
\caption{$p$-values (round to two decimal numbers) of the one-sided sign tests for mean-CVaR models. Bold values indicate that the $p$-value is smaller than 0.05.}\label{tab:p_mean-cvar_50}
\begin{threeparttable}
\begin{tabular}{cllllll}
\hline
Alternative hypothesis                    & $\eta$ & EW       & MC       & DRMC     & CMC      & DRCMC    \\\hline
\multirow{5}{*}{$\mathcal H_+$} & 1      & $\bf{5.07\times10^{-11}}$ & 0.12 & 0.57 & 0.30 & 0.61 \\
                         & 3      & $\bf{2.87\times10^{-5}}$ & $\bf{3.72\times10^{-2}}$ & 0.16 & 0.09 & 0.05 \\
                         & 5      & $\bf{8.87\times10^{-4}}$ & $\bf{1.20\times10^{-3}}$ & 0.21 & $\bf{3.44\times10^{-4}}$ & \textbf{0.01} \\
                         & 7      & 0.05 & $\bf{1.05\times10^{-6}}$ & 0.09 & 0.16 & 0.27 \\
                         & 9      & 0.21 & $\bf{3.86\times10^{-6}}$ & $\bf{7.98\times10^{-3}}$ & 0.19 & 0.12 \\\hline
\multirow{5}{*}{$\mathcal H_-$}  &1	&1.00	&0.89	&0.46	&0.73	&0.43\\
&3	&1.00	&0.97	&0.86	&0.92	&0.95\\
&5	&1.00	&1.00	&0.81	&1.00	&0.99\\
&7	&0.95	&1.00	&0.92	&0.86	&0.76\\
&9	&0.81	&1.00	&0.99	&0.84	&0.89\\\hline
\end{tabular}
\end{threeparttable}
\end{table}

}

\subsection{\modified{Summary of Empirical Findings}}

\modified{The empirical experiments presented above indicate that OTCMV and OTCMV can significantly impact the out-of-sample performance of portfolios produced in an environment where side information can be exploited. As with any machine/statistical learning method, the performance of OTCMV and OTCMC will depend on the nature of the application, the quality of the data, and the choice of side information that is used. We expect that additional empirical evaluation with other data and application environments would benefit our understanding of the value of our proposed OTCMV and OTCMC approaches.}

\paragraph{\bf Acknowledgments.}
The material in this paper is based upon work supported by the Air Force Office of Scientific Research under award number FA9550-20-1-0397. Additional support is gratefully acknowledged from NSF grants 1915967, 1820942, 1838676, NSERC grant  RGPIN-2016-05208, and the China Merchant Bank. Finally, this research was enabled in part by support provided by Compute Canada.

\appendix

\section{Proofs of Main Results}

\subsection{Proofs of Section~\ref{sec:setup}}

The following results are needed to justify Lemmas~\ref{lemma:MV} and~\ref{lemma:MCVaR}. 

\begin{lemma}[Interchange] \label{lemma:interchange}
Suppose that $\mc Y$ and $\mc B$ are compact and that $\ell$ is continuous in $Y$ and convex in $\beta$. Then we have
\[
    \Sup{\QQ \in \mbb B_\rho, \QQ(X \in \mc N_\gamma(x_0)) \ge \eps}\Inf{\beta \in \mc B}~\EE_{\QQ}[ \ell(Y, \alpha, \beta)|X\in \mc N_\gamma(x_0)] =\Inf{\beta \in \mc B} \Sup{\QQ \in \mbb B_\rho, \QQ(X \in \mc N_\gamma(x_0)) \ge \eps}~\EE_{\QQ}[ \ell(Y, \alpha, \beta)|X\in \mc N_\gamma(x_0)].
\]
\end{lemma}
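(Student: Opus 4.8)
The statement is a minimax interchange (Sion-type), so the natural route is to verify the hypotheses of Sion's minimax theorem for the function
\[
    (\beta, \QQ) \longmapsto \EE_{\QQ}[\ell(Y,\alpha,\beta) \mid X \in \mc N_\gamma(x_0)],
\]
where $\alpha$ is held fixed. First I would set up the right domains: $\beta$ ranges over the compact convex set $\mc B$ (it is an interval by the way $\mc B$ is constructed in Lemma~\ref{lemma:MV}, hence convex and, under the compactness assumption, compact), and $\QQ$ ranges over the set $\mc C \Let \{\QQ \in \mbb B_\rho : \QQ(X \in \mc N_\gamma(x_0)) \ge \eps\}$. The set $\mc C$ is convex: $\mbb B_\rho$ is convex because $\Wass(\cdot,\Pnom)$ is convex in its first argument (the transport cost is linear in the coupling and the marginal constraints are linear), and the probability-mass constraint $\QQ \mapsto \QQ(X \in \mc N_\gamma(x_0))$ is linear in $\QQ$. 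Since $\mc Y$ (and $\mc X$) should be taken compact here, $\mc X\times\mc Y$ is compact, $\mbb B_\rho$ is weakly compact (tightness is automatic, and $\Wass(\cdot,\Pnom)\le\rho$ is weakly closed because $\DD$ is continuous and nonnegative, via lower semicontinuity of the transport functional), and the mass constraint defines a weakly closed subset because $\mc N_\gamma(x_0)\times\mc Y$ is closed — so $\mc C$ is weakly compact.

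**Concavity/convexity and semicontinuity.** Next I would check the saddle structure. For fixed $\QQ \in \mc C$, the map $\beta \mapsto \EE_{\QQ}[\ell(Y,\alpha,\beta)\mid X\in\mc N_\gamma(x_0)]$ is convex because $\ell(y,\alpha,\cdot)$ is convex by hypothesis and conditional expectation preserves convexity (the conditioning event has $\QQ$-probability at least $\eps>0$... but note $\eps$ may be $0$ here, in which case one must restrict attention to the sub-collection of $\QQ$ with strictly positive mass on the fiber, or argue the interchange on that dense sub-collection and pass to the limit; I would handle the $\eps=0$ case by the same reduction used elsewhere in the paper, or simply state the lemma for the regime where the conditional expectation is well-defined). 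For fixed $\beta$, the map $\QQ \mapsto \EE_{\QQ}[\ell(Y,\alpha,\beta)\mid X\in\mc N_\gamma(x_0)]$ is a ratio of two linear functionals of $\QQ$ — $\QQ\mapsto\EE_{\QQ}[\ell(Y,\alpha,\beta)\mathbbm 1_{\mc N_\gamma(x_0)}(X)]$ over $\QQ\mapsto\QQ(X\in\mc N_\gamma(x_0))$ — hence \emph{quasi-concave} (indeed quasi-linear) on the convex set $\mc C$, which is exactly what Sion's theorem requires of the outer-max variable. Finally, since $\ell$ is continuous in $y$ and $\mc Y$ compact, $\ell$ is bounded on the relevant range, so both the numerator and (bounded-below) denominator functionals are weakly continuous in $\QQ$, making the ratio weakly upper semicontinuous where the denominator is bounded away from $0$; and continuity in $\beta$ is clear by dominated convergence.

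**Conclude.** With $\mc B$ compact convex, $\mc C$ convex and weakly compact, the objective quasi-concave and weakly u.s.c.\ in $\QQ$ and convex and continuous in $\beta$, Sion's minimax theorem yields the claimed exchange of $\Sup_{\QQ}$ and $\Inf_{\beta}$.

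**Main obstacle.** The delicate point is the fractional/conditional nature of the objective: Sion's theorem is cleanest for jointly "nice" functions, and a ratio of linear functionals is only quasi-concave, not concave, in $\QQ$ — so I must invoke the version of the minimax theorem that only needs quasi-concavity on the max side (this is precisely Sion's original formulation, so it is fine, but it must be cited correctly rather than the more familiar bilinear version). The secondary subtlety is the behavior when the denominator $\QQ(X\in\mc N_\gamma(x_0))$ can be small or zero: continuity/semicontinuity of the ratio degrades near the boundary where the mass hits $\eps$ (or $0$). I expect to resolve this by working on $\mc C$ with $\eps$ replaced by a strictly positive constant (on which the denominator is uniformly bounded below, so the ratio is genuinely weakly continuous), proving the interchange there, and then — if needed — passing $\eps \downarrow 0$ using monotonicity of both sides in $\eps$, or simply stating the lemma under the standing assumption that the conditioning event has positive probability so that the conditional expectation is defined.
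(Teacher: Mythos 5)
Your overall route---apply Sion's minimax theorem directly on the joint-distribution side, using that the conditional expectation is a ratio of two linear functionals of $\QQ$ and hence quasi-linear on the convex set $\mc C$---is genuinely different from the paper's. The paper instead lifts the problem to the set $\mc B_{x_0,\gamma,\eps}(\mbb B_\rho)$ of \emph{conditional} distributions $\mu_{x_0}$ on $\mc Y$ induced by feasible $\QQ$, proves that this lifted set is convex (Lemma~\ref{lemma:convexity}, via a reweighted mixture $\theta\QQ^1+(1-\theta)\QQ^0$ with $\theta$ chosen so that the induced conditional measure is exactly the desired convex combination), and then applies Sion to the objective $\mu_{x_0}\mapsto\EE_{\mu_{x_0}}[\ell(Y,\alpha,\beta)]$, which is linear and weakly continuous on $\mc M(\mc Y)$ because $\ell$ is continuous and $\mc Y$ is compact. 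That lifting is not a cosmetic difference: it is exactly what lets the paper avoid the semicontinuity problem that your version runs into.

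The concrete gap is your claim that the numerator $\QQ\mapsto\EE_\QQ[\ell(Y,\alpha,\beta)\mathbbm{1}_{\mc N_\gamma(x_0)}(X)]$ and the denominator $\QQ\mapsto\QQ(X\in\mc N_\gamma(x_0))$ are weakly continuous. They are not: the integrand $\mathbbm{1}_{\mc N_\gamma(x_0)\times\mc Y}$ is discontinuous on the boundary of the fiber set, so under weak convergence mass can accumulate on $\partial\mc N_\gamma(x_0)\times\mc Y$ and both functionals can jump. The denominator is only weakly upper semicontinuous, and the numerator is in general neither upper nor lower semicontinuous because $\ell$ changes sign; hence the upper-semicontinuity-in-$\QQ$ hypothesis of Sion's theorem is not verified in the weak topology. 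A second, related slip is that you try to establish weak compactness of $\mc C$, which would force you to assume $\mc X$ compact---an assumption the lemma does not make---whereas Sion requires compactness of only one of the two sets, and that role is already played by $\mc B$. The repair within your framework is to drop the compactness claim for $\mc C$ and equip the measure side with a topology under which integration against bounded \emph{measurable} functions is continuous (e.g., the total variation topology); there the ratio, with denominator bounded below by $\eps>0$, is continuous and quasi-linear, and Sion applies with compactness carried entirely by the convex compact set $\mc B$. Alternatively, adopt the paper's reformulation in terms of conditional measures, where the objective is linear and the only nontrivial step is the convexity of the lifted ambiguity set. Your handling of the $\eps=0$ degeneracy is reasonable and consistent with how the paper treats that regime elsewhere.
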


A similar result on the interchange of the infimum and the supremum operators is obtained in~\cite[Lemma~B.3]{ref:nguyen2020distributionally}. Nevertheless, there is a subtle distinction in the conditions of the two results: Lemma~\ref{lemma:interchange} requires that $\mathcal{Y}$ and $\mathcal{B}$ are compact, while~\cite[Lemma~B.3]{ref:nguyen2020distributionally} does not require these compactness conditions. This distinction emphasizes the qualitative difference between an $\infty$-Wasserstein ambiguity set of~\cite{ref:nguyen2020distributionally} and the optimal transport ambiguity set $\mbb B_\rho$ in this paper. Indeed, the weak-compactness condition is automatically satisfied by the $\infty$-Wasserstein set in~\cite{ref:nguyen2020distributionally}. Still, this condition is not satisfied by the set $\mbb B_\rho$, which leads to the additional requirement on the compactness of $\mathcal B$ in Lemma~\ref{lemma:interchange}. Further, Lemma~\ref{lemma:interchange} requires the compactness of $\mathcal{Y}$ for the upper-continuity of the objective function in $\mathbb{Q}$. On the contrary, the continuity condition is automatically satisfied by the $\infty$-Wasserstein set in~\cite{ref:nguyen2020distributionally}.

The proof of Lemma~\ref{lemma:interchange} follows from the convexity of the conditional ambiguity set. To this end, define the following ambiguity set
\[
\mc B_{x_0, \gamma, \eps}(\mbb B_\rho) \Let \left\{ \mu_{x_0} \in \mc M(\mc Y): \begin{array}{l}
	\exists \QQ \in \mbb B_\rho \text{ such that } \QQ(\mc N_\gamma(x_0) \times\mc Y) \ge \eps \\
	~\QQ(\mc N_\gamma(x_0)  \times A ) = \mu_{x_0}(A) \QQ(\mc N_\gamma(x_0)  \times \mc Y)~~\forall A  \subseteq \mc Y~\text{ measurable}
	\end{array}
	\right\}.
\]

\begin{lemma}[Convexity of $\mc B_{x_0, \eps, \gamma}(\mbb B_\rho)$] \label{lemma:convexity}
     If $\mbb B_\rho$ is convex, then the conditional ambiguity set $\mc B_{x_0, \gamma, \eps}(\mbb B_\rho)$ is also convex. 
\end{lemma}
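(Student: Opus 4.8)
The plan is to work directly from the definition of $\mc B_{x_0, \gamma, \eps}(\mbb B_\rho)$ and show that a convex combination of two of its members is again a member, by explicitly constructing a witness joint distribution. First, take $\mu_{x_0}^{(1)}, \mu_{x_0}^{(2)} \in \mc B_{x_0, \gamma, \eps}(\mbb B_\rho)$ and fix $t \in [0, 1]$; I want to show $\mu_{x_0} \Let t\,\mu_{x_0}^{(1)} + (1-t)\,\mu_{x_0}^{(2)}$ lies in the set. By definition, there exist $\QQ^{(1)}, \QQ^{(2)} \in \mbb B_\rho$ with $\QQ^{(j)}(\mc N_\gamma(x_0) \times \mc Y) \ge \eps$ and $\QQ^{(j)}(\mc N_\gamma(x_0) \times A) = \mu_{x_0}^{(j)}(A)\, \QQ^{(j)}(\mc N_\gamma(x_0) \times \mc Y)$ for every measurable $A \subseteq \mc Y$. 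Write $p_j \Let \QQ^{(j)}(\mc N_\gamma(x_0) \times \mc Y) \ge \eps$ for the fiber masses. The naive guess $t\QQ^{(1)} + (1-t)\QQ^{(2)}$ does \emph{not} work directly, because the induced conditional on the fiber would be the $\big(tp_1, (1-t)p_2\big)$-weighted mixture of $\mu_{x_0}^{(1)}$ and $\mu_{x_0}^{(2)}$, not the $\big(t, 1-t\big)$-weighted one, unless $p_1 = p_2$.

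The fix is to reweight: choose mixing coefficients $\theta \in [0,1]$ so that the fiber-conditional of $\theta \QQ^{(1)} + (1-\theta)\QQ^{(2)}$ is exactly $\mu_{x_0}$. The induced fiber mass is $\theta p_1 + (1-\theta) p_2$, and the induced fiber-conditional assigns weight $\theta p_1 / (\theta p_1 + (1-\theta)p_2)$ to $\mu_{x_0}^{(1)}$; setting this equal to $t$ and solving gives $\theta = t p_2 / (t p_2 + (1-t) p_1)$, which is a well-defined element of $[0,1]$ since $p_1, p_2 \ge \eps > 0$ (and if $\eps = 0$ one can handle the degenerate cases separately, or note $p_j = 0$ forces the conditional to be arbitrary so the claim is vacuous there). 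Define $\QQ \Let \theta \QQ^{(1)} + (1-\theta)\QQ^{(2)}$. Then: (i) $\QQ \in \mbb B_\rho$ because $\mbb B_\rho$ is convex by hypothesis; (ii) $\QQ(\mc N_\gamma(x_0) \times \mc Y) = \theta p_1 + (1-\theta) p_2 > 0$, and a short computation shows this is $\ge \eps$ since it is a convex combination of $p_1, p_2 \ge \eps$; (iii) for every measurable $A$, $\QQ(\mc N_\gamma(x_0) \times A) = \theta \mu_{x_0}^{(1)}(A) p_1 + (1-\theta)\mu_{x_0}^{(2)}(A) p_2 = \big(\theta p_1 + (1-\theta)p_2\big)\big(t \mu_{x_0}^{(1)}(A) + (1-t)\mu_{x_0}^{(2)}(A)\big) = \mu_{x_0}(A)\, \QQ(\mc N_\gamma(x_0) \times \mc Y)$, where the middle equality is exactly the choice of $\theta$. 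This exhibits $\QQ$ as the required witness, so $\mu_{x_0} \in \mc B_{x_0, \gamma, \eps}(\mbb B_\rho)$, establishing convexity.

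The one step requiring a little care is verifying property (iii) uniformly over all measurable $A$, i.e., confirming that the algebraic identity $\theta p_1 \cdot \mu^{(1)}(A) + (1-\theta)p_2 \cdot \mu^{(2)}(A) = (\theta p_1 + (1-\theta)p_2)(t\mu^{(1)}(A) + (1-t)\mu^{(2)}(A))$ holds for the chosen $\theta$ — this reduces to $\theta p_1 = t(\theta p_1 + (1-\theta)p_2)$, which is precisely how $\theta$ was defined, so it holds identically in $A$. I expect the main (minor) obstacle to be the bookkeeping around edge cases: when $\eps = 0$ one could have $p_j = 0$, in which case the conditioning event has measure zero and the defining constraint on $\mu_{x_0}^{(j)}$ is vacuous; one should either restrict attention to the regime $\eps > 0$ (which is the case of interest in the applications of this lemma, e.g.\ Lemma~\ref{lemma:interchange}) or observe that if either $p_1 = 0$ or $p_2 = 0$ the corresponding $\mu^{(j)}$ is unconstrained and the argument simplifies. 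No deep machinery is needed — the result is essentially the standard fact that a conditional-distribution slice of a convex set of joint distributions, taken over a fixed event, is convex, with the only subtlety being the mass-dependent reweighting.
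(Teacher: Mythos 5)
Your proposal is correct and follows essentially the same route as the paper's proof: the paper likewise forms the witness $\QQ^\lambda = \theta \QQ^1 + (1-\theta)\QQ^0$ with the mass-dependent reweighting $\theta = \lambda \QQ^0(\mc N_\gamma(x_0)\times\mc Y)\big/\big(\lambda \QQ^0(\mc N_\gamma(x_0)\times\mc Y) + (1-\lambda)\QQ^1(\mc N_\gamma(x_0)\times\mc Y)\big)$, which is exactly your choice of $\theta$, and then verifies the same three properties. Your additional remarks on the $\eps=0$ edge case are a sensible clarification but not a substantive departure.
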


The proof of Lemma~\ref{lemma:convexity} can be obtained by a minor modification of the proof for~\cite[Lemma~B.5]{ref:nguyen2020distributionally}. The proof is included here for completeness.

\begin{proof}[Proof of Lemma~\ref{lemma:convexity}]
    Let $\mu_0^0,~\mu_0^1 \in \mc B_{x_0, \gamma, \eps}(\mbb B_\rho)$ be two arbitrary probability measures supported on $\mc Y$. Associated with each $\mu_0^j$, $j \in \{0, 1\}$, is a corresponding joint measure $\QQ^j \in \mc M(\mc X \times \mc Y)$ such that
	    \[
	        \QQ^j(\mc N_\gamma(x_0) \times \mc Y) \ge \eps \quad \text{and} \quad 
		        \ds\frac{\QQ^j(\mc N_\gamma(x_0) \times A)}{\QQ^j(\mc N_\gamma(x_0) \times \mc Y)} = \mu_0^j(A) \quad \forall A \subseteq \mc Y~\text{measurable}.
	    \]
	    Select any $\lambda \in (0, 1)$. We now show that $\mu_0^\lambda = \lambda \mu_0^1 + (1-\lambda) \mu_0^0 \in \mc B_{x_0, \gamma, \eps}(\mbb B_\rho)$. Indeed, consider the joint measure 
	    \[
	        \QQ^\lambda = \theta \QQ^1 + (1-\theta) \QQ^0,
	    \]
	    where $\theta$ is defined as
	    \[
	        \theta = \frac{\lambda \QQ^0(\mc N_\gamma(x_0) \times \mc Y)}{\lambda \QQ^0(\mc N_\gamma(x_0) \times \mc Y) + (1-\lambda) \QQ^1(\mc N_\gamma(x_0) \times \mc Y)} \in [0, 1].
	    \]
	    By definition, we have $\QQ^\lambda(\mc N_\gamma(x_0) \times \mc Y) \ge \eps$, and by the convexity of $\mbb B_\rho$, we have $\QQ^\lambda \in \mbb B_\rho$. Moreover, for any set $A \subseteq \mc Y$ measurable, we find
	    \begin{align*}
	        \frac{\QQ^\lambda(\mc N_\gamma(x_0) \times A)}{\QQ^\lambda(\mc N_\gamma(x_0) \times \mc Y)} &= \frac{\theta \QQ^1(\mc N_\gamma(x_0) \times A) + (1-\theta) \QQ^0(\mc N_\gamma(x_0) \times A)}{\theta \QQ^1(\mc N_\gamma(x_0) \times \mc Y) + (1-\theta) \QQ^0(\mc N_\gamma(x_0) \times \mc Y)} \\
	        &= \frac{\lambda \QQ^0(\mc N_\gamma(x_0) \times \mc Y) \QQ^1(\mc N_\gamma(x_0) \times A) + (1-\lambda) \QQ^1(\mc N_\gamma(x_0) \times \mc Y) \QQ^0(\mc N_\gamma(x_0) \times A)}{\QQ^0(\mc N_\gamma(x_0) \times \mc Y) \QQ^1(\mc N_\gamma(x_0) \times \mc Y)} \\
	        &= \frac{\lambda \QQ^1(\mc N_\gamma(x_0) \times A)}{\QQ^1(\mc N_\gamma(x_0) \times \mc Y)} + \frac{(1-\lambda) \QQ^0(\mc N_\gamma(x_0) \times A)}{\QQ^0(\mc N_\gamma(x_0) \times \mc Y)} \\
	        &= \lambda \mu_0^1(A) + (1-\lambda) \mu_0^0(A),
	    \end{align*}
	   where the second equality holds thanks to the definition of $\theta$. This line of argument implies that $\mu_0^\lambda \in \mc B_{x_0, \gamma, \eps}(\mbb B_\rho)$, and further asserts the convexity of $\mc B_{x_0, \gamma, \eps}(\mbb B_\rho)$. 
\end{proof}

We are now ready to prove Lemma~\ref{lemma:interchange}.
\begin{proof}[Proof of Lemma~\ref{lemma:interchange}]
    By rewriting the conditional expectation using the conditional measure $\muxo$, we have for any value of $\alpha$
    \[
        \Sup{\QQ \in \mbb B_\rho, \QQ(X \in \mc N_\gamma(x_0)) \ge \eps}\Inf{\beta \in \mc B}~\EE_{\QQ}[ \ell(Y, \alpha, \beta)|X\in \mc N_\gamma(x_0)] = \Sup{\muxo \in \mc B_{x_0, \gamma, \eps}(\mbb B_\rho)}~\Inf{\beta \in \mc B}~\EE_{\muxo}[\ell(Y, \alpha, \beta)].
    \]
     The set $\mc B_{x_0, \gamma, \eps}(\mbb B_\rho)$ is convex by Lemma~\ref{lemma:convexity}. 
    Moreover, because $\ell$ is continuous in $Y$, the mapping $\muxo \mapsto \EE_{\muxo}[\ell(Y, \alpha, \beta)]$ is upper semicontinuous in the weak topology thanks to the compactness of $\mc Y$. Finally, $\mc B$ is compact and $\ell$ is convex in $\beta$. The interchangeability of the supremum and the infimum operators is now a consequence of Sion's minimax theorem~\cite{ref:sion1958minimax}. 
\end{proof}

We are ready to prove the results in Section~\ref{sec:setup}.

\begin{proof}[Proof of Lemma~\ref{lemma:MV}]
It is well known that 
\[
\mathrm{Variance}_\QQ[Y^\top \alpha| X \in \mc N_\gamma(x_0)] = \Min{\beta \in \R}~\EE_{\QQ}[ (Y^\top \alpha - \beta)^2 | X \in \mc N_{\gamma}(x_0)]
\]
for any probability measure $\QQ \in \mc M(\mc X \times \mc Y)$, where the optimal $\beta$ is the conditional mean of $Y^\top \alpha$ given $X \in \mc N_{\gamma}(x_0)$ under $\QQ$. When $\mc A$ and $\mc Y$ are compact, the random variable $Y^\top \alpha$ has a bounded mean for any $\QQ \in \mbb B_\rho$. More precisely, we have
\[
    \EE_{\QQ}[Y^\top \alpha | X \in \mc N_\gamma(x_0)] \in \mc B,
\]
where $\mc B$ is defined as in the statement of the lemma. Therefore, it is without any loss of optimality to restrict $\beta \in \mc B$. We thus find
\begin{align*}
    &\Min{\alpha \in \mc A}~\Sup{\QQ \in \mbb B_\rho, \QQ(X \in \mc N_\gamma(x_0)) \ge \eps}~\mathrm{Variance}_\QQ[Y^\top \alpha| X \in \mc N_\gamma(x_0)] - \eta\cdot \EE_\QQ[ Y^\top\alpha| X \in \mc N_\gamma(x_0)] \\
    =&\Min{\alpha \in \mc A}~\Sup{\QQ \in \mbb B_\rho, \QQ(X \in \mc N_\gamma(x_0)) \ge \eps}~\min_{\beta \in \R}~ \EE_{\QQ} [\ell(Y, \alpha, \beta) | X \in \mc N_{\gamma}(x_0)] \\
    =&\Min{\alpha \in \mc A}~\Sup{\QQ \in \mbb B_\rho, \QQ(X \in \mc N_\gamma(x_0)) \ge \eps}~\min_{\beta \in \mc B}~ \EE_{\QQ} [\ell(Y, \alpha, \beta) | X \in \mc N_{\gamma}(x_0)] \\
    =&\Min{\alpha \in \mc A}~\min_{\beta \in \mc B}~\Sup{\QQ \in \mbb B_\rho, \QQ(X \in \mc N_\gamma(x_0)) \ge \eps}~\EE_{\QQ} [\ell(Y, \alpha, \beta) | X \in \mc N_{\gamma}(x_0)],
\end{align*}
where the last equality follows from Lemma~\ref{lemma:interchange}. This completes the proof.
\end{proof}

\begin{proof}[Proof of Lemma~\ref{lemma:MCVaR}]
It is well known that 
\[
\mathrm{CVaR}^{1-\tau}_\QQ[Y^\top \alpha|X \in \mc N_\gamma(x_0)] = \Min{\beta \in \R}~\EE_{\QQ}[\beta+ \frac{1}{\tau} (-Y^\top \alpha - \beta)^+ | X \in \mc N_{\gamma}(x_0)]
\]
for any probability measure $\QQ \in \mc M(\mc X \times \mc Y)$, where the optimal $\beta$ is the conditional $(1-\tau)$-quantile of $Y^\top \alpha$ given $X \in \mc N_{\gamma}(x_0)$ under $\QQ$. When $\mc A$ and $\mc Y$ are compact, the random variable $Y^\top \alpha$ has uniformly bounded support $\mc B$ for any $\QQ \in \mbb B_\rho$ and $\alpha \in \mc A$, where $\mc B$ is defined as in the statement of the lemma. Therefore, it is without any loss of optimality to restrict $\beta \in \mc B$. The rest of the proof follows the same argument in the proof of Lemma~\ref{lemma:MV}.
\end{proof}

The proofs of Lemmas~\ref{lemma:MV} and~\ref{lemma:MCVaR} rely on a strong duality result. If strong duality fails, one can still resort to weak duality to obtain a conservative approximation of the decision problem. This fact is highlighted in the next remark.
\begin{remark}[Conservative approximation under weak duality] \label{remark:weak-duality}
    If the conditions of Lemma~\ref{lemma:interchange} do not hold, then the robustified conditional mean-variance portfolio problem still admits
    \begin{align*}
    &\Min{\alpha \in \mc A}~\Sup{\QQ \in \mbb B_\rho, \QQ(X \in \mc N_\gamma(x_0)) \ge \eps}~\mathrm{Variance}_\QQ[Y^\top \alpha| X \in \mc N_\gamma(x_0)] - \eta\cdot \EE_\QQ[ Y^\top\alpha| X \in \mc N_\gamma(x_0)] \\
    =&\Min{\alpha \in \mc A}~\Sup{\QQ \in \mbb B_\rho, \QQ(X \in \mc N_\gamma(x_0)) \ge \eps}~\min_{\beta \in \R}~ \EE_{\QQ} [\ell(Y, \alpha, \beta) | X \in \mc N_{\gamma}(x_0)] \\
    =&\Min{\alpha \in \mc A}~\Sup{\QQ \in \mbb B_\rho, \QQ(X \in \mc N_\gamma(x_0)) \ge \eps}~\min_{\beta \in \mc B}~ \EE_{\QQ} [\ell(Y, \alpha, \beta) | X \in \mc N_{\gamma}(x_0)] \\
    \le &\Min{\alpha \in \mc A}~\min_{\beta \in \mc B}~\Sup{\QQ \in \mbb B_\rho, \QQ(X \in \mc N_\gamma(x_0)) \ge \eps}~\EE_{\QQ} [\ell(Y, \alpha, \beta) | X \in \mc N_{\gamma}(x_0)],
\end{align*}
where the inequality is from weak duality. Thus, the ultimate min-max problem is a conservative approximation of the robustified conditional mean-variance portfolio problem. A similar argument also holds for the mean-CVaR problem.
\end{remark}

\begin{proof}[Proof of Proposition~\ref{prop:rho-LB-1}] 
Using the definition of the optimal transport cost, we can compute $\rho_{\min}(x_0, \gamma, \eps)$ as
\[
    \rho_{\min}(x_0, \gamma, \eps) = \left\{
	\begin{array}{ll}	
		\inf & \ds \int_{(\mc X \times \mc Y) \times (\mc X \times \mc Y)} [\DD_{\mc X}(x, x') + \DD_{\mc Y}(y, y')] 
		~\pi\big((\mathrm{d}x \times \mathrm{d}y), (\mathrm{d} x' \times \mathrm{d}y')\big)\\
        \st & \QQ \in \mc M(\mc X \times\mc Y),~\pi \in \Pi(\QQ, \Pnom) \\
			& \ds \int_{(\mc X \times \mc Y) \times (\mc X \times \mc Y)} 
		    \mathbbm{1}_{\mc N_\gamma(x_0)} (x)
		    ~\pi\big((\mathrm{d}x \times \mathrm{d}y), (\mathrm{d} x' \times \mathrm{d}y')\big) \ge \eps.
	\end{array} 
	\right.
\]
Because $\Pnom$ is an empirical measure, any joint probability measure $\pi \in \Pi(\QQ, \Pnom)$ can be written as $\pi = N^{-1} \sum_{i \in [N]} \pi_i \otimes \delta_{(\wh x_i, \wh y_i)}$ using the collection of probability measures $\{\pi_i\}_{i \in [N]}$, and $\otimes$ denotes the Kronecker product of two probability measures. One thus can reformulate $\rho_{\min}(x_0, \gamma, \eps)$ as
\begin{align} \label{eq:rho-min}
    \rho_{\min}(x_0, \gamma, \eps) &= \left\{
        \begin{array}{cl}
            \inf & \ds N^{-1} \sum_{i \in [N]} \int_{\mc X \times \mc Y} [\DD_{\mc X}(x, \wh x_i) + \DD_{\mc Y}(y, \wh y_i)] 
		    ~\pi_i(\dd x \times \dd y) \\
		    \st & \pi_i \in \mc M(\mc X \times \mc Y) ~~\forall i \in [N],~\ds \sum_{i \in [N]} \pi_i(\mc N_\gamma(x_0) \times \mc Y) \ge N \eps.
        \end{array}
    \right. 
\end{align}
Let $\{\pi_i\opt\}_{i \in [N]}$ be an optimal solution of the above optimization problem. We now show that $\pi_i\opt$ should be of the form
\[
    \pi_i\opt = \upsilon_i \delta_{(\wh x_i^p, \wh y_i)} + (1 - \upsilon_i) \delta_{(\wh x_i, \wh y_i)}
\]
for some $\upsilon_i \in [0, 1]$ and where $\wh x_i^p$ is the projection of $\wh x_i$ onto $\mc N_{\gamma}(x_0)$ defined in~\eqref{eq:kappa-def}. Suppose otherwise, then denote $P_i = \pi_i\opt(\mc N_\gamma(x_0) \times \mc Y)$. Consider now 
\[
    \pi_i' = P_i \delta_{(\wh x_i^p, \wh y_i)} + (1 - P_i) \delta_{(\wh x_i, \wh y_i)}.
\]
It is trivial that $\sum_{i\in[N]} \pi_i'(\mc N_\gamma(x_0) \times \mc Y) \ge \sum_{i \in [N]} P_i = \sum_{i \in [N]} \pi_i\opt(\mc N_\gamma(x_0) \times \mc Y) \ge N \eps$. Moreover, we have
\begin{align*}
    \sum_{i \in [N]} \int_{\mc X \times \mc Y} [\DD_{\mc X}(x, \wh x_i) + \DD_{\mc Y}(y, \wh y_i)] ~\pi_i'(\dd x \times \dd y) &= \sum_{i \in [N]} [\DD_{\mc X}(\wh x_i^p, \wh x_i) + \DD_{\mc Y}(\wh y_i, \wh y_i)] P_i \\
    &\le \sum_{i \in [N]} \int_{\mc X \times \mc Y} [\DD_{\mc X}(x, \wh x_i) + \DD_{\mc Y}(y, \wh y_i)] ~\pi_i\opt(\dd x \times \dd y),
\end{align*}
which implies that $\{\pi_i'\}_{i \in [N]}$ is at least as good as $\{ \pi_i\opt\}_{i \in [N]}$ in the optimization problem~\eqref{eq:rho-min}. 

Next, by restricting the decision variables to $\pi_i = \upsilon_i \delta_{(\wh x_i^p, \wh y_i)} + (1-\upsilon_i) \delta_{(\wh x_i, \wh y_i)}$, one now can consider the equivalent reformulation
\begin{align*}
    \rho_{\min}(x_0, \gamma, \eps) &= \inf \left\{ \ds N^{-1} \sum_{i \in [N]} \kappa_i \upsilon_i
       : \upsilon \in [0, 1]^N,~\ds \sum_{i \in [N]} \upsilon_i \ge N \eps
    \right\}. 
\end{align*}
Because the feasible set is compact and the objective function is continuous, the minimization operator is justified  thanks to Weierstrass' maximum value theorem~\cite[Corollary~2.35]{ref:aliprantis06hitchhiker}.

It remains to show the existence of a measure $\QQ\opt \in \mbb B_\rho$ with $\QQ\opt(X \in \mc N_\gamma(x_0)) \ge \eps$ if and only if $\rho \ge \rho_{\min}(x_0, \gamma, \eps)$. The definition of $\rho_{\min}(x_0, \gamma, \eps)$ immediately proves the ``only if'' part, because \eqref{eq:rho-LB-1} implies that
\[
\BB_{\rho} \cap
\Big\{
    \QQ \in \mc M (\mc X \times \mc Y): 
    ~\QQ(X \in \mc N_\gamma(x_0)) \ge \eps
\Big\} = \emptyset
\quad \forall \rho < \rho_{\min}(x_0, \gamma, \eps).
\]
Now we prove the ``if'' part. Given a minimizer $\{\upsilon_i\opt\}_{i \in [N]}$ that solves~\eqref{eq:greedy-1}, there exists a collection of probability measures $\{\pi_i\opt\}_{i \in [N]}$ defined as $\pi_i\opt \Let \upsilon_i\opt \delta_{(\wh x_i^p, \wh y_i)} + (1 - \upsilon_i\opt) \delta_{(\wh x_i, \wh y_i)}$ such that 
\[
    \int_{\mc X \times \mc Y} [\DD_{\mc X}(x, \wh x_i) + \DD_{\mc Y}(y, \wh y_i)] ~ \pi_i\opt(\dd x \times \dd y) = \kappa_i \upsilon_i\opt.
\]
The probability measure $\QQ\opt \Let N^{-1} \sum_{i \in [N]} \pi_i\opt$ satisfies
\[
    \Wass(\QQ\opt, \Pnom) \leq \ds \frac{1}{N} \sum_{i \in [N]} \int_{\mc X \times \mc Y} [\DD_{\mc X}(x, \wh x_i) + \DD_{\mc Y}(y, \wh y_i)] 
		    ~\pi_i\opt(\dd x \times \dd y) = \frac{1}{N} \sum_{i \in [N]} \kappa_i \upsilon_i\opt =  \rho_{\min}(x_0, \gamma, \eps) \leq \rho,
\]
where the last equality is from the optimality of $\upsilon\opt$. This implies that $\QQ\opt \in \mbb B_\rho$ and completes the proof.
\end{proof}

\subsection{Proofs of Section~\ref{sec:gamma0}}
\label{appendix:sec3}

We first collect the fundamental results that facilitate the proofs of Section~\ref{sec:gamma0}. For any $N \in \mbb N$, given some $c\in\R^N$, $d\in\R_+^N$ and $\eps \in (0, 1)$, consider the following two optimization problems
\begin{subequations}
\be \label{eq:equivalent1}
\left\{
\begin{array}{cl}
\Sup{\upsilon \in [0, 1]^N}& \ds \frac{\sum_i \upsilon_i c_i}{N\sum_i \upsilon_i} \\
\st 
&\ds \sum_{i} \upsilon_i \geq \eps N,~\sum_{i} \upsilon_i d_i \leq \rho
\end{array} \right.
\ee
and
\be \label{eq:equivalent2}
\left\{
\begin{array}{cl}
\Sup{\upsilon\in [0, 1]^N } & \ds \frac{\sum_i \upsilon_i c_i}{N\sum_i \upsilon_i} \\
\st 
&\ds \sum_i \upsilon_i = \eps N,~\sum_i \upsilon_i d_i \leq \rho, 
\end{array}
\right.
\ee
\end{subequations}
where the summations are taken over $i \in [N]$. Notice that the summation constraint of $\upsilon_i$ in~\eqref{eq:equivalent1} is an \textit{in}equality constraint, while in~\eqref{eq:equivalent2} it is an equality constraint. The next result asserts that the inequality in~\eqref{eq:equivalent1} can be strengthened to an equality as in~~\eqref{eq:equivalent2} without any loss of optimality.

\begin{lemma}\label{lemma:alphaProbSimplifies}
The optimal values of problems~\eqref{eq:equivalent1} and~\eqref{eq:equivalent2} are equal.
\end{lemma}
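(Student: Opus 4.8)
The plan is to prove the two one-sided inequalities between the optimal values of~\eqref{eq:equivalent1} and~\eqref{eq:equivalent2} separately. One direction is immediate: replacing the inequality constraint $\sum_i \upsilon_i \geq \eps N$ in~\eqref{eq:equivalent1} by the equality $\sum_i \upsilon_i = \eps N$ only shrinks the feasible region, so the optimal value of~\eqref{eq:equivalent2} is bounded above by that of~\eqref{eq:equivalent1}. The content of the lemma is the reverse inequality, and the key observation I would exploit is that the objective $\upsilon \mapsto \frac{\sum_i \upsilon_i c_i}{N\sum_i \upsilon_i}$ is positively $0$-homogeneous: rescaling $\upsilon$ by a positive constant leaves its value unchanged, while it scales both the total mass $\sum_i \upsilon_i$ and the budget consumption $\sum_i \upsilon_i d_i$ by the same factor.

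Concretely, I would first record that every $\upsilon$ feasible for~\eqref{eq:equivalent1} satisfies $S := \sum_i \upsilon_i \geq \eps N > 0$, so the denominator of the objective is strictly positive and the ratio is well-defined; this is the only place the hypothesis $\eps > 0$ is used. Given such a $\upsilon$, set $t := \eps N / S$. Since $\upsilon \in [0,1]^N$ forces $S \leq N$, we get $t \in (0,1]$, the bound $t \leq 1$ being precisely the feasibility constraint $S \geq \eps N$. Then $\upsilon' := t\upsilon$ obeys $\upsilon' \in [0,1]^N$, $\sum_i \upsilon'_i = tS = \eps N$, and $\sum_i \upsilon'_i d_i = t \sum_i \upsilon_i d_i \leq t\rho \leq \rho$ because $d \in \R_+^N$, $0 < t \leq 1$, and $\rho \geq 0$ (if $\rho < 0$ neither problem is feasible and the statement holds vacuously). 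Hence $\upsilon'$ is feasible for~\eqref{eq:equivalent2}, and by $0$-homogeneity $\frac{\sum_i \upsilon'_i c_i}{N\sum_i \upsilon'_i} = \frac{\sum_i \upsilon_i c_i}{N\sum_i \upsilon_i}$. Taking the supremum over all feasible $\upsilon$ of~\eqref{eq:equivalent1} then shows its optimal value is at most that of~\eqref{eq:equivalent2}, and combining with the trivial direction yields equality.

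I do not anticipate a genuine obstacle; the argument is a one-line rescaling. The only points demanding care are verifying that $\eps > 0$ rules out the ill-defined $0/0$ objective and makes $t$ well-defined, and that nonnegativity of the entries of $d$ together with $\rho \geq 0$ is exactly what guarantees the rescaled-down vector $\upsilon'$ remains within the transport budget. I would also note in passing that the same rescaling map shows~\eqref{eq:equivalent1} is feasible if and only if~\eqref{eq:equivalent2} is, so the degenerate case of empty feasible sets (optimal value $-\infty$ by the usual convention) is covered as well.
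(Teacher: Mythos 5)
Your proposal is correct and follows essentially the same route as the paper's proof: both arguments rescale a feasible point of~\eqref{eq:equivalent1} by the factor $\eps N/\sum_i \upsilon_i \in (0,1]$ to obtain a feasible point of~\eqref{eq:equivalent2} with the same objective value, using the nonnegativity of $d$ and the $0$-homogeneity of the fractional objective. Your additional remarks on the trivial inclusion direction and the degenerate cases are sound but not needed beyond what the paper records.
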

\begin{proof}[Proof of Lemma~\ref{lemma:alphaProbSimplifies}]
Let $\bar{\upsilon} \in [0, 1]^N$ be such that $\sum_i \bar{\upsilon}_i > \eps N$ and $\sum_i \bar{\upsilon}_i d_i \le \rho$. One can construct $\upsilon' = (\eps N/\sum_i \bar{\upsilon}_i)\bar{\upsilon}$ which satisfies $\upsilon' \in [0, 1]^N$, $\sum_i \upsilon_i'=\eps N$, and $\sum_i \upsilon_i' d_i \leq \rho$ since $d\geq 0$. Furthermore, it reaches the same objective value
\[
\frac{1}{N\sum_i \upsilon_i'} \sum_i \upsilon_i' c_i = \frac{1}{\eps N^2} \sum_i \frac{\eps N}{\sum_i \bar{\upsilon}_i}\bar{\upsilon}_i c_i = \frac{1}{N\sum_i \bar{\upsilon}_i} \sum_i \bar{\upsilon}_i c_i,
\]
which finishes the proof.
\end{proof}

\begin{proposition}[Equivalent representation] \label{prop:equiv}
Given $\eps \in (0, 1]$ and $\rho > \rho_{\min}(x_0, 0, \eps)$, then for any feasible solution $(\alpha, \beta)$,
we have
\begin{align*}
\Sup{\QQ \in \mbb B_\rho, \QQ(X = x_0) \ge \eps}~ \EE_{\QQ}[ \ell(Y, \alpha, \beta) | X = x_0 ] =
\left\{
\begin{array}{cl}
\sup & \ds (N\eps)^{-1} \sum_{i \in [N]} \upsilon_i \EE_{\muxo^i}[\ell(Y, \alpha, \beta)]\\
\st&\upsilon\in [0,\,1]^N, \; \muxo^i \in \mc M(\mc Y) \qquad \forall i \in [N]\\
& \ds \sum_{i \in [N]} \upsilon_i =  N \eps \\
& \ds \sum_{i \in [N]}  \upsilon_i \big( \DD_{\mc X}(x_0, \wh x_i) + \EE_{\muxo^i}[ \DD_{\mc Y}(Y, \wh y_i)] \big) \leq N \rho.
\end{array}
\right.
\end{align*}
\end{proposition}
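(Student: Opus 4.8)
The plan is to turn the worst-case conditional expectation into an optimisation over transport plans, exploit the atomic structure of $\Pnom$ to make it finite-dimensional, and finally promote the probability-mass inequality $\QQ(X=x_0)\ge\eps$ to the equality $\sum_i\upsilon_i=N\eps$ via a scaling argument. First I would rewrite the left-hand side as a joint supremum over a distribution and a coupling: since the minimum in Definition~\ref{def:OT} is attained \cite[Theorem~4.1]{ref:villani2008optimal}, the constraint $\Wass(\QQ,\Pnom)\le\rho$ is equivalent to the existence of $\pi\in\Pi(\QQ,\Pnom)$ with $\EE_\pi[\DD]\le\rho$. As $\Pnom=N^{-1}\sum_{i\in[N]}\delta_{(\wh x_i,\wh y_i)}$, any such $\pi$ disintegrates as $\pi=N^{-1}\sum_{i\in[N]}\pi_i\otimes\delta_{(\wh x_i,\wh y_i)}$ with $\pi_i\in\mc M(\mc X\times\mc Y)$, exactly as in the proof of Proposition~\ref{prop:rho-LB-1}; then $\QQ=N^{-1}\sum_i\pi_i$ and $\EE_\pi[\DD]=N^{-1}\sum_i\int\DD((\wh x_i,\wh y_i),\cdot)\,\dd\pi_i$.

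Next I would show that, without loss of optimality, each $\pi_i$ may be taken of the two-atom form $\pi_i=\upsilon_i(\delta_{x_0}\otimes\muxo^i)+(1-\upsilon_i)\delta_{(\wh x_i,\wh y_i)}$ with $\upsilon_i\in[0,1]$ and $\muxo^i\in\mc M(\mc Y)$. Writing $\upsilon_i\Let\pi_i(\{x_0\}\times\mc Y)$ and letting $\muxo^i$ be the normalised $\mc Y$-marginal of $\pi_i$ restricted to $\{x_0\}\times\mc Y$, one replaces the off-fiber part of $\pi_i$ by a point mass at $(\wh x_i,\wh y_i)$: this leaves $\QQ(X=x_0)$ and $\EE_\QQ[\ell\,\mathbbm{1}_{\{x_0\}}(X)]$ unchanged, while it does not increase $\EE_\pi[\DD]$ because, by Assumption~\ref{a}\ref{a:cost}, the retained point mass at $(\wh x_i,\wh y_i)$ carries zero ground cost. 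Substituting this form and using separability of $\DD$, the left-hand side equals
\[
\sup\left\{\frac{\sum_i\upsilon_i\EE_{\muxo^i}[\ell]}{\sum_i\upsilon_i}\ :\ \upsilon\in[0,1]^N,\ \muxo^i\in\mc M(\mc Y),\ \sum_i\upsilon_i\ge N\eps,\ \sum_i\upsilon_i\big(\DD_{\mc X}(x_0,\wh x_i)+\EE_{\muxo^i}[\DD_{\mc Y}(Y,\wh y_i)]\big)\le N\rho\right\}.
\]

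To conclude, fix $\{\muxo^i\}_{i\in[N]}$ and set $c_i\Let\EE_{\muxo^i}[\ell]$ and $d_i\Let\DD_{\mc X}(x_0,\wh x_i)+\EE_{\muxo^i}[\DD_{\mc Y}(Y,\wh y_i)]\in\R_+$; the inner problem over $\upsilon$ is then exactly~\eqref{eq:equivalent1} with transport budget $N\rho$, so Lemma~\ref{lemma:alphaProbSimplifies} allows me to replace $\sum_i\upsilon_i\ge N\eps$ by $\sum_i\upsilon_i=N\eps$ without changing the optimal value. Under this equality the objective equals $(N\eps)^{-1}\sum_i\upsilon_i\EE_{\muxo^i}[\ell]$, and restoring the supremum over $\{\muxo^i\}$ yields precisely the right-hand side of Proposition~\ref{prop:equiv}. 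The hypothesis $\rho>\rho_{\min}(x_0,0,\eps)$ is used here to guarantee, via Proposition~\ref{prop:rho-LB-1}, that all the feasible sets appearing are non-empty and to supply the transport slack required in the preceding reduction.

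The one genuinely delicate point — which I expect to be the main obstacle — is the two-atom reduction when some training sample sits on the fiber, i.e.\ $\wh x_i=x_0$: then the point mass $(1-\upsilon_i)\delta_{(\wh x_i,\wh y_i)}$ itself lies on $\{x_0\}\times\mc Y$, so reading off $\QQ(X=x_0)$ as $N^{-1}\sum_i\upsilon_i$ and the conditional loss as $(N\eps)^{-1}\sum_i\upsilon_i\EE_{\muxo^i}[\ell]$ is no longer exact. Here I would invoke Assumption~\ref{a}\ref{a:vic}: since $x_0$ is a boundary point of the singleton fiber, for every $r>0$ there is $x'\in\mc X\setminus\{x_0\}$ with $\DD_{\mc X}(x',x_0)\le r$, so the leftover mass can instead be relocated to $(x',\wh y_i)$ at extra cost at most $(1-\upsilon_i)r$; combining this with the strict slack afforded by $\rho>\rho_{\min}(x_0,0,\eps)$ (which lets one first perturb a tight candidate $(\upsilon,\muxo)$ towards an interior one at negligible change in objective) and a limiting argument as $r\downarrow 0$ then recovers the exact identity. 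When no $\wh x_i$ equals $x_0$ — the generic situation once $\gamma=0$ — this complication is absent and the reduction is immediate; the remainder of the argument is entirely duality-free.
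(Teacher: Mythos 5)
Your proposal is correct and follows essentially the same route as the paper's proof: disintegrate the coupling over the empirical atoms, reduce each $\pi_i$ to a two-point form whose fiber part is $\upsilon_i\,\delta_{x_0}\otimes\muxo^i$, and invoke Lemma~\ref{lemma:alphaProbSimplifies} to tighten $\sum_i\upsilon_i\ge N\eps$ to an equality. The delicate point you flag --- relocating the residual mass of samples with $\wh x_i = x_0$ off the fiber at vanishing cost, using the slack $\rho>\rho_{\min}(x_0,0,\eps)$ when the transport budget is saturated --- is precisely what the paper's proof spends its three-case limiting construction on, so that step of your sketch would need to be expanded into the same case analysis, but it contains the right ideas.
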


\begin{proof}[Proof of Proposition~\ref{prop:equiv}]
    By exploiting the definition of the optimal transport cost and the fact that any joint probability measure $\pi \in \Pi(\QQ, \Pnom)$ can be written as $\pi = N^{-1} \sum_{i \in [N]}\pi_i \otimes \delta_{(\wh x_i, \wh y_i)}$, where each $\pi_i$ is a probability measure on $\mc X \times \mc Y$, we have
    \begin{align*}
        \Sup{\QQ \in \mbb B_\rho, \QQ(X = x_0) \ge \eps}~ \EE_{\QQ}[ \ell(Y, \alpha, \beta) | X = x_0 ] = &\left\{
            \begin{array}{cl}
                \sup & \EE_{\QQ}[ \ell(Y, \alpha, \beta) | X = x_0 ] \\
                \st & \pi_i \in \mc M(\mc X \times \mc Y) \quad \forall i \in [N] \\
                & \sum_{i \in [N]}\pi_i(\{x_0\} \times \mc Y) \ge N\eps \\
                & \QQ = N^{-1} \sum_{i \in [N]} \pi_i, ~\sum_{i \in [N]} \Wass(\pi_i, \delta_{(\wh x_i, \wh y_i)}) \le N\rho
            \end{array}
        \right. \\
        = &\left\{
            \begin{array}{ccll}
                \Sup{\upsilon \in \mc U} &\sup & \EE_{\QQ}[ \ell(Y, \alpha, \beta) | X = x_0 ] \\
                &\st & \pi_i \in \mc M(\mc X \times \mc Y) & \forall i \in [N] \\
                && \pi_i(\{x_0\} \times \mc Y) =\upsilon_i & \forall i \in [N]\\
                && \QQ = N^{-1} \sum_{i \in [N]} \pi_i, ~\sum_{i \in [N]} \Wass(\pi_i, \delta_{(\wh x_i, \wh y_i)}) \le N\rho,
            \end{array}
        \right.
    \end{align*}
    where the set $\mc U$ is defined as
    \[
        \mc U \Let \left\{ \upsilon\in [0, 1]^N:  \sum_{i \in [N]}~\upsilon_i \ge N \eps \right\}.
    \]
    Define the following two functions $g, h: \mc A \to \R$ as
    \be \label{eq:g-def}
        g(\upsilon) \Let \left\{
            \begin{array}{cll}
                \sup & \ds (\sum_{i\in[N]} \upsilon_i)^{-1} \sum_{i \in [N]} \int_{\mc Y} \ell(y, \alpha, \beta) \pi_i(\{x_0\} \times \dd y) \\
                 \st & \pi_i \in \mc M(\mc X \times \mc Y) & \forall i \in [N] \\
                 &\pi_i(\{x_0\} \times \mc Y) = \upsilon_i & \forall i \in [N]\\
                & \ds \sum_{i \in [N]} \Wass(\pi_i, \delta_{(\wh x_i, \wh y_i)}) \le N\rho
            \end{array}
        \right.
    \ee
    and
    \be \label{eq:h-def}
        h(\upsilon) \Let \left\{
            \begin{array}{cll}
                \sup & \ds (\sum_{i\in[N]} \upsilon_i)^{-1} \sum_{i \in [N]} \upsilon_i \EE_{\muxo^i}[\ell(Y, \alpha, \beta)] \\
                 \st & \muxo^i \in \mc M(\mc Y) & \forall i \in [N] \\
                & \ds \sum_{i \in [N]} \upsilon_i \EE_{\muxo^i}[\DD_{\mc X}(x_0, \wh x_i) + \DD_{\mc Y}(Y, \wh y_i)]  \le N\rho.
            \end{array}
        \right.
    \ee
    We can show that $\sup_{\upsilon \in \mc U}g(\upsilon)=\sup_{\upsilon \in \mc U}h(\upsilon)$.
    First, to show $\sup_{\upsilon \in \mc U}g(\upsilon)\leq \sup_{\upsilon \in \mc U}h(\upsilon)$, we fix an arbitrary value $\upsilon \in \mc U$. For any $\{\pi_i\}_{i \in [N]}$ that is feasible for~\eqref{eq:g-def}, define $\muxo^i \in \mc M(\mc Y)$ such that
    \[
        \upsilon_i \muxo^i(S) = \pi_i(\{x_0\} \times S) \qquad \forall S \subseteq \mc Y \text{ measurable} \qquad \forall i \in [N].
    \]
    One can verify that $\{\muxo^i\}_{i \in [N]}$ is a feasible solution to~\eqref{eq:h-def}, notably because
    \begin{align*}
        \sum_{i \in [N]} \upsilon_i \EE_{\muxo^i}[\DD_{\mc X}(x_0, \wh x_i) + \DD_{\mc Y}(Y, \wh y_i)] \le \sum_{i \in [N]} \EE_{\pi_i}[\DD_{\mc X}(X, \wh x_i) + \DD_{\mc Y}(Y, \wh y_i)] = \sum_{i \in [N]} \Wass(\pi_i, \delta_{(\wh x_i, \wh y_i)}) \leq N \rho,
    \end{align*}
    where the first inequality follows from the non-negativity of $\DD(\cdot, \wh x)$ and $\DD(\cdot, \wh y)$ by Assumption~\ref{a}\ref{a:cost}, and the second inequality is from the feasibility of $\{\pi_i\}_{i \in [N]}$ in~\eqref{eq:g-def}. Moreover, the optimal value of $\{\pi_i\}_{i \in [N]}$ in~\eqref{eq:g-def} and the optimal value of $\{\muxo^i\}_{i \in [N]}$ in~\eqref{eq:h-def} coincide because
    \[
        \int_{\mc Y} \ell(y, \alpha, \beta) \pi_i(\{x_0\} \times \dd y) =  \upsilon_i \int_{\mc Y} \ell(y, \alpha, \beta)  \muxo^i(\dd y) = \upsilon_i \EE_{\muxo^i}[\ell(Y, \alpha, \beta)] \qquad \forall i \in [N].
    \]
    This implies that $g(\upsilon) \le h(\upsilon)$ for any $\upsilon \in \mc U$, and thus we have
    \be \label{eq:equiv-1}
        \Sup{\upsilon \in \mc U}~g(\upsilon) \le \Sup{\upsilon \in \mc U}~h(\upsilon).
    \ee

    Next, we will establish the reverse direction of the inequality in~\eqref{eq:equiv-1}. To this end, consider any $\{\muxo^i\}_{i \in [N]}$ that is feasible for~\eqref{eq:h-def}, we will construct explicitly a sequence of probability families $\{\pi_{i, k}\}_{i \in [N], k \in \mbb N}$ that is feasible for~\eqref{eq:g-def} and attains the same objective value in the limit as $k \to \infty$.

In doing so, we start by supposing that $\sum_{i \in [N]} \upsilon_i \EE_{\muxo^i}[\DD_{\mc X}(x_0, \wh x_i) + \DD_{\mc Y}(Y, \wh y_i)]  < N\rho$. Let us define the measures $\{\pi_{i}\}_{i \in [N]}$ as
    \[
    \pi_{i} = \upsilon_i \delta_{x_0} \otimes \muxo^i + (1- \upsilon_i) \delta_{( x_{i},  \wh{y}_{i})},
    \]
    for  some $x_{i} \in \mc X\setminus \{\wh x_i\}$ that satisfy 
    \begin{align*}
        \sum_{i \in [N]} (1 - \upsilon_i) \DD_{\mc X}(x_{i}, \wh x_i) &\le N\rho - \sum_{i \in [N]} \upsilon_i \big( \DD_{\mc X}(x_0, \wh x_i) +  \EE_{\muxo^i}[\DD_{\mc Y}(Y, \wh y_i)] \big).
    \end{align*}
    Notice that under the condition of this case, the right-hand side is strictly positive, and the existence of such $x_{i}$'s is guaranteed thanks to the continuity of $\DD_{\mc X}$ and the fact that $\DD_{\mc X}(\wh x, \wh x) = 0$ by Assumption~\ref{a}\ref{a:cost}. 
    It is now easy to verify that $\{\pi_{i}\}_{i \in [N]}$ is feasible for~\eqref{eq:g-def}, and moreover, the objective value of $\{\pi_{i}\}_{i \in [N]}$ in~\eqref{eq:g-def} amounts to
    \[
        (\sum_{i\in[N]} \upsilon_i)^{-1} \sum_{i \in [N]} \int_{\mc Y} \ell(y, \alpha, \beta) \pi_i(\{x_0\} \times \dd y) = (\sum_{i \in [N]} \upsilon_i)^{-1}\sum_{i \in [N]} \Big[ \upsilon_i \EE_{\muxo^i}[\ell(Y, \alpha, \beta)] \Big] .
    \]

\newcommand{\ihat}{\wh{i}}    
    The case where $\sum_{i \in [N]} \upsilon_i \EE_{\muxo^i}[\DD_{\mc X}(x_0, \wh x_i) + \DD_{\mc Y}(Y, \wh y_i)]  = N\rho$ is more complex. In particular, we start by focusing on the situation where there exists some $\ihat$ for which $v_{\ihat}\neq 0$ and $\mu_0^{\ihat}\neq\delta_{\wh{y}_{\ihat}}$. Here, we can construct a sequence of measure
$\{\pi_{i, k}\}_{i \in [N], k \in \mbb N}$ as
    \[
    \pi_{i, k} = 
    \begin{cases}
        (\upsilon_i-\gamma_{k}) \delta_{x_0} \otimes \muxo^i + \gamma_{k} \delta_{(x_0, \wh y_i)} + (1- \upsilon_i) \delta_{( x_{i,k},  \wh{y}_{i})}  & \text{if } i = \ihat, \\
        \upsilon_i\delta_{x_0} \otimes \muxo^i + (1- \upsilon_i) \delta_{( x_{i,k},  \wh{y}_{i})} & \text{otherwise,}
    \end{cases}
    \]
    for some $\gamma_{k} \in [0, \upsilon_{\ihat}]$, $\lim_{k\to \infty} \gamma_{k} = 0$,  and some  $x_{i,k} \in \mc X\setminus \{\wh x_i\}$
    \[
        x_{i,k} \xrightarrow{k \to \infty} \wh x_i \quad \forall k.
    \]
    Furthermore, let the sequences $\gamma_{k}$ and $x_{i,k}$ satisfy  for any $k \in \mbb N$
    \begin{align*}
        \sum_{i \in [N]} (1 - \upsilon_i) \DD_{\mc X}(x_{i,k}, \wh x_i)  &\le \gamma_{k} \EE_{\muxo^{\ihat}}[\DD_{\mc Y}(Y, \wh y_{\ihat})].
    \end{align*}
    Notice that under the condition of this case, the right-hand side is strictly positive, and the existence of the sequence $(x_{i,k}, \gamma_k)$ is again guaranteed thanks to the continuity of $\DD_{\mc X}$ and the fact that $\DD_{\mc X}(\wh x, \wh x) = 0$ by Assumption~\ref{a}\ref{a:cost}. 
    It is now easy to verify that for any $k$, $\{\pi_{i,k}\}_{i \in [N]}$ is feasible for~\eqref{eq:g-def}:
\begin{align*}
\sum_{i \in [N]} \Wass(\pi_{i,k}, \delta_{(\wh x_i, \wh y_i)})&=\sum_{i \in [N]} \upsilon_i (\DD_{\mc X}(x_0, \wh x_i)) + \EE_{\muxo^i}[\DD_{\mc Y}(Y, \wh y_i)]) +  (1 - \upsilon_i) \DD_{\mc X}(x_{i,k}, \wh x_i) - \gamma_{k} \EE_{\muxo^{\ihat}}[\DD_{\mc Y}(Y, \wh y_{\ihat})] \\    
&\leq \sum_{i \in [N]} \upsilon_i (\DD_{\mc X}(x_0, \wh x_i) + \EE_{\muxo^i}[\DD_{\mc Y}(Y, \wh y_i)])) \leq N\rho\,.
\end{align*}
    Moreover, the objective value of $\{\pi_{i,k}\}_{i \in [N]}$ in~\eqref{eq:g-def} amounts to
    \[
        (\sum_{i \in [N]} \upsilon_i)^{-1}\left(\sum_{i \in [N]} \upsilon_i \EE_{\muxo^i}[\ell(Y, \alpha, \beta)] + \gamma_{k} (\ell( \wh y_i, \alpha, \beta) - \EE_{\muxo^{\ihat}}[\ell(Y, \alpha, \beta)])\right) \xrightarrow{k \to \infty} (\sum_{i \in [N]} \upsilon_i)^{-1}\sum_{i \in [N]} \upsilon_i \EE_{\muxo^i}[\ell(Y, \alpha, \beta)],
    \]
    where the limit holds because $\lim_{k\to \infty} \gamma_{k} = 0$.    

In the final case, we have $\muxo^i = \delta_{\wh y_i}$ for all $i \in [N]$ and $\sum_{i \in [N]} \upsilon_i \kappa_i  = N\rho$. Since we have assumed that $\rho>\rho_{\min}(x_0,0,\varepsilon)$, it must be that there exists an $\ihat$ for which $v_{\ihat}>0$ and $\wh{x}_{\ihat}\neq x_0$. Indeed, otherwise we would have that $N\rho = \sum_{i \in [N]} \upsilon_i \kappa_i = 0 \leq N \rho_{\min}(x_0,0,\varepsilon)$ which is a contradiction. Now let us one final time construct a sequence of measures
\[
    \pi_{i, k} = 
    \begin{cases}
        (\upsilon_i-\gamma_{k}) \delta_{(x_0,\wh{y}_i)}  + (1- \upsilon_i+\gamma_{k}) \delta_{(\wh{x}_i, \wh y_i)} & \text{if } i = \ihat, \\
        \upsilon_i\delta_{(x_0,\wh{y}_i)} + (1- \upsilon_i) \delta_{(\wh{x}_i, \wh{y}_{i})} & \text{if } \wh{x}_i\neq x_0, \\
        \upsilon_i\delta_{(x_0,\wh{y}_i)} + (1- \upsilon_i) \delta_{( x_{i,k},  \wh{y}_{i})} & \text{otherwise,}
    \end{cases}
    \]
for some $\gamma_k\in(0,\;v_{\ihat})$ and some $x_{i,k}\in\mc X \setminus \{x_0\}$, such that $x_{i,k} \xrightarrow{k\rightarrow \infty} \wh{x}_i$, and that
\[\gamma_k \DD_{\mc X}(x_0, \wh x_{\ihat}) \geq \sum_{i:\wh{x}_i=x_0} (1-v_i) \DD_{\mc X}(x_0,  x_{i,k})\,.\]
Indeed, it is easy to verify that $\{\pi_{i,k}\}_{i \in [N]}$ is always feasible for~\eqref{eq:g-def}, and moreover, the objective value of $\{\pi_{i,k}\}_{i \in [N]}$ in~\eqref{eq:g-def} amounts to
\begin{align*}
        (\sum_{i \in [N]} \upsilon_i)^{-1} &\left(\sum_{i \in [N]} \upsilon_i \ell(\wh y_i, \alpha, \beta)  - \gamma_k \ell(\wh y_{\ihat}, \alpha, \beta)\right) \\
        &\xrightarrow{k\rightarrow \infty} (\sum_{i \in [N]} \upsilon_i)^{-1} \sum_{i \in [N]} \upsilon_i \ell(\wh y_i, \alpha, \beta)   = (\sum_{i\in[N]} \upsilon_i)^{-1} \sum_{i \in [N]} \upsilon_i \EE_{\muxo^i}[\ell(Y, \alpha, \beta)] \,,
\end{align*}
where the limit holds because $\lim_{k \to \infty} \gamma_{k} \to 0$.

    Combining the three cases, we can establish that
    \[
        \Sup{\upsilon \in \mc U}~g(\upsilon) \ge \Sup{\upsilon \in \mc U}~h(\upsilon),
    \]
    and by considering the above inequality along with~\eqref{eq:equiv-1}, we can claim that 
    \[
    \Sup{\upsilon \in \mc U}~g(\upsilon) = \Sup{\upsilon \in \mc U}~h(\upsilon).
    \]
    One now can rewrite
    \begin{subequations}
    \begin{align}
        \Sup{\upsilon \in \mc U}~h(\upsilon) &= \left\{
            \begin{array}{ccl}
                \Sup{\muxo^i \in \mc M(\mc Y)~\forall i} & \sup & \ds (\sum_{i\in[N]} \upsilon_i)^{-1} \sum_{i \in [N]} \upsilon_i \EE_{\muxo^i}[\ell(Y, \alpha, \beta)] \\
                & \st & \upsilon \in [0, 1]^N \\
                && \ds \sum_{i \in [N]} \upsilon_i \EE_{\muxo^i}[\DD_{\mc X}(x_0, \wh x_i) + \DD_{\mc Y}(Y, \wh y_i)]  \le N\rho \\
                && \sum_{i \in [N]} \upsilon_i \ge N \eps.
            \end{array}
        \right. \label{eq:h-func1} \\
        &= \left\{
            \begin{array}{ccl}
                \Sup{\muxo^i \in \mc M(\mc Y)~\forall i} & \sup & \ds (\sum_{i\in[N]} \upsilon_i)^{-1} \sum_{i \in [N]} \upsilon_i \EE_{\muxo^i}[\ell(Y, \alpha, \beta)] \\
                & \st & \upsilon \in [0, 1]^N \\
                && \ds \sum_{i \in [N]} \upsilon_i \EE_{\muxo^i}[\DD_{\mc X}(x_0, \wh x_i) + \DD_{\mc Y}(Y, \wh y_i)]  \le N\rho \\
                && \sum_{i \in [N]} \upsilon_i = N \eps.
            \end{array}
        \right. \label{eq:h-func2}
        \\
        &= \left\{
            \begin{array}{ccl}
                \Sup{\upsilon \in [0, 1]^N, \sum_{i\in [N]} \upsilon_i = N\eps} & \sup & \ds (\sum_{i\in[N]} \upsilon_i)^{-1} \sum_{i \in [N]} \upsilon_i \EE_{\muxo^i}[\ell(Y, \alpha, \beta)] \\
                & \st & \muxo^i \in \mc M(\mc Y) \quad \forall i \in [N] \\
                && \ds \sum_{i \in [N]} \upsilon_i \EE_{\muxo^i}[\DD_{\mc X}(x_0, \wh x_i) + \DD_{\mc Y}(Y, \wh y_i)]  \le N\rho,
            \end{array}
         \right. \label{eq:h-func3}
    \end{align}
    \end{subequations}
    where equality~\eqref{eq:h-func1} and~\eqref{eq:h-func3} are by interchanging the order of the two supremum operators, equality~\eqref{eq:h-func2} is from Lemma~\ref{lemma:alphaProbSimplifies}. This completes the proof.
\end{proof}


We are now ready to prove the results of Section~\ref{sec:gamma0}.

\begin{proof}[Proof of Proposition~\ref{prop:zero-eps}] 
	Because $\Pnom$ is an empirical measure, any joint probability measure $\pi \in \Pi(\QQ, \Pnom)$ can be written as $\pi = \frac{1}{N} \sum_{i \in [N]}\pi_i \otimes \delta_{(\wh x_i, \wh y_i)}$, where each $\pi_i$ is a probability measure on $\mc X \times \mc Y$. Thus, by the definition of the optimal transport cost, we find
	\begin{align*}
	\mbb B_\rho
	= \left\{ \QQ \in \mc M(\mc X \times \mc Y):
	\begin{array}{l}
	\exists ~\pi_i \in \mc M(\mc X \times \mc Y) ~\forall i \in [N] \text{ such that }  \QQ = N^{-1} \sum_{i \in [N]} \pi_i\\
	\ds \frac{1}{N} \sum_{i \in [N]} \int_{\Xi} [\DD_{\mc X}(x, \wh x_i) + \DD_{\mc Y}(y, \wh y_i)] \pi_i(\mathrm{d}x \times \mathrm{d} y) \leq \rho
	\end{array}
	\right\}.
	\end{align*}
	Fix any arbitrary $y_0 \in \mc Y$. For any $i \in [N]$, let $(x_i', y_i') \in \mc X \times \mc Y$ be such that $x_i' \neq x_0$ and
	\[
	    \DD_{\mc X}(x_i', \wh x_i) + \DD_{\mc Y}(y_i', \wh y_i) \le \frac{\rho}{2N}.
	\]
	Consider the following set of probability measures $\{\pi_i\}_{i \in [N]}$ defined through
	\[
	\forall i \in [N-1]: \quad \pi_i = \begin{cases}
	\delta_{(x_i', y_i')} & \text{if } i \in [N-1], \\
	\upsilon \delta_{(x_0, y_0)} + (1-\upsilon) \delta_{(x_N',  y_N')} & \text{if } i = N,
	\end{cases}
	\]
	for some $\upsilon \in (0, 1]$ satisfying
	$\upsilon [\DD_{\mc X}(x_0, \wh x_N) + \DD_{\mc Y}(y_0, \wh y_N)] \leq \rho/(2N)$.
	Given this specific construction of $\{\pi_i\}_{i \in [N]}$, we can verify that
	\[
	\ds \frac{1}{N} \sum_{i \in [N]} \int_{\mc X \times \mc Y} [\DD_{\mc X}(x, \wh x_i) + \DD_{\mc Y}(y, \wh y_i)] \pi_i(\mathrm{d}x \times \mathrm{d} y) \leq \rho, \quad \text{ and } \quad \sum_{i\in [N]} \pi_i(X = x_0) = \upsilon > 0.
	\]
    As such, the measure $\QQ' = N^{-1} \sum_{i \in [N]} \pi_i$ satisfies $\QQ' \in \mbb B_\rho$ and $\QQ'(X = x_0) > 0$. We thus have
    \[
    \Sup{\QQ \in \mbb B_\rho, \QQ(X = x_0) > 0}~ \EE_{\QQ}[ \ell(Y, \alpha, \beta) | X = x_0 ] \ge \EE_{\QQ'}[\ell(Y, \alpha, \beta) | X = x_0] = \ell(y_0, \alpha, \beta).
    \]
    Because the choice of $y_0$ is arbitrary, we find
    \[
        \Sup{\QQ \in \mbb B_\rho, \QQ(X = x_0) > 0}~ \EE_{\QQ}[ \ell(Y, \alpha, \beta) | X = x_0 ] \ge \Sup{y \in \mc Y}~\ell(y, \alpha, \beta).
    \]
	Moreover, because the distribution of $Y$ given $X = x_0$ is supported on $\mc Y$, we have
	\[
	    \Sup{\QQ \in \mbb B_\rho, \QQ(X = x_0) > 0}~ \EE_{\QQ}[ \ell(Y, \alpha, \beta) | X = x_0 ] \le \Sup{\QQ \in \mbb B_\rho} \EE_{\QQ} [ \ell(Y, \alpha, \beta)| X = x_0] \leq \sup_{y \in \mc Y}~\ell(y, \alpha, \beta).
	\]
	This observation establishes the postulated equality and completes the proof.
\end{proof}

\begin{proof}[Proof of Theorem~\ref{thm:type-1-refor}]
    By applying Proposition~\ref{prop:equiv}, we have
    \begin{align*}
        & \Sup{\substack{\QQ \in \mbb B_\rho \\ \QQ(X = x_0) \ge \eps}} \EE_{\QQ}[ \ell(Y, \alpha, \beta) | X = x_0 ] =\left\{
            \begin{array}{ccl}
                \Sup{\upsilon \in [0, 1]^N, \sum_{i\in [N]} \upsilon_i = N\eps} & \sup & \ds (N\eps)^{-1} \sum_{i \in [N]} \upsilon_i \EE_{\muxo^i}[\ell(Y, \alpha, \beta)] \\
                & \st & \muxo^i \in \mc M(\mc Y)~\forall i \in [N] \\
                && \ds \sum_{i \in [N]} \upsilon_i \big( \DD_{\mc X}(x_0, \wh x_i) + \EE_{\muxo^i}[ \DD_{\mc Y}(Y, \wh y_i)] \big)  \le N\rho.
            \end{array}
         \right. 
    \end{align*}
    For any $\upsilon$ satisfying $\sum_{i \in [N]} \upsilon_i \kappa_i > N \rho$, the inner supremum subproblem is infeasible, thus, without loss of optimality, we can add the constraint $\sum_{i \in [N]} \upsilon_i \kappa_i \le N \rho$ into the outer supremum. Denote temporarily by $\mc U$ the set
    \[
        \mc U = \left\{ \upsilon \in [0, 1]^N: \sum_{i \in [N]} \upsilon_i = N \eps,~\sum_{i \in [N]} \upsilon_i \kappa_i \leq N \rho \right\}.
    \]
    For any $\upsilon \in \mc U$, strong duality holds because $\muxo^i:=\delta_{ \wh{y}_i}$ constitutes a Slater point. The duality result for moment problem~\cite[Proposition 3.4]{ref:shapiro2001on} implies that the inner supremum problem is equivalent to
    \[
        \begin{array}{cll}
            \inf & \ds \lambda_1 \big(N \rho - \sum_{i \in [N]} \upsilon_i \DD_{\mc X}(x_0, \wh x_i) \big) + \sum_{i \in [N]} \theta_i \\
            \st & \lambda_1 \in \R_+,\; \theta \in \R^N \\
            & \upsilon_i \DD_{\mc Y}(y_i, \wh y_i) \lambda_1 + \theta_i \ge (N\eps)^{-1} \upsilon_i \ell(y_i, \alpha, \beta) & \forall y_i \in \mc Y,\; \forall i \in [N].
        \end{array}
    \]
    By rescaling $\theta_i \leftarrow \upsilon_i \theta_i$, we have the equivalent form
    \begingroup
    \allowdisplaybreaks
     \begin{align*}
        & \Sup{\QQ \in \mbb B_\rho, \QQ(X = x_0) \ge \eps}~ \EE_{\QQ}[ \ell(Y, \alpha, \beta) | X = x_0 ] \\
        =& \left\{
            \begin{array}{ccl}
                \Sup{\upsilon\in [0, 1]^N, \sum_{i\in [N]} \upsilon_i = N\eps} & \inf & \ds\lambda_1 \big(N \rho - \sum_{i \in [N]} \upsilon_i \DD_{\mc X}(x_0, \wh x_i) \big) + \sum_{i \in [N]} \upsilon_i \theta_i \\
            &\st & \lambda_1 \in \R_+,\; \theta \in \R^N \\
            && \DD_{\mc Y}(y_i, \wh y_i) \lambda_1 + \theta_i \ge (N\eps)^{-1}  \ell(y_i, \alpha, \beta) \quad \forall y_i \in \mc Y,\; \forall i \in [N]
            \end{array}
         \right. \\
         =& \left\{
            \begin{array}{cl}
                 \inf & \Sup{\upsilon\in [0, 1]^N, \sum_{i\in [N]} \upsilon_i = N\eps} \; \ds\lambda_1 \big(N \rho - \sum_{i \in [N]} \upsilon_i \DD_{\mc X}(x_0, \wh x_i) \big) + \sum_{i \in [N]} \upsilon_i \theta_i \\
                \st & \lambda_1 \in \R_+,\; \theta \in \R^N \\
                & \DD_{\mc Y}(y_i, \wh y_i) \lambda_1 + \theta_i \ge (N\eps)^{-1}  \ell(y_i, \alpha, \beta) \quad \forall y_i \in \mc Y,\; \forall i \in [N].
            \end{array}
         \right. \\
         =& \left\{
            \begin{array}{cll}
                \inf & \ds N\rho \lambda_1 + N\eps \lambda_2 + \sum_{i \in [N]} \vartheta_i \\
                \st & \lambda_1 \in \R_+,\; \lambda_2 \in \R, \; \theta \in \R^N,\; \vartheta \in \R_+^N \\
                & \DD_{\mc Y}(y_i, \wh y_i) \lambda_1 + \nu_i \ge (N\eps)^{-1}  \ell(y_i, \alpha, \beta) & \forall y_i \in \mc Y,\; \forall i \in [N]\\
                & \lambda_2 + \vartheta_i \ge \theta_i -  \DD_{\mc X}(x_0, \wh x_i)\lambda_1 & \forall i \in [N],
            \end{array}
         \right.
    \end{align*}
    \endgroup
    where the second equality follows from interchanging the min-max operators using Sion's minimax theorem~\cite{ref:sion1958minimax}. By eliminating $\theta$, we have 
    \begin{align*}
        & \Sup{\QQ \in \mbb B_\rho, \QQ(X = x_0) \ge \eps}~ \EE_{\QQ}[ \ell(Y, \alpha, \beta) | X = x_0 ] \\
        &= \left\{
            \begin{array}{cll}
                \inf & \ds N\rho \lambda_1 + N\eps \lambda_2 + \sum_{i \in [N]} \vartheta_i \\
                \st & \lambda_1 \in \R_+,\; \lambda_2 \in \R,\; \vartheta \in \R_+^N \\
                & \vartheta_i \ge \sup_{y_i \in \mc Y}\left\{ (N\eps)^{-1}  \ell(y_i, \alpha, \beta) -  [\DD_{\mc X}(x_0, \wh x_i) + \DD_{\mc Y}(y_i, \wh y_i)]\lambda_1 - \lambda_2 \right\} & \forall i \in [N],
            \end{array}
         \right. \\
         &= \left\{
            \begin{array}{cll}
                \inf & \ds \rho \lambda_1 + \eps \lambda_2 + \frac{1}{N}\sum_{i \in [N]} \vartheta_i \\
                \st & \lambda_1 \in \R_+,\; \lambda_2 \in \R,\; \vartheta \in \R_+^N \\
                &\vartheta_i \ge \sup_{y_i \in \mc Y} \big\{\eps^{-1} \ell(y_i, \alpha, \beta) - [\DD_{\mc X}(x_0, \wh x_i) + \DD_{\mc Y}(y_i, \wh y_i)] \lambda_1 - \lambda_2
                \big\} \quad \forall i \in [N]
            \end{array}
         \right. 
    \end{align*}
    where the second equality follows by rescaling the dual variables. Eliminating $\vartheta$ leads to the desired result.
\end{proof}

\begin{proof}[Proof of Proposition~\ref{prop:markowitz-reform}] 
By exploiting the quadratic form of $\ell$, the supremum problem in the last constraint of \eqref{eq:reform-1} is a quadratic optimization problem that admits a closed-form expression as 
\begin{align*}
    &\Sup{y_i \in \mc Y} \big\{\eps^{-1} \ell(y_i, \alpha, \beta) - \lambda_1 \DD_{\mc Y}(y_i, \wh y_i)  \big\}\\
    =& \Sup{y_i \in \mc Y} \big\{\eps^{-1} (y_i^{\top} \alpha - \beta)^2 - \eps^{-1} \eta\cdot y_i^{\top} \alpha -\lambda_1\cdot \|y_i - \wh y_i\|_2^2\big\}\\
    =&-\frac{1}{4}\eps^{-1}\eta^2-\eps^{-1}\eta\beta
    + \Sup{y_i \in \mc Y} \big\{\eps^{-1} (y_i^{\top} \alpha - \beta - \frac{1}{2}\eta)^2 - \lambda_1\cdot \|y_i - \wh y_i\|_2^2\big\}.
\end{align*}
Let $\Delta_i = y_i^{\top} \alpha - \wh y_i^{\top} \alpha$, then we have
\begin{align*}
 &\Sup{y_i \in \mc Y} \big\{\eps^{-1} \ell(y_i, \alpha, \beta) -  \DD_{\mc Y}(y_i, \wh y_i) \lambda_1 \big\}\\
 =& -\frac{1}{4}\eps^{-1}\eta^2-\eps^{-1}\eta\beta
 + \Sup{\Delta_i \in \R} \big\{\eps^{-1} (\Delta_i+\wh y_i^{\top} \alpha - \beta - \frac{1}{2}\eta)^2 - \lambda_1\cdot\Inf{y_i \in \R^m: y_i^{\top} \alpha - \wh y_i^{\top} \alpha = \Delta_i} \|y_i - \wh y_i\|_2^2\big\}\\
 =& -\frac{1}{4}\eps^{-1}\eta^2-\eps^{-1}\eta\beta
 + \Sup{\Delta_i \in \R} \big\{\eps^{-1} (\Delta_i+\wh y_i^{\top} \alpha - \beta - \frac{1}{2}\eta)^2 - \lambda_1\Delta^2_i\|\alpha\|_2^{-2}\big\},
\end{align*}
where the last equality is obtained by applying the Cauchy-Schwarz inequality, i.e., $\Delta_i = (y_i-\wh{y}_i)^\top\alpha\leq \|y_i-\wh{y}_i\|_2\|\alpha\|_2$, which implies that the minimum for $\|y_i-\wh{y}_i\|_2^2$ is $\Delta_i \|\alpha\|_2^{-2}$. By combining cases, we find
\begin{align*}
    &\Sup{y_i \in \mc Y} \big\{\eps^{-1} \ell(y_i, \alpha, \beta) - \lambda_1 \DD_{\mc Y}(y_i, \wh y_i)  \big\}\\
    =& \begin{cases}
    \eps^{-1}\cdot 
    \frac{(\wh y_i^\top \alpha - \beta-\half \eta)^2}{1-\eps^{-1}\|\alpha\|_2^2/\lambda_1}
    -\frac{1}{4}\eps^{-1}\eta^2-\eps^{-1}\eta\beta&   \mbox{if }
    \lambda_1 > \eps^{-1}\|\alpha\|_2^2,\\
    -\frac{1}{4}\eps^{-1}\eta^2-\eps^{-1}\eta\beta& 
    \mbox{if }\lambda_1 = \eps^{-1}\|\alpha\|_2^2\mbox{ and } \wh y_i^\top \alpha = \beta+ \eta/2,\\
    +\infty&   \mbox{otherwise}.
    \end{cases}
\end{align*}
Consider the case of $\lambda_1 > \eps^{-1}\|\alpha\|^2$ first, in which the last constraint of \eqref{eq:reform-1} is satisfied if and only if there exist ancillary variables $w\in (0,1)$ and $z \in \R_{+}^{N}$, such that
\begin{align*}
    1-\eps^{-1}\|\alpha\|_2^2/\lambda_1 \geq w\quad\Longleftrightarrow\quad\eps (1 - w)\lambda_1 \geq \|\alpha\|_2^2
\end{align*}
and
\begin{align*}
    \theta_i
    \ge
    \eps^{-1}\cdot 
    \frac{(\wh y_i^\top \alpha - \beta-\half \eta)^2}{1-\eps^{-1}\|\alpha\|_2^2/\lambda_1}
    -\frac{1}{4}\eps^{-1}\eta^2-\eps^{-1}\eta\beta
    \Longleftrightarrow
    \left\{
    \begin{array}{l}
    z_i = \eps\theta_i+
    \eps\|x_0 - \wh x_i\|^2 \lambda_1 + \eps\lambda_2 + \frac{1}{4}\eta^2+\eta\beta\\
    z_i w \geq \big(\wh y_i^\top \alpha - \beta-\half \eta\big)^2
    \end{array}
    \right.
    \; \forall i \in [N].
\end{align*}
Using the equivalent reformulation between hyperbolic constraint and second-order cone constraint \cite{ref:lobo1998applications}, we have
\begin{align*}
    \eps (1 - w)\lambda_1 \geq \|\alpha\|_2^2 \Longleftrightarrow
    \left\|
    \begin{bmatrix}
    2\alpha\\
    1 - w - \eps\lambda_1
    \end{bmatrix}\right\|_2
    \leq 1 - w + \eps\lambda_1
\end{align*}
and 
\begin{align*}
z_i w \geq \big(\wh y_i^\top \alpha - \beta-\half \eta\big)^2
\Longleftrightarrow
\left\|
\begin{bmatrix}
2\wh y_i^\top \alpha - 2\beta- \eta\\
z_i - w
\end{bmatrix}\right\|_2
\leq z_i + w \quad
\forall i \in [N].
\end{align*}
Now we consider the case of $\lambda_1 = \eps^{-1}\|\alpha\|_2^2$, where the last constraint of \eqref{eq:reform-1} is equivalent to the cone constraints when $w = 0$. Finally notice that $w = 1$ recovers the original constraints when $\alpha = 0$.
Combining all of the above cases and using them to replace the last constraint of \eqref{eq:reform-1} completes the proof.
\end{proof}

\begin{proof}[Proof of Proposition~\ref{prop:mean-cvar-reform}]
Notice that $\ell$ is a pointwise maximum of two linear functions of $y_i$. In this case, the supremum problem in the last constraint of \eqref{eq:reform-1} can be written as
\begin{align*}
    &\Sup{y_i \in \mc Y} \big\{\eps^{-1} \ell(y_i, \alpha, \beta) -  \DD_{\mc Y}(y_i, \wh y_i) \lambda_1 \big\}\\
    =&\max\left\{
        \begin{array}{l}
    \Sup{y_i \in \R^m}~ \Big\{- \eps^{-1}\eta y_i^\top \alpha + \eps^{-1}\beta
    -\lambda_1\cdot \|y_i - \wh y_i\|_2^2 \Big\},\\
    \Sup{y_i \in \R^m}~ \Big\{- \eps^{-1}(\eta + \frac{1}{\tau}) y_i^\top \alpha + \eps^{-1}(1 - \frac{1}{\tau})\beta
    -\lambda_1\cdot \|y_i - \wh y_i\|_2^2 \Big\}
    \end{array}
    \right\}\\
    =& 
    \max\bigg\{
    - \eps^{-1}\eta \wh y_i^\top \alpha + \eps^{-1}\beta+
    \frac{\eps^{-2}\eta^2}{4\lambda_1}\|\alpha\|_2^2,~
    - \eps^{-1}(\eta + \frac{1}{\tau}) \wh y_i^\top \alpha + \eps^{-1}(1 - \frac{1}{\tau})\beta
    + \frac{\eps^{-2}(\eta + \tau^{-1})^2}{4\lambda_1}\|\alpha\|_2^2
    \bigg\}.
\end{align*}
Therefore, using additional variables $z_i\in\R_{+}$ and $\tilde{z}_i\in\R_{+}$,
we can reformulate the last constraint of \eqref{eq:reform-1} as the following set of constraints
\begin{equation*}
    \left.
    \begin{array}{l}
    z_i = \theta_i +\lambda_1 \|x_0 - \wh x_i\|^2+ \lambda_2 + \eps^{-1}\eta \wh y_i^\top \alpha - \eps^{-1}\beta
    \\
    4\lambda_1 z_i
    \ge\eps^{-2}\eta^2\|\alpha\|_2^2\\
    \tilde{z}_i = 
    \theta_i +\lambda_1 \|x_0 - \wh x_i\|^2+ \lambda_2 + \eps^{-1}(\eta + \frac{1}{\tau}) \wh y_i^\top \alpha - \eps^{-1}(1 - \frac{1}{\tau})\beta
    \\
    4\lambda_1 \tilde z_i
    \ge\eps^{-2}(\eta + \tau^{-1})^2\|\alpha\|_2^2 
    \end{array}
    \right\} \quad \forall i \in [N].
\end{equation*}
Using the equivalent reformulation between hyperbolic constraint and second-order cone constraint \cite{ref:lobo1998applications}, we have for each $i \in [N]$
\begin{align*}
    4\lambda_1 z_i
    \ge\eps^{-2}\eta^2\|\alpha\|_2^2 \Longleftrightarrow
    \left\|
    \begin{bmatrix}
    \eps^{-1}\eta \alpha\\
    z_i-\lambda_1
    \end{bmatrix}\right\|_2
    \leq z_i+\lambda_1,
\end{align*}
and 
\begin{align*}
    4\lambda_1 \tilde z_i
    \ge\eps^{-2}(\eta + \tau^{-1})^2\|\alpha\|_2^2\Longleftrightarrow
    \left\|
    \begin{bmatrix}
    \eps^{-1}(\eta + \tau^{-1}) \alpha\\
    \tilde z_i-\lambda_1
    \end{bmatrix}\right\|_2
    \leq \tilde z_i+\lambda_1,
\end{align*}
which finishes the proof.
\end{proof}

\subsection{Proofs of Section~\ref{sec:gamma-positive}}
\label{appendix:sec4}

    \begin{proof}[Proof of Proposition~\ref{prop:rho_upper}]
	    Without any loss of optimality, we can substitute the constraint $\QQ(X \in \mc N_{\gamma}(x_0)) = 0$ by the inequality constraint $\QQ(X \in \mc N_{\gamma}(x_0)) \le 0$ to obtain
	    \begin{align*}
	        \rho_{\max}(x_0, \gamma) 	        &= \left\{
	            \begin{array}{cl}
	                \Inf{\pi \in \mc M((\mc X \times \mc Y) \times (\mc X \times \mc Y))} & \EE_{\pi}[ \DD(\xi, \xi')] \\
	                \st 
	                & \EE_{\pi}[ \mathbbm{1}_{\mc N_\gamma(x_0)\times\mc Y}(\xi)] \le 0, \quad \EE_{\pi}[\mathbbm{1}_{(\wh x_i, \wh y_i)}(\xi')] = \frac{1}{N} \qquad \forall i \in [N],
	            \end{array}
	        \right.
	    \end{align*}
	    where $\xi$ represents the joint random vector $(X, Y)$. By a weak duality result~\cite[Section~2.2]{ref:smith1995generalized}, we have
	    \begin{align*}
	        \rho_{\max}(x_0, \gamma) &\ge \left\{
	            \begin{array}{cl}
	                \Sup{b \in \R^N,~\zeta \in \R_+} & \ds  \frac{1}{N} \sum_{i=1}^N b_i \\
                    \st & \ds \sum_{i \in [N]} b_i \mathbbm{1}_{(\wh x_i, \wh y_i)}(x', y') - \zeta \mathbbm{1}_{\mc N_{\gamma}(x_0)}(x) \le \DD\big( (x, y), (x', y') \big) \quad \forall (x, y), (x', y') \in \mc X \times \mc Y
	            \end{array}
	        \right. 
	    \end{align*}
	    If $(x', y') \neq (\wh x_i, \wh y_i)$ for all $i \in [N]$ then the constraint does not involve the variables $b_i$ and thus does not affect the optimal value. Suppose that $(x', y') = (\wh x_i, \wh y_i)$ for some $i \in [N]$ then the constraint becomes
	    \[
	        b_i -  \zeta \mathbbm{1}_{\mc N_{\gamma}(x_0)}(x) \le \DD\big( (x, y), (\wh x_i, \wh y_i) \big) \quad \forall (x, y)\in \mc X \times \mc Y.
	    \]
	    Screening the above constraint for each $i \in [N]$, we obtain the following bound
	    
	    \begin{align*}
	        \rho_{\max}(x_0, \gamma) &\ge \left\{
	            \begin{array}{cl}
	                \Sup{b \in \R^N,~\zeta \in \R_+} & \ds  \frac{1}{N} \sum_{i=1}^N
	                b_i \\
                    \st 
                    &\ds  b_i - \zeta \mathbbm{1}_{\mc N_{\gamma}(x_0)}(x_i) \le \DD\big( (x_i, y_i), (\wh x_i, \wh y_i) \big) \quad \forall (x_i, y_i)\in \mc X \times \mc Y,~\forall i \in [N]
	            \end{array}
	        \right. \\
	        &=
	            \Sup{\zeta \in \R_+}~\frac{1}{N} \sum_{i \in [N]} \Inf{(x_i, y_i) \in \mc X \times \mc Y} \left\{ \DD\big( (x_i, y_i), (\wh x_i, \wh y_i) \big) + \zeta \mathbbm{1}_{\mc N_{\gamma}(x_0)}(x_i) \right\}
	         \\
	        &=
	            \Sup{\zeta \in \R_+}~\frac{1}{N} \sum_{i \in [N]} \Inf{x_i \in \mc X} \left\{ \DD_{\mc X}( x_i, \wh x_i) + \zeta \mathbbm{1}_{\mc N_{\gamma}(x_0)}(x_i) \right\},
	    \end{align*}
	   where the last equality follows from the decomposition of $\DD$ and the fact that the minimizer in $y_i$ is $\wh y_i$. For any $\zeta \ge 0$, we have for any $i \in [N]$
	    \[
	    \Inf{x_i \in \mc X} \left\{ \DD_{\mc X}( x_i, \wh x_i) + \zeta \mathbbm{1}_{\mc N_{\gamma}(x_0)}(x_i) \right\} = \begin{cases}
	        \min\{\zeta, d_i\} & \text{if } \wh{x}_i \in \mc N_{\gamma}(x_0), \\
	        0 & \text{if } \wh{x}_i \not \in \mc N_{\gamma}(x_0),
	    \end{cases}
	    \]
	    which implies that
	    \[
	    \rho_{\max}(x_0, \gamma) \ge \Sup{\zeta \in \R_+}~\frac{1}{N} \sum_{i \in \mc I_1} \min(\zeta, d_i) = \frac{1}{N} \sum_{i \in \mc I_1} d_i.
	    \]
	    In the last step, we show that the above inequality is tight. For any value $\rho$ such that $\rho >  \frac{1}{N} \sum_{i \in \mc I_1} d_i$, Assumption~\ref{a}\ref{a:vic} and the continuity of $\DD_{\mc X}$ and $\DD_{\mc Y}$ imply that there exists  $(x_i', y_i') \in (\mc X \times \mc Y) \backslash (\mc N_{\gamma}(x_0) \times \mc Y)$ such that
	    \[
	        \DD_{\mc X}(x_i', \wh x_i) + \DD_{\mc Y}(y_i', \wh y_i) \le d_i + \frac{1}{N}(\rho -  \frac{1}{N} \sum_{i \in \mc I_1} d_i).
	    \]
	    The distribution
	    \[
	        \QQ' = \frac{1}{N} \big( \sum_{i \in \mc I_1} \delta_{(x_i', y_i')} + \sum_{i \in \mc I_2} \delta_{(\wh x_i, \wh y_i)} \big)
	    \]
	    thus satisfies $\Wass(\QQ', \Pnom) \leq \rho$ and $\QQ'(X \in \mc N_{\gamma}(x_0)) \leq 0$. This implies that $\rho_{\max}(x_0, \gamma) = N^{-1} \sum_{i \in \mc I_1} d_i$ and completes the proof.
	\end{proof}
	
	\begin{proof}[Proof of Proposition~\ref{prop:robust_eps0}]
For simplicity, we assume that $\rho_{\max}(x_0,\gamma)>0$, which implies that $\mc I_1 \neq \emptyset$. Let's consider some index $j \in \mc I_1$.
Assumption~\ref{a}\ref{a:vic} and the continuity of $\DD_{\mc X}$ and $\DD_{\mc Y}$ imply that there exists  $x_i'\in \mc X \backslash \mc N_{\gamma}(x_0) $ such that
	    \[
	        \DD_{\mc X}(x_i', \wh x_i) \le d_i + \frac{1}{2N}(\rho -  \rho_{\max}(x_0, \gamma)).
	    \] 
	Fix an arbitrary value $y_0 \in \mc Y$. Consider the distribution
    \[
        \QQ' = \frac{1}{N} \big( \sum_{i \in \mc I_1\setminus\{j\}} \delta_{(x_i', \wh y_i)} + \sum_{i \in \mc I_2 } \delta_{(\wh x_i, \wh y_i)}  + \upsilon \delta_{(\wh x_j, y_0)} + (1 - \upsilon) \delta_{(\wh x_j, \wh y_j)}\big)
    \]
in which $\upsilon \in \R_{++}$ is chosen so that
\[
    \upsilon \DD_{\mc Y}(y_0, \wh y_j)  \le \frac{1}{2} (\rho - \rho_{\max}(x_0, \gamma)).
\]
It is easy to verify that $\QQ'(X\in \mc N_{\gamma}(x_0)) = \upsilon/N > 0$, $\Wass(\QQ', \Pnom) \le \rho$ and that
\[
\EE_{\QQ'}[ \ell(Y, \alpha, \beta) | X \in \mc N_\gamma(x_0) ] = \ell(y_0, \alpha, \beta).
\]
As $y_0$ is chosen arbitrarily, a similar argument as in the proof of Proposition~\ref{prop:zero-eps} leads to the necessary result. 

Note that alternatively, the case where $\rho_{\max}(x_0,\gamma)=0$ implies that all samples $\{\wh x_i\}_{i\in[N]}$ lie in the set $(\mc X\setminus \mc N_\gamma(x_0)) \cup \partial \mc N_\gamma(x_0) $. If $\mc I_1\neq \emptyset$, then the same argument as before applies. Otherwise, fixing $j$ to be such that $\wh{x}_j\notin \mc N_\gamma(x_0)$, the same results is obtained with 
\[
        \QQ' = \frac{1}{N} \big(  \sum_{i \in \mc I_2\setminus \{j\} } \delta_{(\wh x_i, \wh y_i)}  + \upsilon \delta_{(\wh x_j^*, y_0)} + (1 - \upsilon) \delta_{(\wh x_j, \wh y_j)}\big)
    \]
where $\wh x_j\opt$ is the projection of~$\wh x_j$ onto $\partial \mc N_\gamma(x_0)$. The proof is complete.
\end{proof}

\begin{proof}[Proof of Proposition~\ref{prop:eps_lower}]
We start by proving by contradiction that, when $\rho<\rho_{\max}(x_0,\gamma)$, we necessarily have that $\inf_{\QQ \in \mbb B_\rho} \QQ(X \in \mc N_\gamma(x_0))>0$. Let us assume that
$\inf_{\QQ \in \mbb B_\rho} \QQ(X \in \mc N_\gamma(x_0))=0$.
This implies that for all $\eps>0$, there exists a $\QQ \in \mbb B_\rho$ such that $\QQ(X \in \mc N_\gamma(x_0))\leq \eps$. Based on the following representation of $\mbb B_\rho$:
	\begin{align}
	\mbb B_\rho
	= \left\{ \QQ \in \mc M(\mc X \times \mc Y):
	\begin{array}{l}
	\exists ~\pi_i \in \mc M(\mc X \times \mc Y) ~\forall i \in [N] \text{ such that }  \QQ = N^{-1} \sum_{i \in [N]} \pi_i\\
	\ds \frac{1}{N} \sum_{i \in [N]} \EE_{\pi_i}[\DD_{\mc X}(X, \wh x_i) + \DD_{\mc Y}(Y, \wh y_i)]  \leq \rho
	\end{array}
	\right\}, \label{eq:BrhoRepPi}
	\end{align}
it must be that there exists an assignment for $\{\pi_i\}_{i=1}^N$ that satisfies
\[\frac{1}{N} \sum_{i \in [N]} \EE_{\pi_i}[\DD_{\mc X}(X, \wh x_i) + \DD_{\mc Y}(Y, \wh y_i)]  \leq \rho, \quad \text{and} \quad \frac{1}{N} \sum_{i \in [N]} \pi_i(X \in \mc N_\gamma(x_0)) \leq \eps.\]
We can work out the following steps:
\begin{align*}
    \rho &\geq \frac{1}{N} \sum_{i \in [N]} \EE_{\pi_i}[\DD_{\mc X}(X, \wh x_i) + \DD_{\mc Y}(Y, \wh y_i)] \geq \frac{1}{N} \sum_{i \in [N]} \EE_{\pi_i}[\DD_{\mc X}(X, \wh x_i)|X \notin \mc N_\gamma(x_0)]\pi_i(X \notin \mc N_\gamma(x_0)) \\
    &\geq \frac{1}{N} \sum_{i \in [N]} d_i(1-\pi_i(X \in \mc N_\gamma(x_0)) \geq \frac{1}{N}\sum_{i\in[N]} d_i - \left(\max_i d_i\right) \pi_i(X\in\mc N_\gamma(x_0)) \geq  \rho_{\max} - \left(\max_i d_i\right) \eps\,.
\end{align*}
Given that this is the case for all $\eps$, we conclude that $\rho\geq \rho_{\max}$, which contradicts our assumption about $\rho$.

Next, we turn to how to evaluate $\underline \eps:= \inf_{\QQ \in \mbb B_\rho} \QQ(X \in \mc N_\gamma(x_0))$. Given any $\bar{\QQ}\in\mbb B_\rho$, associated to some $\{\bar{\pi}_i\}_{i\in[N]}$ based on the representation in \eqref{eq:BrhoRepPi}, one can construct a new measure $\bar{\QQ}\opt$ of the form:
        \[\QQ_p\opt = \frac{1}{N} \sum_{i \in \mc I_1} \big( (1-p_i) \delta_{(\wh x_i\opt, \wh y_i)} + p_i \delta_{(\wh x_i, \wh y_i)} \big) + \frac{1}{N}\sum_{i \in \mc I_2} \delta_{(\wh x_i, \wh y_i)},
    \]
where $\wh x_i\opt
\in \mc X \backslash \mc N_\gamma(x_0)$ is a point arbitrarily close to the projection of $\wh x_i$ onto $\partial \mc N_{\gamma}(x_0)$, based on $\bar\QQ\opt \Let \QQ_{\bar{p}}\opt$ with $\bar{p}_i \Let \bar \pi_i(X\in\mc N_\gamma(x_0))$. It is easy to see how $\bar{\QQ}\opt\in\mbb B_\rho$ and achieves
$\bar{\QQ}\opt(X\in\mc N_\gamma(x_0)) \leq \bar\QQ(X\in\mc N_\gamma(x_0))$.
Hence, using similar arguments as in the proof of Proposition \ref{prop:robust_eps0}, we have that
\begin{align*}
    \underline \eps &= \left\{
        \begin{array}{cl}
        \inf &\QQ_p\opt(X\in\mc N_\gamma(x_0)) \\
        \st & p \in [0, 1]^N,~ \wh x_i\opt \in \mc X \backslash\mc N_\gamma(x_0)~\forall i \in \mc I_1 \\
            & \QQ_p\opt = \frac{1}{N} \sum_{i \in \mc I_1} \big( (1-p_i) \delta_{(\wh x_i\opt, \wh y_i)} + p_i \delta_{(\wh x_i, \wh y_i)} \big) + \frac{1}{N}\sum_{i \in \mc I_2} \delta_{(\wh x_i, \wh y_i)} \\
            & \frac{1}{N}\sum_{i \in \mc I_1} d_i (1-p_i) \le \rho \end{array} \right. \\
    &= \min\left\{ \frac{1}{N} \sum_{i \in \mc I_1} p_i : p \in [0, 1]^N,~ \frac{1}{N}\sum_{i \in \mc I_1} d_i (1-p_i) \le \rho \right\}\,.
\end{align*}
This completes the proof.
\end{proof}

\begin{proof}[Proof of Theorem~\ref{thm:eps-zero}] 
By exploiting the definition of the optimal transport cost and the fact that any joint probability measure $\pi \in \Pi(\QQ, \Pnom)$ can be written as $\pi = N^{-1} \sum_{i \in [N]}\pi_i \otimes \delta_{(\wh x_i, \wh y_i)}$, where each $\pi_i$ is a probability measure on $\mc X \times \mc Y$, we have
    \begin{align*}
        \Sup{\QQ \in \mbb B_\rho, \QQ(X \in \mc N_{\gamma}(x_0)) \geq \eps}~ \EE_{\QQ}[ \ell(Y, \alpha, \beta) | X \in \mc N_\gamma(x_0) ] 
        = &\left\{
            \begin{array}{cl}
                \sup & \EE_{\QQ}[ \ell(Y, \alpha, \beta) | X \in \mc N_\gamma(x_0) ] \\
                \st & \pi_i \in \mc M(\mc X \times \mc Y) \quad \forall i \in [N] \\
                & \sum_{i \in [N]}\pi_i(\mc N_{\gamma}(x_0) \times \mc Y) \geq N\eps \\
                & \QQ = N^{-1} \sum_{i \in [N]} \pi_i, ~\sum_{i \in [N]} \Wass(\pi_i, \delta_{(\wh x_i, \wh y_i)}) \le N\rho
            \end{array}
        \right. \\
        = &\left\{
            \begin{array}{cl}
                \sup & \ds\frac{ \sum_{i \in [N]}\EE_{\pi_i}[ \ell(Y, \alpha, \beta) \mathbbm{1}_{\mc N_{\gamma}(x_0)}(X)]}{\sum_{i \in [N]}\pi_i( X \in \mc N_{\gamma}(x_0))  } \\
                \st & \pi_i \in \mc M(\mc X \times \mc Y) \quad \forall i \in [N] \\
                & \sum_{i \in [N]}\pi_i(\mc N_{\gamma}(x_0) \times \mc Y) \geq N\eps \\
                & \sum_{i \in [N]} \Wass(\pi_i, \delta_{(\wh x_i, \wh y_i)}) \le N\rho.
            \end{array}
        \right. 
    \end{align*}
For any $i \in \mc I$, let $\wh x_i\opt = \arg\Min{x \in \partial \mc N_\gamma(x_0)} \DD_{\mc X} (x, \wh x_i)$ be again the projection of $\wh x_i$ onto $\partial \mc N_{\gamma}(x_0)$. Moreover, for any $i \in \mc I_1$, let $(x_i', y_i')$ be a point on $(\mc X \times \mc Y) \backslash (\mc N_{\gamma}(x_0) \times \mc Y)$. Using a continuity and greedy argument, we can choose $(x_i', y_i')$ sufficiently close to $(\wh x_i\opt, \wh y_i)$ so that it suffices to consider the conditional distribution $\pi_i$ of the form
\be \label{eq:pi_i-relation}
    \pi_i(\mathrm{d}x \times \mathrm{d}y) = 
    \begin{cases}
        p_i \delta_{\wh x_i}(\mathrm{d}x) \muxo^i(\mathrm{d}y) + (1-p_i) \delta_{(x_i', y_i')}(\mathrm{d}x \times \mathrm{d}y) & \text{if } i \in \mc I_1, \\
        p_i \delta_{\wh x_i\opt}(\mathrm{d}x) \muxo^i(\mathrm{d}y) + (1-p_i) \delta_{(\wh x_i, \wh y_i)}(\mathrm{d}x \times \mathrm{d}y) & \text{if } i \in \mc I_2,
    \end{cases}
\ee
for a parameter $p_i \in [0, 1]$ and a conditional measure $\muxo^i \in \mc M(\mc Y)$. Intuitively, $p_i$ represents the portion of the sample point $(\wh x_i, \wh y_i)$
  that is transported to the fiber $\mc N_{\gamma}(x_0) \times \mc Y$, and $\muxo^i$ is the conditional distribution of $Y$ given $X \in \mc N_{\gamma}(x_0)$ that is obtained by transporting the sample point $(\wh x_i, \wh y_i)$. Using this representation, we can rewrite $\QQ(X \in \mc N_{\gamma}(x_0)) = N^{-1} \sum_{i \in [N]} p_i$. By exploiting the definition of $d_i$, the worst-case conditional expected loss can be re-expressed as
  \begin{align*}
	    & \Sup{\QQ \in \mbb B_\rho, \QQ(X \in \mc N_\gamma(x_0)) \geq \eps}~ \EE_{\QQ}[ \ell(Y, \alpha, \beta) | X \in \mc N_\gamma(x_0) ] \\
	    =&\left\{
	    \begin{array}{cl}
	    \sup & \ds \frac{\sum_{i \in [N]} p_i \EE_{\muxo^i}[ \ell(Y, \alpha, \beta)]}{\sum_{i\in [N]} p_i} \\
	    \st & p_i \in [0, 1],~\muxo^i \in \mc M(\mc Y)\quad \forall i \in [N] \\
	    &\ds \sum_{i \in \mc I_1} (p_i \EE_{\muxo^i}[\DD_{\mc Y}(Y, \wh y_i)] + (1- p_i) d_i ) + \sum_{i \in \mc I_2} \big( p_i \EE_{\muxo^i}[\DD_{\mc Y}(Y, \wh y_i)] + p_i d_i \big) \leq N\rho \\
	    & \sum_{i \in \mc I} p_i \geq N\eps.
	    \end{array}
	    \right.
	\end{align*}
	For any $p = (p_i)_{i=1,\ldots, N} \in [0,1]^N$ such that $\sum_{i \in [N]} p_i \geq N\eps$, define the function
	\[
	    f(p) \Let \left\{
	        \begin{array}{cl}
	            \Sup{q \in \R_+^N} & \ds \Sup{\muxo^i: \EE_{\muxo^i}[p_i \DD_{\mc Y}(Y, \wh y_i)] \leq q_i~\forall i}~ \sum_{i \in [N]} \EE_{\muxo^i}[p_i \ell(Y, \alpha, \beta)] \\
	            \st &  \sum_{i \in \mc I_1} (q_i + (1-p_i) d_i) + \sum_{i \in \mc I_2} (q_i + p_i d_i) \le N\rho.
	        \end{array}
	    \right.
	\]
Using the general Charnes-Cooper variable transformation
	\[
	    \upsilon_i = \frac{p_i}{\sum_{i\in [N]} p_i}, \quad t = \frac{1}{\sum_{i \in [N]} p_i},
	\]
	the equivalent characterization~\cite[Lemma~1]{ref:schaible1974parameter} implies that
	\begin{align}
	    \Sup{\QQ \in \mbb B_\rho, \QQ(X \in \mc N_\gamma(x_0)) \geq \eps}~ \EE_{\QQ}[ \ell(Y, \alpha, \beta) | X \in \mc N_\gamma(x_0) ] &= \Sup{p \in [0,1]^N, \sum_{i\in[N]}p_i \geq \eps} \frac{f(p)}{\sum_{i\in[N]} {p_i}} \notag \\
	    &=
	    \left\{
	    \begin{array}{cl}
	    \Sup{\upsilon \in [0, 1]^N, ~t \in \R} & t f(\upsilon/t) \\
	    \st & \upsilon_i \le t \quad \forall i \in [N],~\sum_{i \in [N]} \upsilon_i = 1\\
	    &N^{-1}\leq t\leq (N\eps)^{-1},
	    \end{array}
	    \right. \label{eq:rejoin1}
	\end{align}
	where the fact that $t\geq N^{-1}$ is implied by $v_i\leq t$ and $\sum_{i\in[N]}v_i = 1$ yet makes explicit the fact that $t>0$.

	For any feasible value of $\upsilon$ and $t$, its corresponding objective value can be evaluated as
	\begin{align*}
	    t f(\upsilon / t) &= \left\{
	        \begin{array}{cl}
	            \Sup{q \in \R_+^N} & \ds \Sup{\muxo^i: \EE_{\muxo^i}[\upsilon_i \DD_{\mc Y}(Y, \wh y_i)] \leq q_i t~\forall i}~ \sum_{i \in [N]} \EE_{\muxo^i}[\upsilon_i \ell(Y, \alpha, \beta)] \\
	            \st &  \sum_{i \in \mc I_1} (q_i + (1- \frac{\upsilon_i}{t}) d_i) + \sum_{i \in \mc I_2} (q_i + \frac{\upsilon_i}{t} d_i) \le N\rho
	        \end{array}
	    \right. \\
	    &= \left\{
	        \begin{array}{cl}
	            \Sup{q \in \R_+^N} & \ds \Sup{\muxo^i: \EE_{\muxo^i}[\upsilon_i \DD_{\mc Y}(Y, \wh y_i)] \leq q_i t~\forall i}~ \sum_{i \in [N]} \EE_{\muxo^i}[\upsilon_i \ell(Y, \alpha, \beta)] \\
	            \st & \sum_{i \in \mc I_1} (q_i t + (t- \upsilon_i) d_i) + \sum_{i \in \mc I_2} (q_i t + \upsilon_i d_i) \le N\rho t
	        \end{array}
	    \right. \\
	    &= \left\{
	        \begin{array}{cl}
	            \Sup{\theta \in \R_+^N} & \ds \Sup{\muxo^i: \EE_{\muxo^i}[\upsilon_i \DD_{\mc Y}(Y, \wh y_i)] \leq \theta_i~\forall i}~ \sum_{i \in [N]} \EE_{\muxo^i}[\upsilon_i \ell(Y, \alpha, \beta)] \\
	            \st &  \sum_{i \in \mc I_1} (\theta_i + (t- \upsilon_i) d_i) + \sum_{i \in \mc I_2} (\theta_i + \upsilon_i d_i) \le N\rho t,
	        \end{array}
	    \right. 
	\end{align*}
	where the last equality follows from the change of variable $\theta_i \leftarrow q_i t$. The inner supremum problems over $\muxo^i$ are separable, and can be written using standard conic duality results~\cite[Theorem~1]{ref:blanchet2019quantifying} as
	\begin{align*}
	    \Sup{\muxo^i: \EE_{\muxo^i}[\upsilon_i \DD_{\mc Y}(Y, \wh y_i)] \leq \theta_i~\forall i}~ \sum_{i \in [N]} \EE_{\muxo^i}[\upsilon_i \ell(Y, \alpha, \beta)] = \Inf{\lambda \in \R_+^N} \sum_{i\in[N]} \theta_i \lambda_i + \upsilon_i\Sup{y_i \in \mc Y} \{ \ell(y_i, \alpha, \beta)  - \lambda_i \DD_{\mc Y}(y_i, \wh y_i) \}.
	\end{align*}
	Thus we find for any feasible $(\upsilon, t)$
	\begin{align}
	    t f(\upsilon/t) &= 
	    \left\{
	    \begin{array}{cl}
            \Sup{\theta \in \R_+^N} & \ds \Inf{\lambda \in \R_+^N} \sum_{i\in[N]} \theta_i \lambda_i + \upsilon_i\Sup{y_i \in \mc Y} \{ \ell(y_i, \alpha, \beta)  - \lambda_i \DD_{\mc Y}(y_i, \wh y_i) \} \\
            \st & \sum_{i \in \mc I_1} (\theta_i + (t- \upsilon_i) d_i) + \sum_{i \in \mc I_2} (\theta_i + \upsilon_i d_i) \le N\rho t,
        \end{array}
        \right. \notag \\
        &= 
	    \left\{
	    \begin{array}{ccl}
            \ds \Inf{\lambda \in \R_+^N} &\Sup{\theta \in \R_+^N} &  \sum_{i\in[N]} \theta_i \lambda_i + \upsilon_i \Sup{y_i \in \mc Y} \{ \ell(y_i, \alpha, \beta)  - \lambda_i \DD_{\mc Y}(y_i, \wh y_i) \} \\
            &\st &  \sum_{i \in \mc I_1} (\theta_i + (t- \upsilon_i) d_i) + \sum_{i \in \mc I_2} (\theta_i + \upsilon_i d_i) \le N\rho t
        \end{array}
        \right.\label{eq:rejoin2}
	\end{align}
	where the second equality holds thanks to Sion's minimax theorem because the feasible set for $\theta$ is compact, and the function
	\[
	    (\theta, \lambda) \mapsto \sum_{i\in[N]} \theta_i \lambda_i + \upsilon_i \Sup{y_i \in \mc Y} \{ \ell(y_i, \alpha, \beta)  - \lambda_i \DD_{\mc Y}(y_i, \wh y_i) \}
	\]
	is convex in $\lambda$, affine in $\theta$ and jointly continuous in both $\theta$ and $\lambda$. Let $\mc Q(\upsilon, t)$ be the feasible set for $\theta$, that is,
	\[
	\mc Q(\upsilon, t) = \left\{ \theta \in \R_+^N: \sum_{i \in \mc I_1} (\theta_i + (t- \upsilon_i) d_i) + \sum_{i \in \mc I_2} (\theta_i + \upsilon_i d_i) \le N\rho t \right\}.
	\]
	By rejoining~\eqref{eq:rejoin1}, \eqref{eq:rejoin2} and the definition of $\mc Q(\upsilon, t)$ above, we have
	\begin{align*}
	    &\Sup{\QQ \in \mbb B_\rho, \QQ(X \in \mc N_\gamma(x_0)) \geq \eps}~ \EE_{\QQ}[ \ell(Y, \alpha, \beta) | X \in \mc N_\gamma(x_0) ] \\
	    =& \Sup{(\upsilon, t) \in \Upsilon} \Inf{\lambda \in \R_+^N} \Sup{\theta \in \mc Q(\upsilon, t)}~\sum_{i\in[N]} \theta_i \lambda_i + \upsilon_i \Sup{y_i \in \mc Y} \{ \ell(y_i, \alpha)  - \lambda_i \DD_{\mc Y}(y_i, \wh y_i) \} ,
	\end{align*}
	where the feasible set $\Upsilon$ of $(\upsilon, t)$ in the above problem is understood to be the feasible set of~\eqref{eq:rejoin1}, that is,
	\[
	    \Upsilon \Let \left\{(\upsilon,t)\in [0, 1]^N\times\R : 
	    \upsilon_i \le t \quad \forall i \in [N],~\sum_{i \in [N]} \upsilon_i = 1,~N^{-1}\leq t\leq (N\eps)^{-1} \right\}.
	\]
Since $\Upsilon$ is compact, Sion's minimax theorem~\cite{ref:sion1958minimax} applies and we obtain
\begin{align}
&\Sup{\QQ \in \mbb B_\rho, \QQ(X \in \mc N_\gamma(x_0)) \geq \eps}~ \EE_{\QQ}[ \ell(Y, \alpha, \beta) | X \in \mc N_\gamma(x_0) ] \notag\\
&=  \Inf{\lambda \in \R_+^N} \Sup{(\upsilon, t)\in\Upsilon} \Sup{\theta \in \mc Q(\upsilon, t)}~\sum_{i\in[N]} \theta_i \lambda_i + \upsilon_i \Sup{y_i \in \mc Y} \{ \ell(y_i, \alpha, \beta)  - \lambda_i \DD_{\mc Y}(y_i, \wh y_i) \}\notag\\
&=  \Inf{\lambda \in \R_+^N} \Sup{(\upsilon, t)\in\Upsilon} \Sup{\theta \in \mc Q(\upsilon, t)}~\sum_{i\in[N]} \theta_i \lambda_i + \upsilon_i \inf_{s_i \in \mc S_i(\alpha,\beta,\lambda_i)} s_i \notag\\
&=  \Inf{\lambda \in \R_+^N,s_i\in\mc S_i(\alpha,\beta,\lambda_i)\forall i} \Sup{(\upsilon, t)\in\Upsilon} \Sup{\theta \in \mc Q(\upsilon, t)}~\sum_{i\in[N]} \theta_i \lambda_i + \upsilon_i s_i\label{eq:infsupsupThm44}
\end{align}
where $\mc S_i(\alpha,\beta,\lambda_i) \Let \{s\in\R : s\geq  \ell(y_i, \alpha, \beta)  - \lambda_i \DD_{\mc Y}(y_i, \wh y_i) \,\forall y_i\in\mc Y\}$, and where we exploited the fact that the supremum over $y_i\in\mc Y$ is not affected by the choice of $v$, $t$, and $\theta$, and that each $v_i\geq 0$ to get rid of the apparent bilinearity in $v_i$ and $y_i$. Given that the inner two layers of supremum represent a linear program, linear programming duality can be applied to obtain
\begin{equation}\label{eq:dualvttheta}
        \begin{array}{cll}
            \inf & \phi + (N\eps)^{-1} \nu^+ - N^{-1}\nu^- \\
            \st & \nu^+ \in \R_+,~\nu^-\in\R_+,~\phi \in \R,~\varphi \in \R_+,~\psi\in \R_+^N \\
            & \phi - d_i \varphi + \psi_i   - s_i \ge 0 &\forall i \in \mc I_1 \\
            & \phi + d_i \varphi + \psi_i  - s_i \ge 0 &\forall i \in \mc I_2 \\
            &\nu^+ - \nu^- + (\sum_{i \in \mc I_1} d_i - N\rho) \varphi  -\sum_{i \in [N]} \psi_i  \ge 0 \\
            &\varphi - \lambda_i \ge 0 &\forall i \in [N].
        \end{array}    
\end{equation}

One can confirm that strong linear programming duality necessarily applied given that $\rho>\rho_{\min} (x_0,\gamma,\eps)$ implies the existence of a feasible $\QQ \in \mbb B_\rho$ such that $\QQ(X \in \mc N_\gamma(x_0)) \ge \eps$. This probability measure can be used to create a feasible triplet $(\upsilon,t,\theta)$ for the supremum problem
\begin{align*}
t \Let 1/(N\QQ(X\in\mc N_\gamma(x_0))), \quad v_i \Let \pi_i(X\in \mc N_\gamma(x_0))/(N\QQ(X\in\mc N_\gamma(x_0))), \quad 
\theta_i &\Let 0,
\end{align*}
 with $\{\pi_i\}_{i\in[N]}$ as the set of probability measures that certify that $\QQ\in \mbb B_\rho$. 	Rejoining the infimum operation in \eqref{eq:dualvttheta} into \eqref{eq:infsupsupThm44} and exploiting the definition of the set $\mc V$ complete the proof.
\end{proof} 
	
    \begin{proof}[Proof of Proposition~\ref{prop:markowitz-reform-2}]
        For each $i \in [N]$, by the definition of the loss function~$\ell$ in~\eqref{eq:mean-var-loss} and by the choice of the transport cost $\DD_{\mc Y}$, the constraint 
        \[
            \Sup{y_i \in \mc Y}~ \ell(y_i, \alpha, \beta) - \lambda_i \DD_{\mc Y}(y_i, \wh y_i) \le s_i
        \]
        is equivalent to
        \[
            y_i^\top (\alpha \alpha^\top - \lambda_i I) y_i + \big( 2 \lambda_i \wh y_i - (2\beta + \eta) \alpha \big)^\top y_i + \beta^2 - \lambda_i \| \wh y_i \|_2^2 - s_i \le 0 \quad \forall y_i \in \mc Y.
        \]
        This constraint can be further linearized in $\alpha$ and $\beta$ using the fact that it is equivalent to: 
        \[
            \exists A_i \in \PSD^m, t \in \R_+,\, -A_i\succeq \alpha \alpha^\top - \lambda_i I,~t\geq \beta^2,~-y_i^\top A_i y_i + \big( 2 \lambda_i \wh y_i - (2\beta + \eta) \alpha \big)^\top y_i + t - \lambda_i \| \wh y_i \|_2^2 - s_i \le 0 \quad \forall y_i \in \mc Y.
        \]        
        Because $\mc Y = \R^m$, the set of constraints indexed by $y_i$ can be further written as the semi-definite constraint
        \[
            \begin{bmatrix}
             A_i & (\beta + \eta/2) \alpha - \lambda \wh y_i \\ (\beta + \eta/2) \alpha^\top  - \lambda_i \wh y_i^\top & s_i + \lambda \| \wh y_i\|_2^2 - t 
            \end{bmatrix} \succeq 0
        \]
        while using Schur complement, we also obtain two additional linear matrix inequalities to capture the two other constraints on $A_i$ and $t$ as
        \[\begin{bmatrix}
                \lambda_i I - A_i & \alpha \\ \alpha^\top & 1
            \end{bmatrix} \succeq 0,~\begin{bmatrix}
                t & \beta \\ \beta & 1
            \end{bmatrix} \succeq 0\,.\]
        This completes the proof.
    \end{proof}
    
     \begin{proof}[Proof of Proposition~\ref{prop:mean-cvar-reform-2}]
    
    Notice that $\ell$ is a pointwise maximum of two linear functions of $y_i$. In this case, the epigraphical formulation of the constraint
\begin{align*}
    &\Sup{y_i \in \mc Y} \big\{ \ell(y_i, \alpha, \beta) -  \lambda_i \DD_{\mc Y}(y_i, \wh y_i)  \big\} \le s_i
\end{align*}
    is equivalent to the following set of two semi-infinite constraint
    \begin{align*}
    \left\{
        \begin{array}{ll}
   - \eta y_i^\top \alpha + \beta
    -\lambda_i \|y_i - \wh y_i\|_2^2  \le s_i &\forall y_i \in \R^m\\
    - (\eta + \frac{1}{\tau}) y_i^\top \alpha + (1 - \frac{1}{\tau})\beta
    -\lambda_i \|y_i - \wh y_i\|_2^2  \le s_i &\forall y_i \in \R^m.
    \end{array}
    \right.
\end{align*}
    The proof now follows from the same line of argument as the proof of Proposition~\ref{prop:markowitz-reform-2}.
    \end{proof}

\section{Semi-infinite Program Reformulations under Compactness} \label{appendix:refor-compact}

    In this section, we provide the complementary reformulations for the case where $\mc Y$ is a compact ellipsoid with a non-empty interior. Due to space constraints, we focus on the most general setup of Section~\ref{sec:gammapositive_epspositive}. Here, the set $\mc V$ is defined as in~\eqref{eq:dual-set}.
    
	\begin{proposition}[Mean-variance loss function]\label{prop:markowitz-reform-compact} 
    Suppose that $\ell$ is the mean-variance loss function of the form~\eqref{eq:mean-var-loss}, $\gamma \in \R_{++}$, $\eps \in (0, 1]$ and $\rho > \rho_{\min}(x_0,\gamma,\eps)$. Suppose in addition that $\DD_{\mc Y}(y,\wh y) = \|y-\wh y\|_2^2$, $\mc Y  = \{ y \in \R^m: y^\top Q y + 2 q^\top y + q_0 \le 0\}$ for some symmetric matrix $Q$ and $\mc Y$ has non-empty interior. Let the parameters $d_i$ be defined as in~\eqref{eq:d-def}. The distributionally robust portfolio allocation model with side information~\eqref{eq:optimization-eps-0} is equivalent to the semi-definite optimization problem
    \begin{equation*}
        \begin{array}{cll} 
            \min & \phi + (N\eps)^{-1} \nu^+ - N^{-1}\nu^- \\
            \st & \alpha \in \mc A,~\beta \in \mc B,~t\in\R_+,~A_i \in \PSD^m\quad \forall i \in [N],~(\lambda, s, \nu^+, \nu^-, \phi, \varphi, \psi) \in \mc V,~\omega \in \R_+^N \\
            &\begin{bmatrix}
                \lambda_i I - A_i & \alpha \\ \alpha^\top & 1
            \end{bmatrix} \succeq 0,~\begin{bmatrix}
                A_i + \omega_i Q & (\beta + \eta/2) \alpha - \lambda_i \wh y_i + \omega_i q \\ (\beta + \eta/2) \alpha^\top  - \lambda_i \wh y_i^\top + \omega_i q & s_i + \lambda_i \| \wh y_i\|_2^2 - t + \omega_i q_0
            \end{bmatrix} \succeq 0 & \forall i \in [N] \\
            & \begin{bmatrix}
                t & \beta \\ \beta & 1
            \end{bmatrix} \succeq 0.
        \end{array}
    \end{equation*}
    \end{proposition}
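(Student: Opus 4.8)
The plan is to begin from the finite-dimensional reformulation~\eqref{eq:optimization-eps-0} supplied by Theorem~\ref{thm:eps-zero}, whose only non-conic ingredient is the family of semi-infinite constraints $s_i \ge \Sup{y_i \in \mc Y}\{\ell(y_i,\alpha,\beta)-\lambda_i\DD_{\mc Y}(y_i,\wh y_i)\}$, $i\in[N]$; the dual feasible set $\mc V$ of~\eqref{eq:dual-set} and the objective $\phi+(N\eps)^{-1}\nu^+-N^{-1}\nu^-$ are carried through unchanged. Substituting the mean-variance loss~\eqref{eq:mean-var-loss} together with $\DD_{\mc Y}(y,\wh y)=\|y-\wh y\|_2^2$, the $i$-th constraint becomes the requirement that
\[
    y^\top(\alpha\alpha^\top-\lambda_i I)y+\bigl(2\lambda_i\wh y_i-(2\beta+\eta)\alpha\bigr)^\top y+\beta^2-\lambda_i\|\wh y_i\|_2^2-s_i\le 0
\]
holds for every $y\in\mc Y$.

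Following the proof of Proposition~\ref{prop:markowitz-reform-2}, I would first delinearize the bilinear terms by introducing $A_i\in\PSD^m$ and $t\in\R_+$ with $-A_i\succeq\alpha\alpha^\top-\lambda_i I$ and $t\ge\beta^2$; by Schur complement these two conditions are precisely the linear matrix inequalities
\[
    \begin{bmatrix}\lambda_i I-A_i & \alpha\\ \alpha^\top & 1\end{bmatrix}\succeq 0,\qquad \begin{bmatrix}t & \beta\\ \beta & 1\end{bmatrix}\succeq 0,
\]
and — exactly as in Proposition~\ref{prop:markowitz-reform-2}, and handling the degenerate case $\lambda_i I\not\succeq\alpha\alpha^\top$ there — this substitution is without loss. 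After this step the $i$-th constraint has become the genuinely quadratic-in-$y$ implication
\[
    y^\top Q y+2q^\top y+q_0\le 0\ \Longrightarrow\ y^\top A_i y+\bigl((2\beta+\eta)\alpha-2\lambda_i\wh y_i\bigr)^\top y+s_i+\lambda_i\|\wh y_i\|_2^2-t\ge 0 .
\]

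The new ingredient relative to Proposition~\ref{prop:markowitz-reform-2}, where $\mc Y=\R^m$ let the ``$\forall y$'' condition collapse directly into a single LMI, is that here $\mc Y$ is cut out by one quadratic inequality with non-empty interior. I would therefore invoke the S-lemma~\cite{ref:polik2007Slemma}: the non-emptiness of the interior of $\mc Y$ supplies a Slater point $\bar y$ with $\bar y^\top Q\bar y+2q^\top\bar y+q_0<0$, so the displayed implication holds if and only if there is a multiplier $\omega_i\ge 0$ making the aggregated quadratic form globally nonnegative, i.e.
\[
    \begin{bmatrix} A_i+\omega_i Q & (\beta+\eta/2)\alpha-\lambda_i\wh y_i+\omega_i q \\ (\beta+\eta/2)\alpha^\top-\lambda_i\wh y_i^\top+\omega_i q & s_i+\lambda_i\|\wh y_i\|_2^2-t+\omega_i q_0 \end{bmatrix}\succeq 0 .
\]
Substituting these LMIs, together with the extra variables $t\in\R_+$, $A_i\in\PSD^m$, $\omega_i\ge 0$ and the untouched set $\mc V$, back into~\eqref{eq:optimization-eps-0} yields precisely the claimed semi-definite program; reversing each Schur-complement and S-lemma equivalence recovers a feasible point of~\eqref{eq:optimization-eps-0} from any feasible point of the SDP, which establishes the two-way equivalence.

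The routine parts are the Schur-complement manipulations and the sign bookkeeping when passing to the quadratic-form matrix (in particular the factor $\tfrac12$ that turns $(2\beta+\eta)\alpha$ into $(\beta+\eta/2)\alpha$ in the off-diagonal block, and the orientation of the multiplier so that $+\omega_i Q$, $+\omega_i q$, $+\omega_i q_0$ appear). The one genuinely delicate point is the use of the S-lemma: it is an exact equivalence only because $\mc Y$ is defined by a \emph{single} quadratic constraint satisfying Slater's condition — which is exactly what the non-empty-interior hypothesis guarantees — and I would double-check that the multiplier extracted from the S-lemma is the same $\omega_i$ in both directions of the argument. Everything else is a direct transcription of the reasoning behind Proposition~\ref{prop:markowitz-reform-2}.
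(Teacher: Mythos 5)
Your proposal is correct and follows essentially the same route as the paper: the paper's own proof simply states that the argument is verbatim that of Proposition~\ref{prop:markowitz-reform-2} (introducing $A_i$ and $t$ via Schur complements to linearize $\alpha\alpha^\top$ and $\beta^2$), with the penultimate ``$\forall y_i\in\mc Y$'' step converted into the displayed LMI by the S-lemma, exactly as you do. Your added remarks on the Slater condition supplied by the non-empty interior and on the sign bookkeeping for the multiplier $\omega_i$ are consistent with, and slightly more explicit than, the paper's one-line justification.
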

    \begin{proof}[Proof of Proposition~\ref{prop:markowitz-reform-compact}]
        The proof follows almost verbatim from that of Proposition~\ref{prop:markowitz-reform-2}. In the penultimate step, the constraint
        \[
        -y_i^\top A_i y_i + \big( 2 \lambda_i \wh y_i - (2\beta + \eta) \alpha \big)^\top y_i + t - \lambda_i \| \wh y_i \|_2^2 - s_i \le 0 \quad \forall y_i \in \mc Y
        \]        
         can be further written as 
        \[
            \exists \omega_i \in \R_+: ~\begin{bmatrix}
             A_i  & (\beta + \eta/2) \alpha - \lambda \wh y_i \\ (\beta + \eta/2) \alpha^\top  - \lambda_i \wh y_i^\top & s_i + \lambda \| \wh y_i\|_2^2 - t 
            \end{bmatrix} + \omega_i \begin{bmatrix} Q & q \\ q^\top & q_0 \end{bmatrix} \succeq 0
        \]
        thanks to the S-lemma. This completes the proof.
    \end{proof}
    
    \begin{proposition}[Mean-CVaR loss function]\label{prop:mean-cvar-reform-compact} 
    Suppose that $\ell$ is the mean-CVaR loss function of the form~\eqref{eq:mean-cvar-loss}, $\gamma \in \R_{++}$, $\eps \in (0, 1]$ and $\rho > \rho_{\min}(x_0, \gamma, \eps)$. Suppose in addition that $\DD_{\mc Y}(y,\wh y) = \|y-\wh y\|_2^2$, $\mc Y  = \{ y \in \R^m: y^\top Q y + 2 q^\top y + q_0 \le 0\}$ for some symmetric matrix $Q$ and $\mc Y$ has non-empty interior. Let the parameters $d_i$ be defined as in~\eqref{eq:d-def}. The distributionally robust portfolio allocation model with side information~\eqref{eq:optimization-eps-0} is equivalent to the semi-definite optimization problem
    \begin{equation*}
        \begin{array}{cll} 
            \min & \phi + (N\eps)^{-1} \nu^+ - N^{-1}\nu^- \\
            \st & \alpha \in \mc A,~\beta \in \mc B,~\omega_1 \in \R_+^N,~\omega_2 \in \R_+^N,~(\lambda, s, \nu^+, \nu^-, \phi, \varphi, \psi) \in \mc V \\
            &\begin{bmatrix}
                \lambda_i I + \omega_{1i} Q & \frac{\eta}{2} \alpha - \lambda_i \wh y_i + \omega_{1i} q\\ \frac{\eta}{2} \alpha^\top  - \lambda_i \wh y_i^\top + \omega_{1i} q^\top & s_i + \lambda_i \| \wh y_i\|_2^2 - \beta + \omega_{1i} q_0
            \end{bmatrix} \succeq 0 & \forall i \in [N] \\ 
            &
            \begin{bmatrix}
                \lambda_i I + \omega_{2i} Q& \frac{\eta \tau + 1}{2\tau}\alpha - \lambda_i \wh y_i + \omega_{2i} q\\ \frac{\tau + 1}{2\tau} \alpha^\top  - \lambda_i \wh y_i^\top + \omega_{2i} q^\top & s_i + \lambda_i \| \wh y_i\|_2^2 - (1-1/\tau)\beta + \omega_{2i} q_0
            \end{bmatrix} \succeq 0 & \forall i \in [N].
        \end{array}
    \end{equation*}
    \end{proposition}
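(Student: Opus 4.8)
The plan is to start from the finite-dimensional reformulation~\eqref{eq:optimization-eps-0} supplied by Theorem~\ref{thm:eps-zero}. There, the only constraint that is not already finite-dimensional is the family of semi-infinite constraints $s_i \ge \Sup{y_i \in \mc Y}~\{\ell(y_i,\alpha,\beta) - \lambda_i \DD_{\mc Y}(y_i,\wh y_i)\}$, $i \in [N]$, so the whole argument reduces to rewriting, for each fixed $i$, this single constraint as a pair of linear matrix inequalities. This follows the proof of Proposition~\ref{prop:mean-cvar-reform-2} line by line, the only new ingredient being the S-lemma, which is invoked to handle the ellipsoidal support $\mc Y$ in place of $\R^m$ --- exactly as Proposition~\ref{prop:markowitz-reform-compact} generalizes Proposition~\ref{prop:markowitz-reform-2}.

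First I would use that, for the mean-CVaR loss~\eqref{eq:mean-cvar-loss}, $\ell(\cdot,\alpha,\beta)$ is the pointwise maximum of the two affine maps $y \mapsto -\eta y^\top\alpha + \beta$ and $y \mapsto -(\eta + \tfrac{1}{\tau}) y^\top\alpha + (1-\tfrac{1}{\tau})\beta$. With $\DD_{\mc Y}(y,\wh y) = \|y-\wh y\|_2^2$, the constraint $\Sup{y_i\in\mc Y}~\{\ell(y_i,\alpha,\beta) - \lambda_i\|y_i-\wh y_i\|_2^2\} \le s_i$ therefore splits into
\begin{align*}
    -\eta y_i^\top\alpha + \beta - \lambda_i\|y_i-\wh y_i\|_2^2 &\le s_i \quad \forall y_i \in \mc Y,\\
    -(\eta + \tfrac{1}{\tau}) y_i^\top\alpha + (1-\tfrac{1}{\tau})\beta - \lambda_i\|y_i-\wh y_i\|_2^2 &\le s_i \quad \forall y_i \in \mc Y.
\end{align*}
Each line asserts that a convex quadratic in $y_i$ is nonnegative on $\mc Y$; homogenizing through the lifted variable $(y_i^\top,1)^\top$, the $j$-th line becomes $(y_i^\top,1)\,M_{ij}\,(y_i^\top,1)^\top \ge 0$ whenever $(y_i^\top,1)\,P\,(y_i^\top,1)^\top \le 0$, where $P = \begin{bmatrix} Q & q \\ q^\top & q_0 \end{bmatrix}$ encodes $\mc Y$ and $M_{ij}$ collects the coefficients of the corresponding quadratic --- for the first piece $M_{i1} = \begin{bmatrix} \lambda_i I & \tfrac{\eta}{2}\alpha - \lambda_i\wh y_i \\ \tfrac{\eta}{2}\alpha^\top - \lambda_i\wh y_i^\top & s_i + \lambda_i\|\wh y_i\|_2^2 - \beta \end{bmatrix}$, and for the second the same matrix with $\tfrac{\eta}{2}$ replaced by $\tfrac{\eta\tau+1}{2\tau}$ and $\beta$ by $(1-\tfrac{1}{\tau})\beta$.

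The key step is then the S-lemma~\cite{ref:polik2007Slemma}: because $\mc Y$ has non-empty interior, there is a strictly feasible point for $(y^\top,1)\,P\,(y^\top,1)^\top < 0$, so the implication above holds if and only if there exists $\omega_{ij} \ge 0$ with $M_{ij} + \omega_{ij}P \succeq 0$. Introducing $\omega_1 = (\omega_{i1})_{i\in[N]}$ and $\omega_2 = (\omega_{i2})_{i\in[N]}$ in $\R_+^N$, replacing the two semi-infinite constraints by these LMIs, and leaving the objective $\phi + (N\eps)^{-1}\nu^+ - N^{-1}\nu^-$ and the constraint $(\lambda, s, \nu^+, \nu^-, \phi, \varphi, \psi) \in \mc V$ unchanged yields precisely the semi-definite program in the statement. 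The expansion of the quadratics and the $\tfrac{1}{\tau}$ bookkeeping in the second piece are a verbatim repeat of the proof of Proposition~\ref{prop:mean-cvar-reform-2}, so I would refer to that proof for those routine computations. The only point needing genuine care --- and the mild obstacle here --- is verifying that the non-empty-interior hypothesis on $\mc Y$ really delivers the strict-feasibility condition under which the S-lemma is \emph{lossless} rather than merely sufficient; this is exactly what forces the extra assumption on $\mc Y$ and is the sole substantive difference from the $\mc Y = \R^m$ case of Proposition~\ref{prop:mean-cvar-reform-2}.
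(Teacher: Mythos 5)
Your proposal is correct and follows essentially the same route as the paper: the paper's own proof simply invokes the argument of Proposition~\ref{prop:mean-cvar-reform-2} (splitting the mean-CVaR loss into two affine pieces, yielding two semi-infinite quadratic constraints per sample) and then replaces the final step with the S-lemma for the ellipsoidal support, exactly as you do. Your explicit homogenization and the remark that the non-empty-interior assumption supplies the strict feasibility needed for the S-lemma to be lossless are precisely the (unstated) details behind the paper's one-line proof.
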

    \begin{proof}[Proof of Proposition~\ref{prop:mean-cvar-reform-compact}]
        The proof follows almost verbatim from that of Proposition~\ref{prop:mean-cvar-reform-2}, with the last step involves rewriting the semi-infinite constraints
        \begin{align*}
    \left\{
        \begin{array}{ll}
   - \eta y_i^\top \alpha + \beta
    -\lambda_i \|y_i - \wh y_i\|_2^2  \le s_i &\forall y_i \in \mc Y \\
    - (\eta + \frac{1}{\tau}) y_i^\top \alpha + (1 - \frac{1}{\tau})\beta
    -\lambda_i \|y_i - \wh y_i\|_2^2  \le s_i &\forall y_i \in \mc Y
    \end{array}
    \right.
\end{align*}
using the S-lemma.
    \end{proof}
    
    Notice that the conditions in Propositions~\ref{prop:markowitz-reform-compact} and \ref{prop:mean-cvar-reform-compact} imply that Lemmas~\ref{lemma:MV} and \ref{lemma:MCVaR} also hold. The reformulations in Propositions~\ref{prop:markowitz-reform-compact} and \ref{prop:mean-cvar-reform-compact} are thus exact for the robustified conditional mean-variance and mean-CVaR portfolio allocation problems, respectively.

\section{Portfolio Allocation with type-$\infty$ Optimal Transport Cost Ambiguity Set}
\label{appendix:type-infty}

In this appendix, we elaborate the reformulations for the conditional portfolio allocation problems using the type-$\infty$ Wasserstein ambiguity set.
We first revisit the definition of the type-$\infty$ optimal transport distance.

\begin{definition}[Type-$\infty$ optimal transport cost]
		Let $\DD$ be a nonnegative and continuous function on $\Xi \times \Xi$. The type-$\infty$ optimal transport cost between two distributions $\QQ_1$ and $\QQ_2$ supported on $\Xi$ is defined as
		\[	\Wass_{\infty}(\QQ_1, \QQ_2) \Let \inf \left\{ \mathrm{ess} \Sup{\pi} \big\{ \DD(\xi_1, \xi_2) : (\xi_1, \xi_2)  \in \Xi \times \Xi \big\} :
		\pi \in \Pi(\QQ_1, \QQ_2)
		\right\}.
		\]
	\end{definition}

    The type-$\infty$ ambiguity set can be formally defined as
    \[
    \mbb B^\infty_\rho= \left\{ \QQ \in \mc M(\mc X \times \mc Y): \Wass_\infty(\QQ, \Pnom) \leq \rho  \right\}.
    \]
    
    We now provide the reformulation for the mean-variance portfolio allocation problem. To this end, recall that the parameters $\kappa$ are defined as in~\eqref{eq:kappa-def}. In addition, define the following set
    \[
	\mc J \Let \left\{ i \in [N] : \DD_{\mc X}(x_0, \wh x_i) \le \rho + \gamma \right\},
	\]
	and $\mc J$ is decomposed further into two disjoint subsets
	\[
	    \mc J_1 = \left\{ i \in \mc J: \DD_{\mc X}(x_0, \wh x_i) + \rho \le \gamma \right\} ~~ \text{and} ~~ \mc J_2 = \mc J \backslash \mc J_1.
	\]
    
    \begin{proposition}[Mean-variance loss function] \label{prop:infinity}Suppose that $\ell$ is the mean-variance loss function of the form~\eqref{eq:mean-var-loss}, $\gamma \in \R_+$. Suppose in addition that $\mc X = \R^n, \mc Y  = \R^m$, $\DD_{\mc X}(x,\wh x) = \|x-\wh x\|^2$, $\DD_{\mc Y}(y,\wh y) = \|y-\wh y\|_2^2$ and $\rho > \min_{i \in [N]}~\kappa_i$. The distributionally robust portfolio allocation model with side information
    \[
    \Min{\alpha \in \mc A,~\beta\in \R} \Sup{\QQ \in \mbb B^\infty_\rho, \QQ(X \in \mc N_\gamma(x_0)) > 0} ~\EE_{\QQ} [ \ell(Y, \alpha, \beta) | X \in \mc N_{\gamma}(x_0)]
    \]
    is equivalent to the second-order cone program 
    \[
    \begin{array}{cll}
        \min & \lambda \\
        \st & \alpha\in \mc A,~\beta \in \R,~\lambda \in \R,~u_i \in \R~\forall i \in \mathcal J_1,~u_i \in \R_+~\forall i \in \mathcal J_2,~t \in \R^N,~z \in \R_+^N \\
         & \sum_{i \in \mathcal J} u_i \le 0 \\
        & 
        \!\!\!\!\left. \begin{array}{l}
        \left\Vert\begin{bmatrix}
            2 z_i\\
            1-\lambda - u_i - \eta \beta - \frac{1}{4} \eta^2
        \end{bmatrix}\right\Vert_2
        \leq 1+\lambda + u_i + \eta \beta + \frac{1}{4} \eta^2 
        \\
         \wh y_i^\top \alpha - \beta - 0.5\eta \le t_i,~-\wh y_i^\top \alpha + \beta + \frac{1}{2}\eta \le t_i  \\
        t_i + (\rho -  \DD_{\mc X}(\wh x_i^p, \wh x_i))^{1/2} \|\alpha\|_2 \le z_i
        \end{array} \right\} \quad \forall i \in \mc J.
    \end{array}
    \]
    \end{proposition}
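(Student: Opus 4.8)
The plan is to proceed in three stages: first reduce the inner worst‑case conditional expected loss to a finite‑dimensional convex program by exploiting the structure of the $\infty$‑optimal‑transport ball and of the fiber constraint, then specialize to the mean‑variance loss and evaluate the resulting inner suprema over $y$ in closed form, and finally recast everything as a second‑order cone program. Because $\Wass_\infty(\QQ,\Pnom)\le\rho$ forces \emph{every} unit of transported mass to stay within $\DD$‑cost $\rho$ of the atom $(\wh x_i,\wh y_i)$ it originates from, a sample can place mass on the fiber set $\mc N_\gamma(x_0)\times\mc Y$ only when $\kappa_i\le\rho$; the hypothesis $\rho>\min_i\kappa_i$ guarantees that at least one such sample exists, so the supremum's feasible set is nonempty and $\QQ(X\in\mc N_\gamma(x_0))>0$ is attainable. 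The samples that matter are exactly those in $\mc J$, and the splitting $\mc J=\mc J_1\cup\mc J_2$ records whether a sample is \emph{forced} to keep all its mass inside the fiber (it cannot be pushed out within budget $\rho$, hence contributes a fixed weight $1/N$) or may contribute any weight in $[0,1/N]$. Invoking the type‑$\infty$ worst‑case reformulation of~\cite[Proposition~2.5]{ref:nguyen2020distributionally} (equivalently, re‑deriving it from the definition of $\Wass_\infty$ via the atomic decomposition $\pi=N^{-1}\sum_i\pi_i\otimes\delta_{(\wh x_i,\wh y_i)}$), the worst‑case conditional expected loss equals a linear‑fractional maximization
\[
  \Sup{w}\ \left\{\ds\frac{\sum_{i\in\mc J}w_i\, g_i(\alpha,\beta)}{\sum_{i\in\mc J}w_i}\ :\ w_i=1\ \forall i\in\mc J_1,\ w_i\in[0,1]\ \forall i\in\mc J_2\right\},
\]
where $g_i(\alpha,\beta)=\Sup{}\{\ell(y,\alpha,\beta):\DD_{\mc X}(x,\wh x_i)+\DD_{\mc Y}(y,\wh y_i)\le\rho,\ x\in\mc N_\gamma(x_0)\}$ is the largest loss obtainable by relocating sample $i$ into the fiber.

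Next I would dualize the fractional program over the weights. Since $g_i$ does not depend on $w$, and the feasible region of $w$ is a box with an equality block (indices in $\mc J_1$) and an inequality block (indices in $\mc J_2$), the Charnes--Cooper transform followed by linear‑programming duality turns the supremum into an infimum over an epigraph variable $\lambda$ and multipliers $u_i$, $i\in\mc J$, with $u_i$ free for $i\in\mc J_1$, $u_i\ge0$ for $i\in\mc J_2$, the normalization $\sum_{i\in\mc J}u_i\le0$, the constraints $g_i(\alpha,\beta)\le\lambda+u_i$ for all $i\in\mc J$, and objective simply $\lambda$. Joining this infimum with the outer $\Min{\alpha\in\mc A,\ \beta\in\R}{}$ reproduces the template of the claimed SOCP up to the handling of $g_i$.

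Third, I would evaluate $g_i(\alpha,\beta)$ explicitly. For a fixed total budget $\rho$ the optimal allocation spends exactly $\kappa_i=\DD_{\mc X}(\wh x_i^p,\wh x_i)$ on moving $x$ into $\mc N_\gamma(x_0)$ and the remainder $\rho-\kappa_i$ on perturbing $y$ inside $\{y:\|y-\wh y_i\|_2^2\le\rho-\kappa_i\}$, because $\ell$ depends only on $y$ and is nondecreasing in the $y$‑budget. Completing the square gives $\ell(y,\alpha,\beta)=(y^\top\alpha-\beta-\tfrac12\eta)^2-\eta\beta-\tfrac14\eta^2$, and maximizing this quadratic over the Euclidean ball of radius $(\rho-\kappa_i)^{1/2}$ by Cauchy--Schwarz yields
\[
  g_i(\alpha,\beta)=\big(|\wh y_i^\top\alpha-\beta-\tfrac12\eta|+(\rho-\kappa_i)^{1/2}\|\alpha\|_2\big)^2-\eta\beta-\tfrac14\eta^2 .
\]
Introducing $t_i\ge|\wh y_i^\top\alpha-\beta-\tfrac12\eta|$, i.e. the two linear inequalities $\wh y_i^\top\alpha-\beta-\tfrac12\eta\le t_i$ and $-\wh y_i^\top\alpha+\beta+\tfrac12\eta\le t_i$, together with $z_i\ge t_i+(\rho-\DD_{\mc X}(\wh x_i^p,\wh x_i))^{1/2}\|\alpha\|_2$ (one second‑order cone constraint), one gets $g_i(\alpha,\beta)\le z_i^2-\eta\beta-\tfrac14\eta^2$, so $g_i(\alpha,\beta)\le\lambda+u_i$ is implied by the hyperbolic constraint $z_i^2\le\lambda+u_i+\eta\beta+\tfrac14\eta^2$, which is equivalent to $\|(2z_i,\,1-\lambda-u_i-\eta\beta-\tfrac14\eta^2)\|_2\le 1+\lambda+u_i+\eta\beta+\tfrac14\eta^2$ by the standard hyperbolic‑to‑SOC identity~\cite{ref:lobo1998applications}; conversely these auxiliary variables can always be chosen tight, so no slack is introduced. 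Collecting all the pieces gives the stated SOCP.

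The main obstacle is the first stage: establishing the type‑$\infty$ worst‑case characterization rigorously with the correct index sets and weight constraints. One must show that (i) it is without loss of optimality to restrict the adversary to transport plans in which each atom's mass is split at most between one point in the fiber and one point outside it — a greedy/continuity argument, using $\mc Y=\R^m$ so the $y$‑suprema are unconstrained and $\rho>\min_i\kappa_i$ so the relevant balls have positive radius; (ii) the membership rules defining $\mc J$, $\mc J_1$, $\mc J_2$ indeed encode "can reach the fiber", "locked inside the fiber", and "optionally inside" for the chosen $\DD_{\mc X}$ and $\mc N_\gamma(x_0)$; and (iii) the supremum over $y$, the infimum from the fractional duality, and the outer minimization over $(\alpha,\beta)$ may be interchanged — Sion's minimax theorem applies on the relevant compact sets, exactly as in the interchanges used in the proofs of Theorems~\ref{thm:type-1-refor} and~\ref{thm:eps-zero}. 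A final point to record, in line with Remark~\ref{remark:weak-duality}, is that since $\mc Y=\R^m$ the SOCP is an exact reformulation of the $\infty$‑transport version of~\eqref{eq:dro} but only a conservative approximation of the corresponding robustified conditional mean-variance risk-measure problem.
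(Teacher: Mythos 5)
Your proposal is correct and follows essentially the same route as the paper: where the paper simply invokes \cite[Theorem~2.3]{ref:nguyen2020distributionally} to obtain the finite program in $\lambda$, $u$ and the per-sample worst-case values $v_i^\star(\alpha,\beta)$, you sketch a re-derivation of that step (with the correct weight constraints $w_i=1$ on $\mc J_1$, $w_i\in[0,1]$ on $\mc J_2$ and the matching fractional-programming dual), but the subsequent closed-form evaluation of the supremum over the Euclidean ball via completing the square and Cauchy--Schwarz, and the epigraphical plus hyperbolic-to-second-order-cone conversion, coincide with the paper's argument. No gaps.
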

    \begin{proof}[Proof of Proposition \ref{prop:infinity}]
    Using \cite[Theorem~2.3]{ref:nguyen2020distributionally}, we have
    \[
    \begin{array}{cll}
        \min & \lambda \\
        \st & \alpha\in \mc A,~\beta \in \R,~\lambda \in \R,~u_i \in \R~\forall i \in \mathcal J_1,~u_i \in \R_+~\forall i \in \mathcal J_2 \\
        & \lambda + u_i \ge v_i\opt(\alpha,\beta) \quad \forall i \in \mathcal J \\
        & \sum_{i \in \mathcal J} u_i \le 0,
    \end{array}
    \]
    where for each $i \in \mc J$, the value $v_i\opt(\alpha, \beta)$ is
    \begin{align*}
        v_i\opt(\alpha, \beta) &= \sup \left\{ \ell(y_i, \alpha, \beta) : y_i \in \mc Y,~ \| y_i-\wh y_i\|_2^2 \leq \rho -  \DD_{\mc X}(\wh x_i^p, \wh x_i) \right\}\\
        &= \Big(|\wh y_i^\top \alpha - \beta - \frac{1}{2}\eta| + \|\alpha\|_2(\rho -  \DD_{\mc X}(\wh x_i^p, \wh x_i))^{1/2}\Big)^2 -\eta\beta -\frac{1}{4} \eta^2.
    \end{align*}
    Formulating each constraint $\lambda + u_i \ge v_i\opt(\alpha,\beta)$ using an epigraphical formulation of the form
    \[
        |\wh y_i^\top \alpha - \beta - \frac{1}{2}\eta| \le t_i,~t_i+ \|\alpha\|_2(\rho -  \DD_{\mc X}(\wh x_i^p, \wh x_i))^{1/2} \le z_i,~v_i\opt(\alpha, \beta) \ge z_i^2 - \eta \beta - \frac{1}{4} \eta^2
    \]
    and formulating them as linear or second-order cone constraints completes the proof. 
   \end{proof}

\section{\modified{Study of Computational  Time of Solutions}} \label{sec:mean-cvar}
\modified{
We provide the average CPU solution time (in seconds) for solving the portfolio allocation problems with 50 and 399 assets, based on the $252\times 2$ instances used in the out-of-sample tests. All experiments are conducted on the Cedar cluster of Compute Canada with a single CPU core and 4G memory.

From Table~\ref{tab:time_50assets}, we observe that 
while OTCMV and OTCMC have a larger solution time than the competing baselines, their average solution times for one instance are less than 4 seconds for 50 assets and 18 seconds for 399 assets. The relative difference in the computational time also seems stable as we increase the number of assets. This confirms that our models are tractable and can handle the size of problems encountered in practice.
\begin{table}[!htp]
\centering
\caption{The average CPU solution time (in seconds) for portfolio problems with 50 and 399 assets.}\label{tab:time_50assets}
\begin{threeparttable}
\begin{tabular}{clllllllllll}
\hline
No. assets           & $\eta$ & MV   & DRMV & CMV  & DRCMV & OTCMV & MC   & DRMC & CMC  & DRCMC & OTCMC \\\hline
\multirow{5}{*}{50}  & 1      & 0.46 & 0.47 & 0.47 & 0.60  & 3.42  & 0.47 & 0.49 & 0.42 & 0.47  & 1.00  \\
                     & 3      & 0.49 & 0.50 & 0.49 & 0.63  & 3.20  & 0.47 & 0.48 & 0.41 & 0.46  & 1.06  \\
                     & 5      & 0.49 & 0.50 & 0.50 & 0.65  & 3.21  & 0.49 & 0.49 & 0.41 & 0.47  & 1.12  \\
                     & 7      & 0.46 & 0.47 & 0.46 & 0.63  & 3.45  & 0.48 & 0.50 & 0.42 & 0.48  & 1.11  \\
                     & 9      & 0.39 & 0.40 & 0.39 & 0.55  & 3.19  & 0.48 & 0.49 & 0.42 & 0.47  & 1.12  \\\hline
\multirow{5}{*}{399} & 1      & 3.70 & 3.29 & 2.80 & 3.61  & 17.09 & 3.98 & 3.94 & 2.92 & 3.80  & 8.47  \\
                     & 3      & 3.74 & 3.51 & 2.97 & 3.79  & 17.60 & 4.00 & 4.01 & 2.94 & 3.79  & 8.90  \\
                     & 5      & 3.55 & 3.28 & 2.74 & 4.20  & 16.98 & 3.96 & 3.96 & 2.93 & 3.78  & 9.12  \\
                     & 7      & 3.57 & 3.35 & 2.78 & 4.30  & 17.75 & 3.96 & 3.96 & 2.93 & 3.79  & 9.11  \\
                     & 9      & 3.54 & 3.35 & 2.76 & 4.36  & 17.55 & 4.47 & 4.25 & 3.19 & 4.05  & 9.44 \\\hline
\end{tabular}
\end{threeparttable}
\end{table}
}

\section{Implications for Contextual Two-stage Stochastic Linear programs}\label{sec:contextTSLP}

In this appendix, we explore how our distributionally robust conditional decision-making framework can be exploited in a two-stage stochastic linear optimization problem. 
Let us redefine the loss function as $\ell(Y, \alpha, \beta)\Let 
    r(h(Y,\alpha),\beta) - \eta\cdot h(Y,\alpha)$,
where $h(Y,\alpha)$ captures the optimal profit generated by a linear recourse problem with right-hand side uncertainty:
\begin{align*}
h(y,\alpha) \Let \left\{
\begin{array}{cl}
\max\;& c^\top \upsilon\\
\st &\upsilon \in \R^m,~A\alpha+B\upsilon\leq Cy,
\end{array} \right.
\end{align*}
constrained by $K$ linear constraints prescribed by the matrices $A$, $B$ and $C$ of appropriate dimensions. 
This formulation can capture many interesting operational management problems, including multi-item newsvendor problems~\cite{doi:10.1287}, facility location problems~\cite{SAIF2021995}, and scheduling problems~\cite{denton:ORUU2010}.

In the case that $r(\cdot,\beta)$ is non-increasing for all $\beta$, such as for the CVaR formulation, one can obtain a conservative approximation for problem \eqref{eq:stochastic-opt} by employing linear decision rules. Namely,
\[\EE_{\PP}[ \ell(Y, \alpha, \beta)]=\EE_{\PP}[r(h(Y,\alpha),\beta) - \eta\cdot h(Y,\alpha)]\leq \EE_{\PP}[r(c^\top(\upsilon+{\Upsilon}Y),\beta) - \eta\cdot c^\top(\upsilon+{\Upsilon}Y)]\]
as long as
\begin{equation}
    A\alpha+B(\upsilon+{\Upsilon}y)\leq Cy \quad \forall y\in \mc Y  \label{appF:LDR:constraint}
\end{equation}
since in this case $h(y,\alpha)\geq c^\top(\upsilon+{\Upsilon}Y)$.

%
It is therefore not surprising that Propositions~\ref{prop:mean-cvar-reform}
and~\ref{prop:mean-cvar-reform-2} have natural extensions to provide conservative approximation models in the form of second-order cone and semi-definite programs. For example, Proposition~\ref{prop:mean-cvar-reform} can be extended to the following.

\begin{corollary}[Mean-CVaR TSLP for single fiber set]\label{corol:tslp} 
Suppose that $\ell$ is the mean-CVaR loss function of the form $\ell(y, \alpha, \beta)=-\eta h(y,\alpha) + \beta+ \frac{1}{\tau} (-h(y,\alpha) - \beta)^+$, $\gamma = 0$, $\eps \in (0, 1]$ and $\rho > \rho_{\min}(x_0, 0, \eps)$. Suppose in addition that $\mc X = \R^n$, $\mc Y $ is polyhedral, $\DD_{\mc X}(x,\wh x) = \|x-\wh x\|^2$ and $\DD_{\mc Y}(y,\wh y) = \|y-\wh y\|_2^2$. The distributionally robust two-stage stochastic linear program with side information is conservatively approximated by the second-order cone program 
\be \label{eq:appF:CVaRTSLP:reform}
\begin{array}{cll}
\inf & \ds \rho \lambda_1 + \eps \lambda_2 -(1+\eta) c^\top\upsilon+ \frac{1}{N}\sum_{i \in [N]} \theta_i \\
\st & \alpha\in \mc A,\;\beta\in\mc B,\; \lambda_1 \in \R_+,\; \lambda_2 \in \R,\; \theta \in \R_+^N,\; z\in \R_+^N,\; \tilde z\in \R_+^N, \; \upsilon\in\R^m, \; {\Upsilon}\in\R^{m\times m}\\
&\left. 
\!\!\!
\begin{array}{l}
z_i = \theta_i +\lambda_1 \|x_0 - \wh x_i\|^2+ \lambda_2 + \eps^{-1}\eta \wh y_i^\top {\Upsilon}^\top c - \eps^{-1}\beta \\
\tilde{z}_i = \theta_i +\lambda_1 \|x_0 - \wh x_i\|^2+ \lambda_2 + \eps^{-1}(\eta + \frac{1}{\tau}) \wh y_i^\top {\Upsilon}^\top c - \eps^{-1}(1 - \frac{1}{\tau})\beta\\
\left\|
\begin{bmatrix}
\eps^{-1}\eta {\Upsilon}^\top c\\
z_i-\lambda_1
\end{bmatrix}\right\|_2
\leq z_i+\lambda_1, \quad \left\|
\begin{bmatrix}
\eps^{-1}(\eta + \tau^{-1}) {\Upsilon}^\top c\\
\tilde z_i-\lambda_1
\end{bmatrix}\right\|_2
\leq \tilde z_i+\lambda_1
\end{array} \right\} \forall i \in [N]\\
& e_k^\top(A\alpha+B\upsilon) + \delta^*((\Upsilon^\top B^\top - C^\top) e_k|\mathcal{Y})\leq 0 \qquad\forall k\in[K],
\end{array}
\ee
where $e_k$ is the $k$-th column of the identity matrix, and $\delta^*(z|\mathcal{Y})\Let\sup_{y\in\mathcal{Y}} z^\top y$ is a linear programming representable support function of $\mathcal{Y}$.
\end{corollary}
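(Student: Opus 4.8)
The plan is to reduce Corollary~\ref{corol:tslp} to the already-proved Proposition~\ref{prop:mean-cvar-reform} by bounding the recourse value $h(Y,\alpha)$ from below with an affine linear decision rule, and to handle the resulting robust feasibility requirement~\eqref{appF:LDR:constraint} separately through its support function.

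First I would fix an affine recourse policy $y \mapsto \upsilon + \Upsilon y$ and impose~\eqref{appF:LDR:constraint}, so that $\upsilon + \Upsilon y$ is feasible for the recourse problem for every $y \in \mc Y$ and therefore $h(y,\alpha) \ge c^\top(\upsilon + \Upsilon y)$ on $\mc Y$. Since $\eta \ge 0$ and $r(\cdot,\beta)$ is non-increasing for the mean-CVaR choice of $r$, this yields the pointwise bound $\ell(y,\alpha,\beta) \le \tilde\ell(y,\alpha,\beta) \Let -\eta\, c^\top(\upsilon+\Upsilon y) + \beta + \frac1\tau\big(-c^\top(\upsilon+\Upsilon y) - \beta\big)^+$ for $y \in \mc Y$. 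Taking conditional expectations and worst cases over $\mbb B_\rho$ preserves the inequality, so minimizing the worst-case conditional expectation of $\tilde\ell$ over $(\alpha,\beta,\upsilon,\Upsilon)$ subject to~\eqref{appF:LDR:constraint} is a conservative approximation of the contextual two-stage problem.

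Next I would observe that, extended to all of $\R^m$ by the same formula and after renaming $\beta + c^\top\upsilon$ as $\beta$, the function $\tilde\ell(\cdot,\alpha,\beta)$ equals the mean-CVaR loss~\eqref{eq:mean-cvar-loss} with the linear functional $y \mapsto y^\top\alpha$ replaced by $y \mapsto y^\top(\Upsilon^\top c)$, minus the constant $(1+\eta)c^\top\upsilon$. This constant factors out of both the conditional expectation and the inner supremum and therefore lands directly in the objective, which explains the term $-(1+\eta)c^\top\upsilon$ in~\eqref{eq:appF:CVaRTSLP:reform}. Enlarging the support from $\mc Y$ to $\R^m$ in the worst-case conditional expectation is one further conservative relaxation; after it, Proposition~\ref{prop:mean-cvar-reform} applies verbatim with $\alpha$ replaced by $\Upsilon^\top c$ — its derivation uses only that the loss is a pointwise maximum of two affine functions of $y$ and that the support is $\R^m$ — and it produces every constraint of~\eqref{eq:appF:CVaRTSLP:reform} except the last line.

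Finally I would dualize the robust feasibility constraint: row by row,~\eqref{appF:LDR:constraint} reads $e_k^\top(A\alpha + B\upsilon) + \max_{y\in\mc Y}\, e_k^\top(B\Upsilon - C)y \le 0$ for each $k \in [K]$, and the inner maximum is by definition $\delta^*\big((\Upsilon^\top B^\top - C^\top)e_k \mid \mc Y\big)$, which is linear-programming representable since $\mc Y$ is polyhedral; this gives the last line of~\eqref{eq:appF:CVaRTSLP:reform}. The main obstacle will be the bookkeeping that confirms the derivation of Proposition~\ref{prop:mean-cvar-reform} survives the substitution $y^\top\alpha \to c^\top\upsilon + y^\top(\Upsilon^\top c)$ — in particular, tracking the additive constant $c^\top\upsilon$ and the shift of $\beta$ through the fractional-to-linear reduction and the conic-duality steps underlying that proposition so that they surface in the objective rather than inside the second-order cone constraints, and verifying that the Slater-type condition $\rho > \rho_{\min}(x_0,0,\eps)$ still certifies the strong duality invoked there. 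The remaining manipulations are routine.
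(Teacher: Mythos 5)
Your proposal is correct and follows essentially the same route as the paper's proof: bound $h(y,\alpha)$ below by the affine decision rule $c^\top(\upsilon+\Upsilon y)$ under the robust feasibility constraint, use monotonicity of the CVaR-type $r$ to get a conservative pointwise bound, shift $\beta$ by $c^\top\upsilon$ to extract the constant $-(1+\eta)c^\top\upsilon$, relax the support to $\R^m$ so Proposition~\ref{prop:mean-cvar-reform} applies with $\alpha$ replaced by $\Upsilon^\top c$, and express the robust linear constraints via the support function of the polyhedral $\mc Y$. No substantive differences.
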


\begin{proof}[Proof of Corollary~\ref{corol:tslp}]
We start with the following derivations
\begin{align*}
\min_\beta &\Sup{\QQ \in \mbb B_\rho, \QQ(X = x_0) \ge \eps}~ \EE_{\QQ}[ \ell(Y, \alpha, \beta) | X = x_0 ] \\
&\leq \min_\beta  \Sup{\QQ \in \mbb B_\rho, \QQ(X = x_0) \ge \eps}~ \EE_{\QQ}[r(c^\top(\upsilon+{\Upsilon}Y),\beta) - \eta\cdot c^\top(\upsilon+{\Upsilon}Y) | X = x_0 ]\\
&= \min_\beta  \Sup{\QQ \in \mbb B_\rho, \QQ(X = x_0) \ge \eps}~ \EE_{\QQ}[ -\eta c^\top(\upsilon+{\Upsilon}Y) + \beta+ \frac{1}{\tau} (-c^\top(\upsilon+{\Upsilon}Y) - \beta)^+ | X = x_0 ]\\
&= -(1+\eta) c^\top\upsilon + \min_{\beta'}  \Sup{\QQ \in \mbb B_\rho, \QQ(X = x_0) \ge \eps}~ \EE_{\QQ}[ -\eta c^\top{\Upsilon}Y + \beta'+ \frac{1}{\tau} (-c^\top{\Upsilon}Y - \beta')^+ | X = x_0 ]
\end{align*}
where we assume that the pair  $(\upsilon, \Upsilon)$ satisfies the constraint \eqref{appF:LDR:constraint}, 
and where we perform the replacement $\beta'\leftarrow\beta+c^\top\upsilon$. One obtains problem \eqref{eq:appF:CVaRTSLP:reform} after relaxing the support for $\mathbb{Q}$ to $\mc X \times \R^m$ in order to employ the reformulation in Proposition \ref{prop:mean-cvar-reform}, and finally reformulating each robust constraint indexed by $k$ as:
\[
e_k^\top(A\alpha+B\upsilon+(B{\Upsilon}-C)y)\leq 0 \qquad \forall y\in \mc Y
\]
using the definition of the support function. We refer the interested readers to~\cite{ref:ben2015deriving} for examples of linear programming representations of support functions of polyhedral sets.
\end{proof}

\bibliographystyle{siam}
\bibliography{bibliography.bbl}

\begin{thebibliography}{10}

\bibitem{ref:aliprantis06hitchhiker}
{\sc C.~D. Aliprantis and K.~C. Border}, {\em Infinite Dimensional Analysis: A Hitchhiker's Guide}, Springer, 2006.

\bibitem{doi:10.1287}
{\sc A.~Ardestani-Jaafari and E.~Delage}, {\em Linearized robust counterparts of two-stage robust optimization problems with applications in operations management}, INFORMS Journal on Computing, 33 (2021), pp.~1138--1161.

\bibitem{ref:artzner1999coherent}
{\sc P.~Artzner, F.~Delbaen, J.-M. Eber, and D.~Heath}, {\em Coherent measures of risk}, Mathematical Finance, 9 (1999), pp.~203--228.

\bibitem{ref:athey2019generalized}
{\sc S.~Athey, J.~Tibshirani, and S.~Wager}, {\em Generalized random forests}, Annals of Statistics, 47 (2019), pp.~1148--1178.

\bibitem{ref:baker2006investor}
{\sc M.~Baker and J.~Wurgler}, {\em Investor sentiment and the cross-section of stock returns}, The Journal of Finance, 61 (2006), pp.~1645--1680.

\bibitem{ref:ban2019big}
{\sc G.-Y. Ban and C.~Rudin}, {\em The big data newsvendor: Practical insights from machine learning}, Operations Research, 67 (2019), pp.~90--108.

\bibitem{ref:ben2015deriving}
{\sc A.~Ben-Tal, D.~D. Hertog, and J.~Vial}, {\em Deriving robust counterparts of nonlinear uncertain inequalities}, Mathematical Programming, 149 (2015), pp.~265--299.

\bibitem{ref:benati2007mixed}
{\sc S.~Benati and R.~Rizzi}, {\em A mixed integer linear programming formulation of the optimal mean/value-at-risk portfolio problem}, European Journal of Operational Research, 176 (2007), pp.~423--434.

\bibitem{ref:bertsimas2020predictive}
{\sc D.~Bertsimas and N.~Kallus}, {\em From predictive to prescriptive analytics}, Management Science, 66 (2020), pp.~1025--1044.

\bibitem{ref:bertsimas2019predictions}
{\sc D.~Bertsimas and C.~McCord}, {\em From predictions to prescriptions in multistage optimization problems}, arXiv preprint arXiv:1904.11637,  (2019).

\bibitem{ref:bertsimas2019dynamic}
{\sc D.~Bertsimas, C.~McCord, and B.~Sturt}, {\em Dynamic optimization with side information}, European Journal of Operational Research,  (2022).

\bibitem{ref:bertsimas2021bootstrap}
{\sc D.~Bertsimas and B.~V. Parys}, {\em Bootstrap robust prescriptive analytics}, Mathematical Programming,  (2021).

\bibitem{ref:black1972capital}
{\sc F.~Black}, {\em Capital market equilibrium with restricted borrowing}, The Journal of Business, 45 (1972), pp.~444--455.

\bibitem{ref:blanchet2018distributionally}
{\sc J.~Blanchet, L.~Chen, and X.~Y. Zhou}, {\em Distributionally robust mean-variance portfolio selection with {W}asserstein distances}, Management Science, forthcoming,  (2018).

\bibitem{ref:blanchet2016robust}
{\sc J.~Blanchet, Y.~Kang, and K.~Murthy}, {\em Robust {W}asserstein profile inference and applications to machine learning}, Journal of Applied Probability, 56 (2019), pp.~830--857.

\bibitem{ref:blanchet2019quantifying}
{\sc J.~Blanchet and K.~Murthy}, {\em Quantifying distributional model risk via optimal transport}, Mathematics of Operations Research, 44 (2019), pp.~565--600.

\bibitem{ref:blanchet2021statistical}
{\sc J.~Blanchet, K.~Murthy, and V.~A. Nguyen}, {\em Statistical analysis of {W}asserstein distributionally robust estimators}, INFORMS TutORials in Operations Research,  (2021).

\bibitem{ref:blanchet2018optimal}
{\sc J.~Blanchet, K.~Murthy, and F.~Zhang}, {\em Optimal transport based distributionally robust optimization: Structural properties and iterative schemes}, arXiv preprint arXiv:1810.02403,  (2018).

\bibitem{ref:brandt2009parametric}
{\sc M.~W. Brandt, P.~Santa-Clara, and R.~Valkanov}, {\em Parametric portfolio policies: Exploiting characteristics in the cross-section of equity returns}, The Review of Financial Studies, 22 (2009), pp.~3411--3447.

\bibitem{ref:britten1999sampling}
{\sc M.~Britten-Jones}, {\em The sampling error in estimates of mean-variance efficient portfolio weights}, The Journal of Finance, 54 (1999), pp.~655--671.

\bibitem{ref:chen2020robust}
{\sc Z.~Chen, M.~Sim, and P.~Xiong}, {\em Robust stochastic optimization made easy with {RSOME}}, Management Science, 66 (2020), pp.~3329--3339.

\bibitem{ref:Chenreddy2022data}
{\sc A.~R. Chenreddy, N.~Bandi, and E.~Delage}, {\em Data-driven conditional robust optimization}, in Advances in Neural Information Processing Systems, vol.~35, Curran Associates, Inc., 2022, pp.~9525--9537.

\bibitem{ref:chopra1993effect}
{\sc V.~K. Chopra and W.~T. Ziemba}, {\em The effect of errors in means, variances, and covariances on optimal portfolio choice}, Journal of Portfolio Management, 19 (1993), p.~6.

\bibitem{ref:dantzig1957discrete}
{\sc G.~B. Dantzig}, {\em Discrete-variable extremum problems}, Operations Research, 5 (1957), pp.~266--277.

\bibitem{ref:delage2010distributionally}
{\sc E.~Delage and Y.~Ye}, {\em Distributionally robust optimization under moment uncertainty with application to data-driven problems}, Operations Research, 58 (2010), pp.~595--612.

\bibitem{ref:demiguel2009generalized}
{\sc V.~DeMiguel, L.~Garlappi, F.~J. Nogales, and R.~Uppal}, {\em A generalized approach to portfolio optimization: Improving performance by constraining portfolio norms}, Management Science, 55 (2009), pp.~798--812.

\bibitem{ref:demiguel2009optimal}
{\sc V.~DeMiguel, L.~Garlappi, and R.~Uppal}, {\em Optimal versus naive diversification: How inefficient is the 1/n portfolio strategy?}, The Review of Financial Studies, 22 (2009), pp.~1915--1953.

\bibitem{ref:demiguel2020transaction}
{\sc V.~DeMiguel, A.~Mart\'{i}n-Utrera, F.~J. Nogales, and R.~Uppal}, {\em A transaction-cost perspective on the multitude of firm characteristics}, The Review of Financial Studies, 33 (2020), pp.~2180–--2222.

\bibitem{ref:demiguel2013improving}
{\sc V.~DeMiguel, Y.~Plyakha, R.~Uppal, and G.~Vilkov}, {\em Improving portfolio selection using option-implied volatility and skewness}, Journal of Financial and Quantitative Analysis,  (2013), pp.~1813--1845.

\bibitem{denton:ORUU2010}
{\sc B.~T. Denton, A.~J. Miller, H.~J. Balasubramanian, and T.~R. Huschka}, {\em Optimal allocation of surgery blocks to operating rooms under uncertainty}, Operations Research, 58 (2010), pp.~802--816.

\bibitem{ref:dou2019distributionally}
{\sc X.~Dou and M.~Anitescu}, {\em Distributionally robust optimization with correlated data from vector autoregressive processes}, Operations Research Letters, 47 (2019), pp.~294--299.

\bibitem{ref:duchi2018learning}
{\sc J.~Duchi and H.~Namkoong}, {\em Learning models with uniform performance via distributionally robust optimization}, arXiv preprint arXiv:1810.08750,  (2018).

\bibitem{ref:esteban2020distributionally}
{\sc A.~Esteban-P{\'e}rez and J.~M. Morales}, {\em Distributionally robust stochastic programs with side information based on trimmings}, Mathematical Programming,  (2022).

\bibitem{Fabozzi40}
{\sc F.~J. Fabozzi, P.~N. Kolm, D.~A. Pachamanova, and S.~M. Focardi}, {\em Robust portfolio optimization}, The Journal of Portfolio Management, 33 (2007), pp.~40--48.

\bibitem{ref:fama1992cross}
{\sc E.~F. Fama and K.~R. French}, {\em The cross-section of expected stock returns}, The Journal of Finance, 47 (1992), pp.~427--465.

\bibitem{ref:fama1996multifactor}
\leavevmode\vrule height 2pt depth -1.6pt width 23pt, {\em Multifactor explanations of asset pricing anomalies}, The Journal of Finance, 51 (1996), pp.~55--84.

\bibitem{ref:fan1996local}
{\sc J.~Fan and I.~Gijbels}, {\em Local Polynomial Modelling and Its Applications}, no.~66 in Monographs on Statistics and Applied Probability Series, Chapman \& Hall, 1996.

\bibitem{ref:flannery2002macroeconomic}
{\sc M.~J. Flannery and A.~A. Protopapadakis}, {\em Macroeconomic factors do influence aggregate stock returns}, The Review of Financial Studies, 15 (2002), pp.~751--782.

\bibitem{ref:follmer2002convex}
{\sc H.~F{\"o}llmer and A.~Schied}, {\em Convex measures of risk and trading constraints}, Finance and Stochastics, 6 (2002), pp.~429--447.

\bibitem{ref:fournier2015rate}
{\sc N.~Fournier and A.~Guillin}, {\em On the rate of convergence in {W}asserstein distance of the empirical measure}, Probability Theory and Related Fields, 162 (2015), pp.~707--738.

\bibitem{ref:gao2016distributionally}
{\sc R.~Gao and A.~J. Kleywegt}, {\em Distributionally robust stochastic optimization with {W}asserstein distance}, arXiv preprint arXiv:1604.02199,  (2016).

\bibitem{ref:ghaoui2003worst}
{\sc L.~E. Ghaoui, M.~Oks, and F.~Oustry}, {\em Worst-case value-at-risk and robust portfolio optimization: {A} conic programming approach}, Operations Research, 51 (2003), pp.~543--556.

\bibitem{ref:goh2011robust}
{\sc J.~Goh and M.~Sim}, {\em Robust optimization made easy with {ROME}}, Operations Research, 59 (2011), pp.~973--985.

\bibitem{ref:grinblatt1995momentum}
{\sc M.~Grinblatt, S.~Titman, and R.~Wermers}, {\em Momentum investment strategies, portfolio performance, and herding: A study of mutual fund behavior}, The American Economic Review,  (1995), pp.~1088--1105.

\bibitem{ref:ji2020data}
{\sc R.~Ji and M.~A. Lejeune}, {\em Data-driven optimization of reward-risk ratio measures}, INFORMS Journal on Computing,  (2021).

\bibitem{ref:kallus2020stochastic}
{\sc N.~Kallus and X.~Mao}, {\em Stochastic optimization forests}, arXiv preprint arXiv:2008.07473,  (2020).

\bibitem{ref:kannan2020data}
{\sc R.~Kannan, G.~Bayraksan, and J.~R. Luedtke}, {\em Data-driven sample average approximation with covariate information}, Available from Optimization Online,  (2020).

\bibitem{ref:kannan2021hetero}
\leavevmode\vrule height 2pt depth -1.6pt width 23pt, {\em Heteroscedasticity-aware residuals-based contextual stochastic optimization}, arXiv preprint arXiv:2101.03139,  (2020).

\bibitem{ref:kannan2020residuals}
\leavevmode\vrule height 2pt depth -1.6pt width 23pt, {\em Residuals-based distributionally robust optimization with covariate information}, arXiv preprint arXiv:2012.01088,  (2020).

\bibitem{ref:konno1991mean}
{\sc H.~Konno and H.~Yamazaki}, {\em Mean-absolute deviation portfolio optimization model and its applications to {T}okyo stock market}, Management Science, 37 (1991), pp.~519--531.

\bibitem{ref:korte2007}
{\sc B.~Korte and J.~Vygen}, {\em Combinatorial Optimization: Theory and Algorithms}, Springer, 2007.

\bibitem{ref:kozak2020shrinking}
{\sc S.~Kozak, S.~Nagel, and S.~Santosh}, {\em Shrinking the cross-section}, Journal of Financial Economics, 135 (2020), pp.~271--292.

\bibitem{ref:kuhn2019wasserstein}
{\sc D.~Kuhn, P.~Mohajerin~Esfahani, V.~A. Nguyen, and S.~Shafieezadeh-Abadeh}, {\em Wasserstein distributionally robust optimization: Theory and applications in machine learning}, in Operations Research \& Management Science in the Age of Analytics, 2019, pp.~130--166.

\bibitem{li2004cross}
{\sc Q.~Li and J.~Racine}, {\em Cross-validated local linear nonparametric regression}, Statistica Sinica,  (2004), pp.~485--512.

\bibitem{ref:lobo2000worst}
{\sc M.~S. Lobo and S.~Boyd}, {\em The worst-case risk of a portfolio}, Unpublished Manuscript,  (2000).

\bibitem{ref:lobo1998applications}
{\sc M.~S. Lobo, L.~Vandenberghe, S.~Boyd, and H.~Lebret}, {\em Applications of second-order cone programming}, Linear Algebra and Its Applications, 284 (1998), pp.~193--228.

\bibitem{ref:luthi2005convex}
{\sc H.-J. L{\"u}thi and J.~Doege}, {\em Convex risk measures for portfolio optimization and concepts of flexibility}, Mathematical Programming, 104 (2005), pp.~541--559.

\bibitem{ref:markowitz1952portfolio}
{\sc H.~Markowitz}, {\em Portfolio selection}, The Journal of Finance, 7 (1952), pp.~77--91.

\bibitem{ref:esfahani2018data}
{\sc P.~{Mohajerin Esfahani} and D.~Kuhn}, {\em Data-driven distributionally robust optimization using the {W}asserstein metric: Performance guarantees and tractable reformulations}, Mathematical Programming, 171 (2018), pp.~115--166.

\bibitem{mosek}
{\sc {MOSEK ApS}}, {\em MOSEK Optimizer API for Python 9.2.10}, 2019.

\bibitem{ref:natarajan2010tractable}
{\sc K.~Natarajan, M.~Sim, and J.~Uichanco}, {\em Tractable robust expected utility and risk models for portfolio optimization}, Mathematical Finance, 20 (2010), pp.~695--731.

\bibitem{ref:neely2014forecasting}
{\sc C.~J. Neely, D.~E. Rapach, J.~Tu, and G.~Zhou}, {\em Forecasting the equity risk premium: {T}he role of technical indicators}, Management Science, 60 (2014), pp.~1772--1791.

\bibitem{ref:nguyen2020distributionally}
{\sc V.~A. Nguyen, F.~Zhang, J.~Blanchet, E.~Delage, and Y.~Ye}, {\em Distributionally robust local non-parametric conditional estimation}, in Advances in Neural Information Processing Systems 33, 2020.

\bibitem{ref:pastor2003liquidity}
{\sc L.~P{\'a}stor and R.~F. Stambaugh}, {\em Liquidity risk and expected stock returns}, Journal of Political Economy, 111 (2003), pp.~642--685.

\bibitem{ref:pflug20121/N}
{\sc G.~Pflug, A.~Pichler, and D.~Wozabal}, {\em {The $1/N$ investment strategy is optimal under high model ambiguity}}, Journal of Banking \& Finance, 36 (2012), pp.~410--417.

\bibitem{ref:pflug2007ambiguity}
{\sc G.~Pflug and D.~Wozabal}, {\em Ambiguity in portfolio selection}, Quantitative Finance, 7 (2007), pp.~435--442.

\bibitem{ref:polik2007Slemma}
{\sc I.~Pólik and T.~Terlaky}, {\em A survey of the {$S$}-lemma}, SIAM Review, 49 (2007), pp.~371--418.

\bibitem{ref:rahimian2019distributionally}
{\sc H.~Rahimian and S.~Mehrotra}, {\em Distributionally robust optimization: {A} review}, arXiv preprint arXiv:1908.05659,  (2020).

\bibitem{ref:rahimian2019contextual}
{\sc H.~Rahimian and B.~Pagnoncelli}, {\em Contextual chance-constrained programming}, Available on Optimization Online,  (2020).

\bibitem{ref:rapach2016short}
{\sc D.~E. Rapach, M.~C. Ringgenberg, and G.~Zhou}, {\em Short interest and aggregate stock returns}, Journal of Financial Economics, 121 (2016), pp.~46--65.

\bibitem{ref:rockafellar2000optimization}
{\sc R.~T. Rockafellar and S.~Uryasev}, {\em Optimization of {C}onditional {V}alue-at-{R}isk}, Journal of Risk, 2 (2000), pp.~21--41.

\bibitem{SAIF2021995}
{\sc A.~Saif and E.~Delage}, {\em Data-driven distributionally robust capacitated facility location problem}, European Journal of Operational Research, 291 (2021), pp.~995--1007.

\bibitem{ref:schaible1974parameter}
{\sc S.~Schaible}, {\em Parameter-free convex equivalent and dual programs of fractional programming problems}, Zeitschrift f{\"u}r Operations Research, 18 (1974), pp.~187--196.

\bibitem{ref:shapiro2001on}
{\sc A.~Shapiro}, {\em On duality theory of conic linear problems}, in Semi-Infinite Programming, Kluwer Academic Publishers, 2001, pp.~135--165.

\bibitem{ref:sharpe1964capital}
{\sc W.~F. Sharpe}, {\em Capital asset prices: A theory of market equilibrium under conditions of risk}, The Journal of Finance, 19 (1964), pp.~425--442.

\bibitem{ref:si2021testing}
{\sc N.~Si, K.~Murthy, J.~Blanchet, and V.~A. Nguyen}, {\em Testing group fairness via optimal transport projections}, in Proceedings of the 38th International Conference on Machine Learning, 2021.

\bibitem{ref:sion1958minimax}
{\sc M.~Sion}, {\em On general minimax theorems}, Pacific Journal of Mathematics, 8 (1958), pp.~171--176.

\bibitem{ref:smith1995generalized}
{\sc J.~E. Smith}, {\em Generalized {C}hebychev inequalities: {T}heory and applications in decision analysis}, Operations Research, 43 (1995), pp.~807--825.

\bibitem{ref:srivastava2021data}
{\sc P.~R. Srivastava, Y.~Wang, G.~A. Hanasusanto, and C.~P. Ho}, {\em On data-driven prescriptive analytics with side information: A regularized {N}adaraya-{W}atson approach}, arXiv preprint arXiv:2110.04855,  (2021).

\bibitem{10.2307/3648200}
{\sc R.~Uppal and T.~Wang}, {\em Model misspecification and underdiversification}, The Journal of Finance, 58 (2003), pp.~2465--2486.

\bibitem{ref:villani2008optimal}
{\sc C.~Villani}, {\em Optimal Transport: Old and New}, Springer, 2008.

\bibitem{ref:zhao2018data}
{\sc C.~Zhao and Y.~Guan}, {\em Data-driven risk-averse stochastic optimization with {W}asserstein metric}, Operations Research Letters, 46 (2018), pp.~262 -- 267.

\bibitem{ref:zhao2017distributionally}
{\sc Y.~Zhao, Y.~Liu, J.~Zhang, and X.~Yang}, {\em Distributionally robust reward-risk ratio programming with {W}asserstein metric}, Available on Optimization Online,  (2017).

\bibitem{ref:zhu2009worst}
{\sc S.~Zhu and M.~Fukushima}, {\em Worst-case conditional value-at-risk with application to robust portfolio management}, Operations Research, 57 (2009), pp.~1155--1168.

\bibitem{ref:zymler2013distributionally}
{\sc S.~Zymler, D.~Kuhn, and B.~Rustem}, {\em Distributionally robust joint chance constraints with second-order moment information}, Mathematical Programming, 137 (2013), pp.~167--198.

\end{thebibliography}

\end{document}